\def\eps{\ve}
\renewcommand{\epsilon}{\ve}
\def\ve{\varepsilon}
\newcommand{\E}{\mbox{\bf E}}
\newcommand{\pr}[2][]{\mathrm{Pr}\ifthenelse{\not\equal{}{#1}}{_{#1}}{}\!\left[#2\right]}
\providecommand{\poly}{\operatorname*{poly}}
\newtheorem{theorem}{Theorem}
\newtheorem{lemma}[theorem]{Lemma}
\newtheorem{corollary}[theorem]{Corollary}
\newtheorem{definition}[theorem]{Definition}
\numberwithin{theorem}{section}
\numberwithin{nontheorem}{section}
\numberwithin{proposition}{section}
\numberwithin{observation}{section}
\numberwithin{remark}{section}
\numberwithin{fact}{section}
\numberwithin{lemma}{section}
\numberwithin{claim}{section}
\numberwithin{corollary}{section}
\numberwithin{case}{section}
\numberwithin{dfn}{section}
\numberwithin{definition}{section}
\numberwithin{question}{section}
\numberwithin{openquestion}{section}
\numberwithin{res}{section}
\def \cA {{\cal A}}
\def \cC {{\cal C}}
\def \cD {{\cal D}}
\def \cL {{\cal L}}
\def \cS {{\cal S}}
\def \cV {{\cal V}}
\def \cW {{\cal W}}
\def \cX {{\cal X}}
\newcommand{\RR}{\mathbb{R}}
\newcommand{\ab}{k}
\newcommand{\ns}{n}
\newcommand{\dist}{\alpha}
\newcommand{\Zon}{Z^\ns}
\newcommand{\Xon}{X^\ns}
\newcommand{\Yon}{Y^\ns}
\newcommand{\dims}{p}
\newcommand{\posa}{u}
\newcommand{\posb}{v}
\newcommand{\absv}[1]{\left|#1\right|}
\newcommand{\norm}[1]{\left\lVert#1\right\rVert_1}
\newcommand{\norminf}[1]{\left\lVert#1\right\rVert_{\infty}}
\newcommand{\norminfone}[1]{\left\lVert#1\right\rVert_{\infty,1}}
\def \Paren#1{{\left({#1}\right)}}
\newcommand{\probof}[1]{\Pr\Paren{#1}}
\newcommand{\proboff}[2]{\Pr_{#2}\Paren{#1}}
\newcommand{\expectation}[1]{\mathbb{E}\left[#1\right]}
\newcommand{\expectationf}[2]{\mathbb{E}_{#2}\left[#1\right]}
\newcommand{\ham}[2]{d_{ham}(#1,#2)}
\def \Brack#1{{\left[{#1}\right]}}
\begin{document}

\title{Privately Learning Markov Random Fields}
\author{Huanyu Zhang\footnotemark[1] \and Gautam Kamath\footnotemark[2] \footnotemark[5] \and Janardhan Kulkarni\footnotemark[3] \footnotemark[5] \and Zhiwei Steven Wu\footnotemark[4] \footnotemark[5]}
\renewcommand{\thefootnote}{\fnsymbol{footnote}}
\footnotetext[1]{Cornell University. {\tt hz388@cornell.edu}. Supported by NSF \#1815893 and by NSF \#1704443. This work was partially done while the author was an intern at Microsoft Research Redmond.}
\footnotetext[2]{University of Waterloo. {\tt g@csail.mit.edu}. Supported by a University of Waterloo startup grant. Part of this work was done while supported as a Microsoft Research Fellow, as part of the Simons-Berkeley Research Fellowship program, and while visiting Microsoft Research Redmond.}
\footnotetext[3]{Microsoft Research Redmond. {\tt jakul@microsoft.com}.}
\footnotetext[4]{University of Minnesota. {\tt zsw@umn.edu}. Supported in part by the NSF FAI Award  \#1939606, a Google Faculty Research Award, a J.P. Morgan Faculty Award, a Facebook Research Award, and a Mozilla Research Grant.}
\footnotetext[5]{These authors are in alphabetical order.}
\renewcommand{\thefootnote}{\arabic{footnote}}

\maketitle 

\begin{abstract}
  We consider the problem of learning Markov Random Fields (including
  the prototypical example, the Ising model) under the constraint of
  differential privacy.  Our learning goals include both
  \emph{structure learning}, where we try to estimate the underlying
  graph structure of the model, as well as the harder goal of
  \emph{parameter learning}, in which we additionally estimate the
  parameter on each edge.  We provide algorithms and lower bounds for
  both problems under a variety of privacy constraints --
  namely pure, concentrated, and approximate differential privacy.
  While non-privately, both learning goals enjoy roughly the same
  complexity, we show that this is not the case under differential
  privacy.  In particular, only structure learning under approximate
  differential privacy maintains the non-private logarithmic
  dependence on the dimensionality of the data, while a change in
  either the learning goal or the privacy notion would necessitate a
  polynomial dependence. As a result, we show that the privacy
  constraint imposes a strong separation between these two learning
  problems in the high-dimensional data regime.
 \end{abstract}
 
\section{Introduction}
Graphical models are a common structure used to model high-dimensional
data, which find a myriad of applications in diverse research
disciplines, including probability theory, Markov Chain Monte Carlo,
computer vision, theoretical computer science, social network
analysis, game theory, and computational
biology~\cite{LevinPW09,Chatterjee05,Felsenstein04,DaskalakisMR11,GemanG86,Ellison93,MontanariS10}.
While statistical tasks involving general distributions over $\dims$
variables often run into the curse of dimensionality (i.e., an
exponential sample complexity in $\dims$), Markov Random Fields (MRFs)
are a particular family of undirected graphical models which are
parameterized by the ``order'' $t$ of their interactions.  Restricting
the order of interactions allows us to capture most distributions
which may naturally arise, and also avoids this severe dependence on
the dimension (i.e., we often pay an exponential dependence on $t$
instead of $\dims$).  An MRF is defined as follows, see
Section~\ref{sec:preliminaries} for more precise definitions and
notations we will use in this paper.
\begin{definition}
  Let $\ab, t, \dims \in \mathbb{N}$, $G = (V,E)$ be a graph on $\dims$ nodes, and $C_t(G)$ be the set of cliques of size at most $t$ in $G$.
  A \emph{Markov Random Field} with alphabet size $\ab$ and $t$-order interactions is a distribution $\mathcal{D}$ over $[\ab]^p$ such that
  \[
    \Pr_{X \sim \mathcal{D}}[X = x] \propto \exp\left(\sum_{I \in C_t(G)} \psi_I(x) \right),
  \]
  where $\psi_I : [\ab]^{p} \rightarrow \mathbb{R}$ depends only on varables in $I$.
\end{definition}

The case when $\ab = t = 2$ corresponds to the prototypical example of an MRF, the Ising model~\cite{Ising25} (Definition~\ref{def:ising}).
More generally, if $t = 2$, we call the model \emph{pairwise} (Definition~\ref{def:pairwise}), and if $\ab = 2$ but $t$ is unrestricted, we call the model a \emph{binary MRF} (Definition~\ref{def:mrf}). In this paper, we mainly look at these two special cases of MRFs.

Given the wide applicability of these graphical models, there has been
a great deal of work on the problem of graphical model
estimation~\cite{RavikumarWL10, SanthanamW12, Bresler15, VuffrayMLC16,
  KlivansM17, HamiltonKM17, RigolletH17, LokhovVMC18, WuSD19}.  That
is, given a dataset generated from a graphical model, can we infer
properties of the underlying distribution?  Most of the attention has
focused on two learning goals.

\begin{enumerate}
  \item \emph{Structure learning} (Definition~\ref{def:learn-struct}): Recover the set of non-zero edges in $G$.
  \item \emph{Parameter learning} (Definition~\ref{def:learn-params}): Recover the set of non-zero edges in $G$, as well as $\psi_I$ for all cliques $I$ of size at most $t$.
\end{enumerate}

It is clear that structure learning is easier than parameter learning.
Nonetheless, the sample complexity of both learning goals is known to be roughly equivalent.
That is, both can be performed using a number of samples which is only \emph{logarithmic} in the dimension $\dims$ (assuming a model of bounded ``width'' $\lambda$\footnote{This is a common parameterization of the problem, which roughly corresponds to the graph having bounded-degree, see Section~\ref{sec:preliminaries} for more details.}), thus facilitating estimation in very high-dimensional settings.

However, in modern settings of data analysis, we may be running our algorithms on datasets which are sensitive in nature.
For instance, graphical models are often used to model medical and genetic data~\cite{FriedmanLNP00,LagorAFH01} -- if our learning algorithm reveals too much information about individual datapoints used to train the model, this is tantamount to releasing medical records of individuals providing their data, thus violating their privacy.
In order to assuage these concerns, we consider the problem of learning graphical models under the constraint of \emph{differential privacy} (DP)~\cite{DworkMNS06}, considered by many to be the gold standard of data privacy.
Informally, an algorithm is said to be differentially private if its distribution over outputs is insensitive to the addition or removal of a single datapoint from the dataset (a more formal definition is provided in Section~\ref{sec:preliminaries}).
Differential privacy has enjoyed widespread adoption, including deployment in Apple~\cite{AppleDP17}, Google~\cite{ErlingssonPK14}, Microsoft~\cite{DingKY17}, and the US Census Bureau for the 2020 Census~\cite{DajaniLSKRMGDGKKLSSVA17}.

Our goal is to design algorithms which guarantee both: 

\begin{itemize}
  \item Accuracy: With probability greater than $2/3$, the algorithm learns the underlying graphical model;

  \item Privacy: The algorithm satisfies differential privacy, even when the dataset is not drawn from a graphical model. 
\end{itemize}

Thematically, we investigate the following question: how much additional data is needed to learn Markov Random Fields under the constraint of differential privacy?
As mentioned before, absent privacy constraints, the sample complexity is logarithmic in $\dims$.
Can we guarantee privacy with comparable amounts of data?
Or if more data is needed, how much more?

\subsection{Results and Techniques}
\label{sec:results-techniques}
We proceed to describe our results on privately learning Markov Random Fields.
In this section, we will assume familiarity with some of the most common notions of differential privacy: pure $\ve$-differential privacy, $\rho$-zero-concentrated differential privacy, and approximate $(\ve, \delta)$-differential privacy.
In particular, one should know that these are in (strictly) decreasing order of strength (i.e., an algorithm which satisfies pure DP gives more privacy to the dataset than concentrated DP), formal definitions appear in Section~\ref{sec:preliminaries}.
Furthermore, in order to be precise, some of our theorem statements will use notation which is defined later (Section~\ref{sec:preliminaries}) -- these may be skipped on a first reading, as our prose will not require this knowledge.

\paragraph{Upper Bounds.}
Our first upper bounds are for parameter learning.
First, we have the following theorem, which gives an upper bound for parameter learning pairwise graphical models under concentrated differential privacy, showing that this learning goal can be achieved with $O(\sqrt{\dims})$ samples.
In particular, this includes the special case of the Ising model, which corresponds to an alphabet size $k = 2$.
Note that this implies the same result if one relaxes the learning goal to structure learning, or the privacy notion to approximate DP, as these modifications only make the problem easier. Further details are given in Section~\ref{sec:ub-pair}.
\begin{theorem}
  \label{thm:est-ub}
  There exists an efficient $\rho$-zCDP algorithm which learns the parameters of a pairwise graphical model to accuracy $\dist$ with probability at least $2/3$, which requires a sample complexity of
  $$\ns = O\Paren{\frac{ \lambda^2 k^5 \log(\dims k) e^{O(\lambda)}}{\dist^4}+ \frac{\sqrt{\dims} \lambda^2 k^{5.5} \log^2(\dims k)e^{O(\lambda)}}{\sqrt{\rho} \alpha^3}}$$
\end{theorem}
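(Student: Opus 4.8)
The plan is to lift the standard non-private reduction for pairwise models into the private setting, node by node. Recall that for a pairwise graphical model the conditional law of a single coordinate $X_u$ given all the others is a multiclass softmax whose logits are linear in the incident clique potentials $\{\psi_{uv}\}_{v}$; thus recovering, for each node $u$, the coefficient matrices of this conditional generalized linear model (GLM) to $\ell_\infty$-accuracy $\alpha$ recovers every $\psi_{uv}$ to accuracy $\alpha$, which is exactly parameter learning. So I would first state this reduction (the alphabet-$k$, width-$\lambda$ generalization of the Klivans–Meka / Wu–Sanghavi–Dimakis setup), reducing the task to $p$ instances of bounded-$\ell_1$, sparse GLM regression.

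For the non-private engine I would invoke the (alphabet-$k$) \emph{Sparsitron} / multiplicative-weights GLM learner: run on the conditional distribution at node $u$, it returns a coefficient estimate accurate to $\alpha$ in $\ell_\infty$ using $\widetilde{O}\Paren{\lambda^2 k^5 \log(\dims k)\, e^{O(\lambda)}/\dist^4}$ samples. Applied to all $p$ nodes (with a union bound over the $2/3$-type failure events) this already yields the first term of the claimed bound, i.e., the cost one would pay with no privacy constraint at all. The key structural fact I would extract here is that the learner touches the data only through a small number of \emph{aggregate} statistics: the per-round cumulative loss/correlation vectors that drive the multiplicative-weight updates.

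To obtain privacy I would privatize precisely these aggregate statistics. Each is a $\Theta(\dims)$-dimensional vector on which a single example acts by $O(1/\ns)$ per coordinate, so its $\ell_2$-sensitivity is $O(\sqrt{\dims}/\ns)$; releasing it via the Gaussian mechanism calibrated to $\rho$-zero-concentrated DP injects per-coordinate noise of scale $\approx \sqrt{\dims}/(\ns\sqrt{\rho})$, and zCDP composes additively across the learner's $\mathrm{poly}(1/\alpha,\log(\dims k))$ update rounds and across the $p$ nodes. Privacy of the overall procedure is then immediate from Gaussian-mechanism guarantees plus zCDP composition (and holds for worst-case inputs, as required). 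For \emph{accuracy} I would prove a ``noisy-Sparsitron'' statement: the multiplicative-weights recovery guarantee degrades gracefully when its loss vectors are perturbed, so it suffices that the injected noise be dominated by the accuracy the learner needs in its loss estimates, a quantity of order $\mathrm{poly}(\dist,1/k,e^{-\lambda})$. Forcing $\sqrt{\dims}/(\ns\sqrt{\rho})$ below this threshold and folding in the $\sqrt{T}$ overhead from splitting the budget across rounds yields exactly the second term, $\widetilde{O}\Paren{\sqrt{\dims}\,\lambda^2 k^{5.5}\log^2(\dims k)\,e^{O(\lambda)}/(\sqrt{\rho}\,\dist^3)}$, and the final sample complexity is the sum of the two terms.

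The main obstacle is the coupling of privacy and accuracy in the middle step. Two things must be controlled simultaneously: (i) the sensitivity accounting must be tight enough that we pay only $\sqrt{\dims}$ rather than a linear $\dims$ — this forces us to privatize the ``spread-out'' loss vectors (each coordinate of sensitivity $O(1/\ns)$) rather than the dense parameter matrix, and to argue that $\ell_\infty$-recovery survives $\ell_2$-calibrated noise; and (ii) the robustness analysis of multiplicative weights under Gaussian perturbation must be carried out while honestly tracking the $e^{O(\lambda)}$ and $\mathrm{poly}(k)$ blowups coming from the width and the softmax link, and while taking a union bound / composing over all $\dims$ nodes without incurring extra polynomial factors in $\dims$. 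The privacy guarantee itself is routine; essentially all of the difficulty lives in this robust-accuracy-plus-sensitivity argument.
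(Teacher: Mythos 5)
Your reduction (node-wise conditional logistic regression, then the Wu--Sanghavi--Dimakis style lemma converting small excess logistic risk into $\ell_\infty$ parameter accuracy via $\delta$-unbiasedness) is the same as the paper's, and your first, non-private term is fine. But your privatization engine is not the paper's, and the step you yourself flag as the ``main obstacle'' is exactly where the argument breaks. The paper privatizes each node's $\ell_1$-constrained logistic regression with the private Frank--Wolfe method of Talwar--Thakurta--Zhang: in each iteration the data are touched only through the \emph{scalar} scores $\langle s, \nabla \cL(w;D)\rangle$ at the $2\dims$ vertices of the $\ell_1$-ball, whose sensitivity $O(L_1 \norm{\cC}/\ns)$ is dimension-free, and a noisy minimum is selected; hence each per-node regression costs only $\log \dims$ in utility. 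The single $\sqrt{\dims}$ factor in the theorem then comes from one place only: the zCDP budget must be split as $\rho' = \rho/(\ab^2 \dims)$ across the node-pair subproblems (the same samples feed every node's regression, so sequential composition is unavoidable), and the Frank--Wolfe excess risk scales as $1/(\ns\sqrt{\rho'})^{2/3}$.

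Your accounting pays $\sqrt{\dims}$ twice, and so does not give the stated bound. Gaussian-perturbing a $\Theta(\dims)$-dimensional multiplicative-weights loss vector with $\ell_2$-sensitivity $O(\sqrt{\dims}/\ns)$ at per-release budget $\rho_0$ gives per-coordinate noise $\Theta\Paren{\sqrt{\dims/\rho_0}/\ns}$; composing over the $T$ rounds \emph{and} over the $\dims$ nodes forces $\rho_0 = \rho/(T\dims)$, so the noise is $\Theta\Paren{\dims\sqrt{T/\rho}/\ns}$ --- linear in $\dims$. Requiring this to lie below the $\poly(\dist, 1/\ab, e^{-\lambda})$ robustness threshold of the MW learner yields $\ns = \Omega(\dims)$, not $\widetilde O(\sqrt{\dims})$. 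Indeed, your final derivation folds in the $\sqrt{T}$ overhead but silently drops the $\sqrt{\dims}$ overhead from cross-node composition; keeping it, your second term is linear in $\dims$. Note also that your proposed remedy (privatize the ``spread-out'' loss vectors rather than the parameter matrix) does not help: any dense $\dims$-dimensional Gaussian release pays $\sqrt{\dims}$ per release regardless of what the vector represents. To get $\sqrt{\dims}$ overall one needs a per-node mechanism with only polylogarithmic dimension dependence --- noisy scalar vertex selection in the $\ell_1$ geometry, i.e., private Frank--Wolfe, which is precisely the paper's route. (A minor further discrepancy: for alphabet size $\ab$ the paper does not privatize a multiclass softmax; it conditions on label pairs $Z_i \in \{\posa,\posb\}$ to get binary logistic regressions, centers the solutions, and averages over $\posb$, and the extra $\poly(\ab)\, e^{O(\lambda)}$ factors come from needing, via $\delta$-unbiasedness and Hoeffding, enough samples in each restricted set $S_{\posa,\posb}$.)
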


This result can be seen as a private adaptation of the elegant work of~\cite{WuSD19} (which in turn builds on the structural results of~\cite{KlivansM17}).
Wu, Sanghavi, and Dimakis~\cite{WuSD19} show that $\ell_1$-constrained logistic regression suffices to learn the parameters of all pairwise graphical models. 
We first develop a private analog of this method, based on the private Franke-Wolfe method of Talwar, Thakurta, and Zhang~\cite{TalwarTZ14,TalwarTZ15}, which is of independent interest. This method is studied in Section~\ref{sec:PFW}.
\begin{theorem}
  \label{thm:log-reg}
If we consider the problem of private sparse logistic regression,
there exists an efficient $\rho$-zCDP algorithm that produces a parameter vector $w^{priv}$, such that with probability at least $1-\beta$, the empirical risk
\[ 
\cL(w^{priv}; D) - \cL(w^{erm}; D) = O\Paren{ \frac{\lambda^{\frac{4}{3}}\log(\frac{ \ns\dims}{\beta}) } {(\ns \sqrt{\rho})^{\frac{2}{3}}}}.
\]
\end{theorem}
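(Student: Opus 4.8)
The plan is to instantiate the private Frank--Wolfe framework of Talwar, Thakurta, and Zhang over the $\ell_1$ ball of radius $\lambda$, which is exactly the feasible region for $\ell_1$-constrained logistic regression. Recall that Frank--Wolfe minimizes the convex, smooth empirical logistic loss $\cL(\cdot;D)$ by repeating, for $t = 1,\dots,T$: compute the gradient $g_t = \nabla \cL(w_t;D)$, select the vertex $s_t$ of the constraint polytope minimizing the linear form $\langle g_t, s\rangle$, and take the convex step $w_{t+1} = (1-\mu_t)w_t + \mu_t s_t$ with $\mu_t = 2/(t+2)$. The only place the data enters is the choice of $s_t$, and the vertices of the $\ell_1$ ball are the $2\dims$ signed scaled basis vectors $\pm\lambda e_j$. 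The private algorithm therefore replaces the exact minimization by a private selection: run the exponential mechanism over the $2\dims$ candidate vertices with score $-\langle g_t, s\rangle$.

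For privacy, I would calibrate each selection step to be $\rho_0$-zCDP and then invoke zCDP composition across the $T$ rounds, setting $\rho_0 = \rho/T$ so that the whole algorithm is $\rho$-zCDP; this holds for arbitrary inputs, giving the required worst-case privacy guarantee. The key quantity is the sensitivity of the scores: since the per-example logistic gradient has bounded $\ell_\infty$ norm (the features are bounded), each score $\langle g_t, \pm\lambda e_j\rangle$ changes by at most $O(\lambda/\ns)$ when a single datapoint is replaced, so the exponential mechanism with privacy parameter $\ve_0 = \Theta(\sqrt{\rho_0})$ is well defined. Its standard utility guarantee then ensures that, except with probability $\beta/T$, the selected vertex has score within $\gamma_t = O\Paren{\frac{\lambda \log(\dims T/\beta)}{\ns\sqrt{\rho_0}}} = O\Paren{\frac{\lambda \sqrt{T}\log(\dims T/\beta)}{\ns\sqrt{\rho}}}$ of optimal; a union bound over the $T$ rounds controls all of them simultaneously.

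It remains to feed these per-step selection errors into the approximate Frank--Wolfe convergence analysis. Since each step solves the linear subproblem only up to additive error $\gamma_t$, the standard curvature argument yields an excess empirical risk bound of the form $\cL(w_{T};D)-\cL(w^{erm};D) = O\Paren{\Gamma/T} + O\Paren{\max_t \gamma_t}$, where $\Gamma$ is the curvature constant of $\cL$ over the feasible set. For the logistic loss the Hessian is bounded by $1/4$ and the $\ell_1$-diameter of the ball is $2\lambda$, so $\Gamma = O(\lambda^2)$. Substituting $\gamma_t$ gives $O\Paren{\lambda^2/T} + O\Paren{\frac{\lambda\sqrt{T}\log(\dims T/\beta)}{\ns\sqrt{\rho}}}$, and optimizing the trade-off by choosing $T = \Theta\Paren{(\lambda \ns\sqrt{\rho}/\log)^{2/3}}$ balances the two terms at $O\Paren{\frac{\lambda^{4/3}\log(\ns\dims/\beta)}{(\ns\sqrt{\rho})^{2/3}}}$, matching the claim (the first-power logarithm in the statement absorbs the $2/3$ power that the calculation actually produces).

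I expect the main obstacle to be the utility analysis of the noisy Frank--Wolfe iteration rather than the privacy accounting: one must verify that the accumulated selection errors enter the convergence bound only additively (weighted by the step sizes $\mu_t$, which telescope favorably) and do not compound across iterations, and one must track the curvature constant with the correct norm so that the diameter contributes the $\lambda^2$ that produces the final $\lambda^{4/3}$ scaling after optimizing $T$.
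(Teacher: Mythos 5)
Your proposal is correct and takes essentially the same route as the paper: the paper likewise instantiates the private Frank--Wolfe method of Talwar--Thakurta--Zhang over the $\ell_1$ ball of radius $\lambda$ (whose $2\dims$ vertices are $\pm\lambda e_j$), bounds the Lipschitz constant by $O(1)$ and the curvature constant by $\Gamma_\ell \le \lambda^2$ for the logistic loss, accounts privacy per iteration and composes via zCDP with $\rho/T$ per round, and chooses $T = \Theta\bigl(\lambda^{2/3}(\ns\sqrt{\rho})^{2/3}\bigr)$ to balance the two error terms. The only (immaterial) differences are that the paper's vertex selection adds Laplace noise to the scores and picks the minimizer rather than using the exponential mechanism, and it cites Theorem 5.5 of Talwar--Thakurta--Zhang for the approximate Frank--Wolfe utility analysis that you re-derive.
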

We note that Theorem~\ref{thm:log-reg} avoids a polynomial dependence on the dimension $\dims$ in favor of a polynomial dependence on the ``sparsity'' parameter $\lambda$.
The greater dependence on $\dims$ which arises in Theorem~\ref{thm:est-ub} is from applying Theorem~\ref{thm:log-reg} and then using composition properties of concentrated DP.

We go on to generalize the results of~\cite{WuSD19}, showing that $\ell_1$-constrained logistic regression can also learn the parameters of binary $t$-wise MRFs.
This result is novel even in the non-private setting. Further details are presented in Section~\ref{sec:bin-mrf}.


The following theorem shows that we can learn the parameters of binary $t$-wise MRFs with $\tilde O(\sqrt{\dims})$ samples.
\begin{theorem}
Let $\cD$ be an unknown binary $t$-wise MRF with associated polynomial $h$. Then there exists an $\rho$-zCDP algorithm which, with probability at least $2/3$, learns the maximal monomials of $h$ to accuracy $\dist$, given $\ns$ i.i.d.\ samples $Z^1,\cdots, Z^{\ns} \sim \cD$, where
$$\ns =O\Paren{ \frac{ e^{5\lambda t} \sqrt{\dims} \log^2(\dims) }{\sqrt{\rho} \dist^{\frac{9}{2}}}+ \frac{  t \lambda^2 \sqrt{\dims} \log{\dims}}{\sqrt{\rho}\dist^2} +  \frac{e^{6\lambda t} \log(\dims)}{\dist^6} } . $$
\end{theorem}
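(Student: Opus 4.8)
The plan is to adapt the node-wise $\ell_1$-constrained logistic regression strategy — which \cite{WuSD19} used for pairwise models and which we extend here to binary $t$-wise MRFs — and to privatize each regression using the private Frank-Wolfe guarantee of Theorem~\ref{thm:log-reg}. The starting point is a structural fact: for a binary MRF with associated multilinear polynomial $h$, the conditional law of a single coordinate $X_u$ given the remaining coordinates $X_{-u}$ is logistic, with log-odds equal to a multilinear polynomial in $X_{-u}$ of degree at most $t-1$ whose coefficients are exactly (up to a factor of two) the coefficients of the monomials of $h$ that contain $u$. Crucially, the $\ell_1$ norm of this coefficient vector is controlled by the width $\lambda$. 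Hence, taking the $\binom{\dims-1}{\le t-1} = O(\dims^{t-1})$ low-degree monomials over $X_{-u}$ as features and regressing $X_u$ against them, with the $\ell_1$ ball of radius $O(\lambda)$ as the constraint set, recovers the monomials of $h$ incident to $u$.

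The algorithm runs this logistic regression once for each of the $\dims$ coordinates. I would split the privacy budget evenly, allocating $\rho/\dims$ to each invocation of Theorem~\ref{thm:log-reg}; since zCDP composes by summing the $\rho$ parameters, the overall procedure is $\rho$-zCDP. Each invocation returns a vector $w^{priv}_u$ whose empirical excess risk over the $\ell_1$ ball is $O\!\Paren{\lambda^{4/3}\log(\ns \dims^{t}/\beta)/(\ns\sqrt{\rho/\dims})^{2/3}}$; note the substitution $\rho \mapsto \rho/\dims$ is precisely what converts the $(\ns\sqrt{\rho})^{-2/3}$ of Theorem~\ref{thm:log-reg} into the $\sqrt{\dims}$-type dependence appearing in the sample complexity.

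Two further steps convert this empirical guarantee into a parameter-recovery guarantee. First, a uniform-convergence argument over the $\ell_1$-bounded logistic class — whose Rademacher complexity scales as $\lambda\sqrt{\log(\#\text{features})/\ns}$ — passes from empirical to population excess risk; this is the source of the purely statistical, privacy-free term $e^{6\lambda t}\log(\dims)/\dist^6$. Second, and this is the technical heart, I must show that small population excess risk in the logistic objective forces the estimated coefficient vector to be close to the truth, so that thresholding recovers the maximal monomials to accuracy $\dist$. This follows from a restricted-strong-convexity property of the logistic loss around the optimum: because the sigmoid's curvature degrades exponentially as the log-odds grows, and the log-odds is bounded in magnitude by $O(\lambda t)$, the convexity constant — and hence the translation from excess risk to squared parameter error — carries a factor of $e^{O(\lambda t)}$, which is the origin of the $e^{5\lambda t}$ and $e^{6\lambda t}$ factors.

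The main obstacle is precisely this last structural translation in the $t$-wise regime. For $t=2$ one may reason directly about pairwise marginals, but for general $t$ one must argue about the full family of degree-$(t-1)$ monomial features simultaneously, controlling how the exponentially small curvature interacts with the $O(\dims^{t-1})$-dimensional feature space and the $\ell_1$ constraint. I expect that establishing the restricted-strong-convexity constant with the correct $e^{O(\lambda t)}$ and $\dist$ dependence — rather than a cruder bound that would blow up the sample complexity — is where the bulk of the effort lies; the privacy accounting and the composition over the $\dims$ coordinates are comparatively routine once Theorem~\ref{thm:log-reg} is in hand.
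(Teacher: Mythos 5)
There is a genuine gap at what you yourself identify as the ``technical heart.'' For $t$-wise MRFs with $t \ge 3$, low population excess risk in the node-wise logistic objective does \emph{not} imply that the learned coefficient vector is entrywise close to the true one, and no restricted-strong-convexity argument with only $e^{O(\lambda t)}$ degradation is known (or used in the paper). The available translation, Lemma 6.4 of~\cite{KlivansM17}, comes in two forms: an $\ell_1$ coefficient-recovery bound that carries a $(2t)^{t}\binom{\dims}{t}$ blowup --- this is exactly what the paper's $\ell_1$ theorem (Section~\ref{sec:MRFl_1}) uses, and it yields sample complexity polynomial in $\dims^{t}$, not $\sqrt{\dims}$ --- and a weaker, probabilistic statement: for a \emph{maximal} monomial $I$, the random variable $\partial_I v_i(X)$ (the derivative of the learned polynomial evaluated at a fresh sample $X$) is close to $\bar h(I \cup \{i\})$ with high probability over $X$. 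The individual coefficients $\bar v_i(I)$ are simply not identified by the logistic risk; only this derivative statistic is. So ``thresholding the learned coefficients'' does not recover the maximal monomials, and your claimed excess-risk-to-$\ell_\infty$ translation is unsupported --- it is precisely the step that would need a new idea.

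This is why the paper's algorithm (Algorithm~\ref{alg:DPMRF-infty}) has a second stage that your proposal is missing entirely. Since the usable guarantee concerns $\partial_I v_i(X)$, the natural estimator of $\bar h(I\cup\{i\})$ is $\expectation{\partial_I v_i(X)}$, which is a linear combination of parity expectations $\expectation{\prod_{j\in I'} X_j}$ with $\absv{I'}\le t$. These must themselves be estimated \emph{privately} on a held-out portion of the data ($n_2$ samples, separate from the $n_1$ regression samples), and since there are $\dims^{\Theta(t)}$ such parities, per-query noise addition is too costly; the paper invokes a private query-release mechanism (PMW~\cite{HardtR10}, or sepFEM~\cite{neworacle} for oracle efficiency). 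This stage is exactly the source of the middle term $t\lambda^2\sqrt{\dims}\log(\dims)/(\sqrt{\rho}\dist^2)$ in the stated sample complexity --- a term your accounting cannot produce, since in your plan all three terms must come from the $\dims$ composed Frank--Wolfe regressions and their generalization error. Your privacy accounting, budget splitting $\rho/\dims$, and identification of the first and third terms are consistent with the paper's, but without the derivative-plus-private-parity-release mechanism the proof does not go through at the claimed $\sqrt{\dims}$ rate.
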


To obtain the rate above, our algorithm uses the Private
  Multiplicative Weights (PMW) method by \cite{HardtR10} to estimate
  all parity queries of all orders no more than $t$. The PMW method
  runs in time exponential in $p$, since it maintains a distribution
  over the data domain. We can also obtain an \emph{oracle-efficient}
  algorithm that runs in polynomial time when given access to an
  empirical risk minimization oracle over the class of parities. By
  replacing PMW with such an oracle-efficient algorithm \textsc{sepFEM} in
  \cite{neworacle}, we obtain a slightly worse sample complexity
$$\ns =O\Paren{ \frac{ e^{5\lambda t} \sqrt{\dims} \log^2(\dims) }{\sqrt{\rho} \dist^{\frac{9}{2}}}+ \frac{  t \lambda^2 {\dims^{5/4}} \log{\dims}}{\sqrt{\rho}\dist^2} +  \frac{e^{6\lambda t} \log(\dims)}{\dist^6} } . $$

For the special case of structure learning under approximate differential privacy, we provide a significantly better algorithm.
In particular, we can achieve an $O(\log \dims)$ sample complexity, which improves exponentially on the above algorithm's sample complexity of $O(\sqrt{\dims})$.
The following is a representative theorem statement for pairwise graphical models, though we derive similar statements for binary MRFs of higher order.
\begin{theorem}
  \label{thm:struct-ub}
    There exists an efficient $(\varepsilon, \delta)$-differentially private algorithm which, with probability at least $2/3$, learns the structure of a pairwise graphical model, which requires a sample complexity of $$n = O\left(\frac{\lambda^2 \ab^4 \exp(14\lambda) \log(\dims \ab)\log(1/\delta)}{\varepsilon\eta^4}\right).$$
\end{theorem}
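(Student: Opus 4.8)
The plan is to decouple the problem into a non-private, logarithmic-sample structure learner and a stability-based privacy wrapper, exploiting the fact that the structure is a \emph{discrete} object that is \emph{robustly} determined by the data. This is exactly the regime --- approximate DP together with a combinatorial output --- in which the propose-test-release / distance-to-instability framework lets us pay for privacy only through a single low-sensitivity test, so that the logarithmic dependence on $\dims$ survives. I would first invoke the non-private structural characterization underlying the algorithms of Klivans--Meka and Wu--Sanghavi--Dimakis: for every pair of nodes $(u,v)$ there is a statistic $\gamma_{uv}$, expressible as the empirical average over the $\ns$ samples of a bounded function of a single datapoint, together with a threshold $\tau$ such that $\gamma_{uv} \ge \tau + g$ whenever $(u,v)$ is an edge of weight at least $\eta$, and $\gamma_{uv} \le \tau - g$ whenever $(u,v)$ is a non-edge, for a gap $g = \exp(-O(\lambda))\,\eta^2/\poly(\ab)$.

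Next I would bound the two sources of error. Because each $\gamma_{uv}$ is an average of a bounded quantity, a Hoeffding bound and a union bound over the $\binom{\dims}{2}$ pairs show that all empirical statistics $\hat\gamma_{uv}$ are within $g/2$ of their population values once $\ns = \Omega\Paren{\log(\dims \ab)/g^2}$; this is where the $\log(\dims\ab)$, the $\exp(O(\lambda))$, the $\ab^4$, and the $\eta^4$ factors enter. On this ``good'' event the thresholded graph $\hat G = \{(u,v): \hat\gamma_{uv} > \tau\}$ is exactly the true structure, so the non-private part is correct with logarithmic sample complexity.

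For privacy I would wrap the threshold rule in a distance-to-instability test. Since changing one datapoint moves each $\hat\gamma_{uv}$ by at most $\Delta = O(1/\ns)$, the quantity $\hat d = \min_{u,v} \lfloor |\hat\gamma_{uv} - \tau|/\Delta \rfloor$ is an efficiently computable lower bound on the number of datapoints one must change to alter $\hat G$, and it has sensitivity $1$. The mechanism releases $\hat G$ if $\hat d + \mathrm{Lap}(1/\varepsilon) > c\log(1/\delta)/\varepsilon$ and outputs $\bot$ otherwise. This is $(\varepsilon,\delta)$-DP by the standard argument --- crucially, privacy holds for \emph{every} input dataset, not only those drawn from an MRF, because the test only ever releases an output that is locally constant on the neighborhood. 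Accuracy requires that, on the good event, $\hat d$ exceed the noisy threshold: since every statistic is $g/2$-far from $\tau$ we have $\hat d = \Omega(\ns g)$, which beats $c\log(1/\delta)/\varepsilon$ plus the Laplace fluctuation once $\ns = \Omega\Paren{\log(1/\delta)/(\varepsilon g)}$. Combining the two requirements on $\ns$ yields the claimed bound.

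The main obstacle I anticipate is the first ingredient: pinning down the statistic $\gamma_{uv}$ with an \emph{explicit} gap $g$ of the stated order in $\lambda$, $\ab$, and $\eta$, while simultaneously keeping its per-sample sensitivity $O(1/\ns)$ so that the distance-to-instability is genuinely $\Omega(\ns g)$. The privacy analysis is then routine, but I must be careful that the two $\ns$-requirements combine into a single clean bound and that the union bound over $\Theta(\dims^2)$ pairs does not leak into the privacy budget --- it enters only through accuracy, which is precisely what preserves the $\log\dims$ dependence and separates this case from the $\sqrt{\dims}$ bounds for parameter learning. A secondary check is that the same template extends to higher-order binary MRFs, by replacing the pairwise statistics with the analogous clique-wise quantities.
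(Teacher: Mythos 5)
Your privacy wrapper (the distance-to-instability/PTR test with a sensitivity-$1$ lower bound $\hat d$) is sound, but the proposal hinges on an ingredient that does not exist in the literature and that you never supply: a per-pair statistic $\gamma_{uv}$ that is an \emph{empirical average of a bounded function of a single datapoint} and that separates edges from non-edges by a gap $g = \exp(-O(\lambda))\,\eta^2/\poly(\ab)$. This is not how the non-private algorithms of Klivans--Meka and Wu--Sanghavi--Dimakis work, and for good reason: in width-bounded pairwise models, simple moment statistics do not determine the structure. For instance, on three nodes with edges $(1,2)$, $(2,3)$, $(1,3)$, one can choose the weight on $(1,3)$ to exactly cancel the correlation induced through node $2$, so that the edge $(1,3)$ is present while the marginal correlation of $Z_1$ and $Z_3$ vanishes; conversely a path $1$--$2$--$3$ produces nonzero correlation across the non-edge $(1,3)$. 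This is precisely why the non-private learners solve an optimization problem per node ($\ell_1$-constrained logistic regression in \cite{WuSD19}, the Sparsitron in Klivans--Meka) and then threshold the \emph{learned parameters}. Those learned parameters are not empirical averages, and their per-sample sensitivity is not $O(1/\ns)$ -- indeed, if ERM outputs had such low sensitivity, the paper's parameter-learning upper bounds would not need the private Frank--Wolfe machinery at all. So your ``main obstacle'' is not a technicality to be pinned down later; it is the entire difficulty of the problem, and as stated your step 1 is false for the statistics you would most naturally use.

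The paper's proof sidesteps this completely by treating the non-private learner as a \emph{black box}: since structure learning succeeds \emph{exactly} (the output is a single fixed value, the true graph, with probability at least $2/3$), one can split the data into $O(\log(1/\delta)/\eps)$ batches, run the non-private algorithm of \cite{WuSD19} on each, and privately release the mode of the outputs via stability-based histograms (Proposition 3.4 of \cite{Vadhan17}); privacy of the mode-release step holds for worst-case data, and accuracy follows because the true graph is the plurality value with overwhelming probability. This costs a multiplicative $O(\log(1/\delta)/\eps)$ over the non-private sample complexity $O(\lambda^2 \ab^4 e^{14\lambda}\log(\dims\ab)/\eta^4)$, giving exactly the stated bound, and it needs no structural assumption on \emph{how} the non-private algorithm computes its answer. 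If you want to salvage your white-box PTR route, you would have to either (a) prove the existence of gap-separated low-sensitivity statistics (a new structural result, not known), or (b) apply the instability test to the black-box output itself across subsamples -- which collapses back into the subsample-and-aggregate/mode argument the paper actually uses.
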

This result can be derived using stability properties of non-private algorithms.
In particular, in the non-private setting, the guarantees of algorithms for this problem recover the entire graph \emph{exactly} with constant probability.
This allows us to derive private algorithms at a multiplicative cost of $O(\log(1/\delta)/\varepsilon)$ samples, using either the propose-test-release framework~\cite{DworkL09} or stability-based histograms~\cite{KorolovaKMN09, BunNSV15}.
Further details are given in Section~\ref{sec:struct-ub}.

\paragraph{Lower Bounds.}

We note the significant gap between the aforementioned upper bounds: in particular, our more generally applicable upper bound (Theorem~\ref{thm:est-ub}) has a $O(\sqrt{\dims})$ dependence on the dimension, whereas the best known lower bound is $\Omega(\log \dims)$~\cite{SanthanamW12}.
However, we show that our upper bound is tight.
That is, even if we relax the privacy notion to approximate differential privacy, \emph{or} relax the learning goal to structure learning, the sample complexity is still $\Omega(\sqrt{\dims})$.
Perhaps surprisingly, if we perform both relaxations simultaneously, this falls into the purview of Theorem~\ref{thm:struct-ub}, and the sample complexity drops to $O(\log \dims)$.

First, we show that even under approximate differential privacy, learning the parameters of a graphical model requires $\Omega(\sqrt{\dims})$ samples. The formal statement is given in Section~\ref{sec:est-lb}.
\begin{theorem}[Informal]
  Any algorithm which satisfies approximate differential privacy and learns the parameters of a pairwise graphical model with probability at least $2/3$ requires $\poly(\dims)$ samples.
\end{theorem}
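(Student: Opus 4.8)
The plan is to prove the lower bound by reduction to a problem whose private sample complexity is already pinned down by the fingerprinting method: estimating the mean of a product distribution over the hypercube. I would restrict attention to the sub-family of pairwise graphical models in which every edge potential vanishes, keeping only the singleton (external-field) potentials. Taking alphabet size $\ab=2$, such a model is exactly a product distribution over $\{\pm 1\}^{\dims}$ whose $j$-th coordinate has mean $\mu_j = \tanh(\theta_j)$, where $\theta_j$ is the external field on node $j$; these are legitimate width-$\lambda$ pairwise models precisely when $|\theta_j|\le\lambda$, so taking $|\theta_j|\le 1/2$ keeps us safely inside the width budget. On this sub-family, \emph{parameter learning} is forced to output estimates $\hat\theta_j$ with $|\hat\theta_j-\theta_j|\le\alpha$, and since $\theta\mapsto\tanh\theta$ is bi-Lipschitz on $[-\tfrac12,\tfrac12]$, this is equivalent up to constants to estimating the coordinate means $\mu_j$ to $\ell_\infty$ accuracy $\Theta(\alpha)$. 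Because binary product distributions are themselves pairwise graphical models, any lower bound for this sub-family is immediately a lower bound for the whole class.

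Second, I would invoke the fingerprinting lower bound for private mean estimation of product distributions over $\{\pm1\}^{\dims}$, which originates in the fingerprinting-code framework~\cite{BunUV14}. The argument samples a hidden mean vector $\mu$ coordinatewise from a suitable prior supported on a small interval, draws $Z^1,\dots,Z^{\ns}$ i.i.d.\ from the corresponding product distribution, and applies the fingerprinting lemma: for any estimator $\hat\mu$ that is accurate to a small constant in each coordinate, the aggregate correlation $\sum_{i,j}(\hat\mu_j-\mu_j)(Z^i_j-\mu_j)$ is $\Omega(\dims)$ in expectation. Hence the average per-sample correlation is $\Omega(\dims/\ns)$, so some input point is too strongly coupled to the output for an $(\ve,\delta)$-differentially private mechanism to tolerate: the resulting tracing attack distinguishes an in-sample point from a fresh one, contradicting privacy unless $\ns=\Omega(\sqrt{\dims})$ (up to the usual $\sqrt{\log(1/\delta)}/\ve$ factors). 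Composing with the reduction above yields $\ns=\Omega(\sqrt{\dims})$, hence $\poly(\dims)$, for parameter learning under approximate DP.

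The fingerprinting machinery here is essentially off-the-shelf, so the main work is making the reduction faithful rather than inventing a new hardness argument. I need to verify three things: that the hard instances drawn from the fingerprinting prior have external fields inside the width constraint $\lambda$ (ensured by supporting the prior on $[-\tfrac12,\tfrac12]$ and absorbing constants); that the target accuracy $\alpha$ in the parameter metric translates, through the bi-Lipschitz map $\tanh$, into an accuracy in the mean metric small enough to satisfy the hypothesis of the fingerprinting lemma; and that privacy is inherited by the reduction, which is immediate because the reduction only relabels output coordinates and never touches the data. I would also note that the stipulation that privacy hold even on non-graphical-model datasets only helps, since the fingerprinting adversary attacks the privacy guarantee directly and never needs its input to be genuine graphical-model samples. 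The only delicate bookkeeping is tracking how $\lambda$, $\alpha$, $\ve$, and $\delta$ propagate through the reparameterization to recover the exact polynomial rate claimed in the formal statement.
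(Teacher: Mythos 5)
Your reduction has a genuine gap, and it stems from the paper's definition of parameter learning (Definition~\ref{def:learn-params}): the algorithm is only required to output edge-weight estimates $\hat A$ (resp.\ $\hat{\cW}$) satisfying $\max_{i,j}|A_{i,j}-\hat A_{i,j}|\le\alpha$; it is \emph{not} required to estimate the mean-field vector $\theta$. Your hard family consists of models in which every edge potential vanishes and all the information is carried by the external fields $\theta_j$. But then every member of your family has the \emph{same} edge parameters, namely $A\equiv 0$, so the trivial algorithm that ignores the data and outputs $\hat A = 0$ learns the parameters of every model in your sub-family exactly, with zero samples and perfect privacy. Your claim that ``parameter learning is forced to output estimates $\hat\theta_j$ with $|\hat\theta_j-\theta_j|\le\alpha$'' is where the argument breaks: nothing in the learning goal forces any estimate of $\theta$ at all, so no lower bound on mean estimation transfers.

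The paper's proof repairs exactly this issue by encoding the product distribution into the \emph{edge} weights rather than the node potentials. It takes a perfect matching with $A_{2i-1,2i}=\eta_i\in[-\ln 2,\ln 2]$ and zero external field, and observes that the derived variables $X_i = Z_{2i-1}Z_{2i}$ form a product distribution over $\{\pm 1\}^{\dims/2}$ with coordinate means $(e^{\eta_i}-1)/(e^{\eta_i}+1)\in[-1/3,1/3]$. Since the means are (up to a constant-factor Lipschitz reparameterization, analogous to your $\tanh$ step) in bijection with the edge weights, any private algorithm estimating the $A_{i,j}$ to accuracy $\alpha$ yields a private mean estimator for this product distribution with $\ell_2^2$ error $O(\dims\alpha^2)$, and the fingerprinting-based lower bound (Lemma 6.2 of~\cite{KamathLSU19}, stated as Lemma~\ref{prod-low}) then gives $\ns=\Omega\Paren{\sqrt{\dims}/(\alpha\eps)}$. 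Your second and third paragraphs (the fingerprinting machinery, the width bookkeeping, and the observation that privacy must hold on worst-case data) are fine and mirror the paper; the fix you need is to move the hidden means from the singleton potentials into matched-pair edge weights, after which your argument becomes essentially the paper's.
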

This result is proved by constructing a family of instances of binary pairwise graphical models (i.e., Ising models) which encode product distributions.
Specifically, we consider the set of graphs formed by a perfect matching with edges $(2i, 2i+1)$ for $i \in [\dims/2]$.
In order to estimate the parameter on every edge, one must estimate the correlation between each such pair of nodes, which can be shown to correspond to learning the mean of a particular product distribution in $\ell_\infty$-distance.
This problem is well-known to have a gap between the non-private and private sample complexities, due to methods derived from fingerprinting codes~\cite{BunUV14, DworkSSUV15, SteinkeU17a}, and differentially private Fano's inequality~\cite{AcharyaSZ20}.

Second, we show that learning the structure of a graphical model, under either pure or concentrated differential privacy, requires $\poly(\dims)$ samples. The formal theorem  appears in Section~\ref{sec:struct-lb}.
\begin{theorem}[Informal]
  Any algorithm which satisfies pure or concentrated differential privacy and learns the structure of a pairwise graphical model with probability at least $2/3$ requires $\poly(\dims)$ samples.
\end{theorem}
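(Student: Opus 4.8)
The plan is to prove this by a \emph{packing} (volume) argument, the standard route to pure- and concentrated-DP lower bounds. This is the right tool precisely because packing collapses under approximate DP --- the $\delta$ slack lets one pack exponentially more instances --- which is consistent with structure learning being easy under approximate DP (Theorem~\ref{thm:struct-ub}); any bound separating concentrated from approximate DP must use a technique that is sensitive to this distinction, and packing is. Concretely, I would reduce structure learning to the task of privately identifying a hidden structure drawn from a large family of Ising models that are pairwise hard to tell apart from data, and then argue that no pure- or concentrated-DP algorithm can perform this identification with $o(\poly(\dims))$ samples.

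For the family I would reuse the perfect-matching template from the parameter-learning lower bound: take the $\dims/2$ edges of a fixed perfect matching, and for each string $b \in \{0,1\}^{\dims/2}$ let $P_b$ be the Ising model in which edge $i$ is present with weight $\eta$ iff $b_i = 1$. This gives $M = 2^{\dims/2}$ distinct structures, and since recovering the structure of $P_b$ is exactly recovering $b$, the accuracy hypothesis forces the algorithm to land, for each $b$, in a \emph{disjoint} ``correct'' event $S_b$ with probability at least $2/3$. The key quantitative input is how close datasets drawn from different members of the family are: because $P_b$ is a product over the $\dims/2$ pairs, each differing from the empty model only by a correlation of order $\eta$, a coupling lets $\ns$ samples from $P_b$ and from the empty model $P_0$ disagree on only about $\dtv(P_b,P_0)\cdot \ns$ rows, and by Pinsker $\dtv(P_b,P_0) \le \sqrt{\tfrac12 \mathrm{KL}(P_b\|P_0)} = O(\sqrt{\dims}\,\eta)$. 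It is this square root, rather than the crude union bound $O(\dims\,\eta)$, that ultimately produces a polynomial (instead of dimension-free) bound.

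For pure $\ve$-DP the argument then closes cleanly. Coupling each $P_b^{\ns}$ to $P_0^{\ns}$ so they differ in at most $k = O(\sqrt{\dims}\,\eta\,\ns)$ rows with probability $0.9$, group privacy gives $\pr{A(P_0^{\ns}) \in S_b} \ge e^{-\ve k}\big(\pr{A(P_b^{\ns}) \in S_b} - \tfrac1{10}\big)$. Summing over all $b$ and using that the $S_b$ are disjoint (so the left side is at most $1$) yields $e^{\ve k} \gtrsim M$, i.e.\ $\ve k = \Omega(\dims)$, and substituting $k = O(\sqrt{\dims}\,\eta\,\ns)$ gives $\ns = \Omega(\sqrt{\dims}/(\eta\,\ve))$ --- the desired polynomial dependence. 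For concentrated DP I would run the identical skeleton but replace pure-DP group privacy by its zCDP analogue (a group of size $k$ is $\rho k^2$-zCDP) together with the corresponding zCDP probability-preservation inequality, equivalently invoking the differentially private Fano/Assouad machinery of~\cite{AcharyaSZ20} already used for the parameter-learning bound.

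The step I expect to be the main obstacle is the concentrated-DP bound. zCDP group privacy degrades \emph{quadratically} in the group size --- a group of size $k$ is only $\rho k^2$-zCDP --- so the packing requirement becomes $\rho k^2 = \Omega(\log M) = \Omega(\dims)$, giving $k = \Omega(\sqrt{\dims/\rho})$; paired with the coupling estimate $k = O(\sqrt{\dims}\,\eta\,\ns)$, the factors of $\dims$ threaten to cancel. A pairwise coupling to a single reference is therefore too lossy for zCDP, and the polynomial dependence must instead be extracted either by accounting for the divergences across the entire packed family jointly (the private Assouad/Fano framework of~\cite{AcharyaSZ20}), or by tuning the edge weight $\eta$ and the per-instance edge density so that the quadratic loss still leaves a genuine $\poly(\dims)$ residue. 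Verifying that the concentrated-DP bound does not collapse to a dimension-independent quantity is where I expect the real work to lie.
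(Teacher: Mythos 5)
Your packing family and your pure-DP argument are essentially the paper's own: the same matching-based packing of $M = 2^{\dims/2}$ Ising models, the same chi-squared/Pinsker bound $\dtv = O(\eta\sqrt{\dims})$, and the same coupling-plus-group-privacy packing step (the paper couples arbitrary pairs of packed distributions rather than each $P_b$ to the empty model, which is immaterial), yielding $\ns = \Omega\Paren{\sqrt{\dims}/(\eta\eps)}$. (The paper additionally records the term $\Omega(\dims/\eps)$ coming from the trivial bound of $\ns$ on the coupling distance, but that is cosmetic for the informal $\poly(\dims)$ statement.)

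The genuine gap is the concentrated-DP half, which you correctly diagnose as problematic for your method but then leave unresolved, deferring to private Fano/Assouad machinery or to re-tuning the construction. The paper's resolution is much simpler than either: for zCDP, discard the coupling altogether. Any two datasets of size $\ns$ trivially differ in at most $\ns$ rows, so zCDP group privacy with group size $\ns$, combined with the disjointness of the $M$ success events, already forces $\rho \ns^2 = \Omega(\log M) = \Omega(\dims)$, i.e.\ $\ns = \Omega\Paren{\sqrt{\dims/\rho}}$. This is exactly the zCDP packing lemma from Bun--Steinke that the paper invokes (Lemma~\ref{lem:coupling-zCDP}), and it places no requirement whatsoever on how close the packed distributions are. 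Your worry that ``the factors of $\dims$ cancel'' is precisely the observation that the coupled group size $k = O(\eta\sqrt{\dims}\,\ns)$ makes the bound collapse: since the zCDP packing requirement is $\rho k^2 \gtrsim \log M$, shrinking $k$ by the factor $\eta\sqrt{\dims}$ costs $\dims\eta^2$ after squaring, which exactly eats the $\log M = \Omega(\dims)$ gain. The lesson is that under quadratic group-privacy degradation the crude bound $k \le \ns$ is \emph{strictly better} than the coupling bound for this packing; no joint-divergence accounting or Assouad-type argument is needed. The only price is that the resulting bound $\Omega\Paren{\sqrt{\dims/\rho}}$ carries no $1/\eta$ dependence, which is irrelevant to the $\poly(\dims)$ claim being proved.
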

We derive this result via packing arguments~\cite{HardtT10,BeimelBKN14,AcharyaSZ20}, by showing that there exists a large number (exponential in $\dims$) of different binary pairwise graphical models which must be distinguished.
The construction of a packing of size $m$ implies lower bounds of $\Omega(\log m)$ and $\Omega(\sqrt{\log m})$ for learning under pure and concentrated differential privacy, respectively.

\subsubsection{Summary and Discussion}

We summarize our findings on privately learning Markov Random Fields in Table~\ref{tbl:ba-table}, focusing on the specific case of the Ising model. 
We note that qualitatively similar relationships between problems also hold for general pairwise models as well as higher-order binary Markov Random Fields.
Each cell denotes the sample complexity of a learning task, which is a combination of an objective and a privacy constraint.
Problems become harder as we go down (as the privacy requirement is tightened) and to the right (structure learning is easier than parameter learning).

The top row shows that both learning goals require only $\Theta(\log \dims)$ samples to perform absent privacy constraints, and are thus tractable even in very high-dimensional settings or when data is limited.
However, if we additionally wish to guarantee privacy, our results show that this logarithmic sample complexity is only achievable when one considers structure learning under approximate differential privacy.
If one changes the learning goal to parameter learning, \emph{or} tightens the privacy notion to concentrated differential privacy, then the sample complexity jumps to become polynomial in the dimension, in particular $\Omega(\sqrt{\dims})$.
Nonetheless, we provide algorithms which match this dependence, giving a tight $\Theta(\sqrt{\dims})$ bound on the sample complexity.


\begin{table*}[!htb]
\begin{center}
\begin{tabular}{| c | c | c |}
\hline
  \multicolumn{1}{|c|}{} & \multicolumn{1}{c|}{\textbf{Structure Learning}} & \multicolumn{1}{c|}{\textbf{Parameter Learning}} \\
\hline
  \textbf{Non-private}	& $\Theta(\log{\dims})$ (folklore) &  $\Theta(\log{\dims})$ (folklore) \\ \hline
  \textbf{Approximate DP}	& $\Theta(\log{\dims})$ (Theorems~\ref{thm:str-ub-pair})	&  $\Theta(\sqrt{\dims})$ (Theorems~\ref{thm:est-ub-ising} and~\ref{thm:est-lb})	\\ \hline
  \textbf{Zero-concentrated DP}	& $\Theta(\sqrt{\dims})$ (Theorems~\ref{thm:est-ub-ising} and~\ref{thm:str-ising})	&  $\Theta(\sqrt{\dims})$ (Theorems~\ref{thm:est-ub-ising} and~\ref{thm:est-lb})	\\ \hline
  \textbf{Pure DP}		&	$\Omega(\dims)$ (Theorem~\ref{thm:str-ising})&		$\Omega(\dims)$ (Theorem~\ref{thm:str-ising})	\\ \hline
\end{tabular}
\end{center}
  \caption{Sample complexity (dependence on $\dims$) of privately learning an Ising model.}
  \label{tbl:ba-table}
\end{table*}

\subsection{Related Work}
As mentioned before, there has been significant work in learning the structure and parameters of graphical models, see, e.g.,~\cite{ChowL68, CsiszarT06, AbbeelKN06, RavikumarWL10, JalaliJR11, JalaliRVS11, SanthanamW12, BreslerGS14b, Bresler15, VuffrayMLC16, KlivansM17, HamiltonKM17, RigolletH17, LokhovVMC18, WuSD19}.
Perhaps a turning point in this literature is the work of Bresler~\cite{Bresler15}, who showed for the first time that general Ising models of bounded degree can be learned in polynomial time.
Since this result, following works have focused on both generalizing these results to broader settings (including MRFs with higher-order interactions and non-binary alphabets) as well as simplifying existing arguments.
There has also been work on learning, testing, and inferring other statistical properties of graphical models~\cite{BhattacharyaM16, MartindelCampoCU16, DaskalakisDK17, MukherjeeMY18, Bhattacharya19}.
In particular, learning and testing Ising models in statistical distance have also been explored~\cite{DaskalakisDK18,GheissariLP18,DevroyeMR18a,DaskalakisDK19, BezakovaBCSV19}, and are interesting questions under the constraint of privacy.

Recent investigations at the intersection of graphical models and differential privacy include~\cite{BernsteinMSSHM17, ChowdhuryRJ19,McKennaSM19}.
Bernstein et al.~\cite{BernsteinMSSHM17} privately learn graphical models by adding noise to the sufficient statistics and use an expectation-maximization based approach to recover the parameters.
However, the focus is somewhat different, as they do not provide finite sample guarantees for the accuracy when performing parameter recovery, nor consider structure learning at all.
Chowdhury, Rekatsinas, and Jha~\cite{ChowdhuryRJ19} study differentially private learning of Bayesian Networks, another popular type of graphical model which is incomparable with Markov Random Fields.
McKenna, Sheldon, and Miklau~\cite{McKennaSM19} apply graphical models in place of full contingency tables to privately perform inference.

Graphical models can be seen as a natural extension of product distributions, which correspond to the case when the order of the MRF $t$ is  $1$.
There has been significant work in differentially private estimation of product distributions~\cite{BlumDMN05, BunUV14, DworkMNS06, SteinkeU17a, KamathLSU19, CaiWZ19, BunKSW19}.
Recently, this investigation has been broadened into differentially private distribution estimation, including sample-based estimation of properties and parameters, see, e.g.,~\cite{NissimRS07, Smith11, BunNSV15, DiakonikolasHS15, KarwaV18, AcharyaKSZ18, KamathLSU19, BunKSW19}.
For further coverage of differentially private statistics, see~\cite{KamathU20}.


\section{Preliminaries}
\label{sec:preliminaries}
Given an integer $n$, we let $[\ns] \coloneqq \{1,2,\cdots,\ns\}$. Given a set of points $X^1, \cdots, X^{\ns}$, we use superscripts, i.e., $X^i$ to denote the $i$-th datapoint. 
Given a vector $X \in \RR^{\dims}$, we use subscripts, i.e., $X_i$ to denote its $i$-th coordinate. We also use $X_{-i}$ to denote the vector after deleting the $i$-th coordinate, i.e. $X_{-i} = [X_1, \cdots, X_{i-1}, X_{i+1}, \cdots, X_{\dims}]$.

\subsection{Markov Random Field Preliminaries}
We first introduce the definition of the Ising model, which is a special case of general MRFs when $\ab=t=2$.
\begin{definition}
  \label{def:ising}
The $\dims$-variable Ising model is a distribution $\cD(A,\theta)$ on $\{-1, 1\}^{\dims}$ that satisfies
\begin{align}
 \proboff{Z = z }{} \propto \exp \Paren{\sum_{1\le i\le j \le \dims} A_{i,j} z_i z_j + \sum_{i \in [\dims]} \theta_i z_i}, \nonumber
\end{align}
where $A \in \RR^{\dims \times \dims}$ is a symmetric weight matrix with $A_{ii} = 0, \forall i \in [\dims]$ and $\theta \in \RR^{\dims}$ is a mean-field vector. 
The dependency graph of $ \cD(A,\theta)$ is an undirected graph $G= (V, E)$, with vertices $V = [\dims]$ and edges $E = \{ (i,j) : A_{i,j} \neq 0 \}$. The width of  $\cD(A,\theta)$ is defined as 
\begin{align}
\lambda (A,\theta)  = \max_{i\in[\dims]} \Paren{\sum_{j \in [\dims]} \absv{A_{i,j}} + \absv{\theta_i} }.\nonumber
\end{align}
Let $\eta (A,\theta)$ be the minimum edge weight in absolute value, i.e., $\eta (A,\theta) = \min_{i,j\in[\dims]: A_{i,j}\neq 0} \absv{A_{i,j}}.$
\end{definition}

We note that the Ising model is supported on $\{-1, 1\}^{\dims}$. A natural generalization is to generalize its support to $[\ab]^{\dims}$, and maintain pairwise correlations.

\begin{definition}
  \label{def:pairwise}
  The $\dims$-variable pairwise graphical model is a distribution $\cD(\cW,\Theta)$ on $[\ab]^{\dims}$ that satisfies
\begin{align}
  \proboff{Z = z }{} \propto \exp \Paren{\sum_{1\le i\le j \le \dims} W_{i,j}(z_i, z_j) + \sum_{i \in [\dims]} \theta_i(z_i)}, \nonumber
\end{align}
 where $\cW = \{ W_{i,j} \in \RR^{\ab \times \ab} : i \neq j \in [\dims]\}$ is a set of  weight matrices satisfying $W_{i,j} = W^T_{j,i}$, and $\Theta = \{\theta_i \in \RR^\ab : i \in [\dims]\}$ is a set of mean-field vectors.
The dependency graph of $ \cD(\cW,\Theta)$ is an undirected graph $G= (V, E)$, with vertices $V = [\dims]$ and edges $E = \{ (i,j) : W_{i,j} \neq 0 \}$. 
  The width of  $\cD(\cW,\Theta)$ is defined as 
\begin{align}
  \lambda (\cW,\Theta)  = \max_{i\in[\dims], a \in [\ab]} \Paren{\sum_{j \in [\dims] \backslash i} \max_{b \in [\ab]} \absv{W_{i,j}(a,b)} + \absv{\theta_i(a)} }.\nonumber
\end{align}
  Define $\eta(\cW, \Theta) = \min_{(i,j) \in E} \max_{a,b} |W_{i,j}(a,b)|$.
\end{definition}

Both the models above only consider pairwise interactions between
nodes.  In order to capture higher-order interactions, we examine the
more general model of Markov Random Fields (MRFs).  In this paper, we
will restrict our attention to MRFs over a binary alphabet (i.e.,
distributions over $\{\pm1\}^p$).  In order to define binary $t$-wise
MRFs, we first need the following definition of multilinear
polynomials, partial derivatives and maximal monomials.

\begin{definition}
Multilinear polynomial is defined as $h: \RR^{\dims} \rightarrow \RR$ such that $h(x)=\sum_{I} \bar{h}(I) \prod_{i \in I}x_i$ where $\bar{h}(I)$ denotes the coefficient of the monomial $\prod_{i \in I}x_i$ with respect to the variables $(x_i:i \in I)$. Let $\partial_i h(x) = \sum_{J: i \not\in J}\bar{h} (J \cup \{ i\})\prod_{j \in J}x_j$ denote the partial derivative of $h$ with respect to $x_i$. Similarly, for $I \subseteq [\dims]$, let $\partial_I h(x) = \sum_{J: J \cap I = \phi}\bar{h} (J \cup I )\prod_{j \in J}x_j$ denote the partial derivative of $h$ with respect to the variables $(x_i: i \in I)$.
We say $I \subseteq [\dims]$ is a maximal monomial of $h$ if $\bar{h}(J)=0$ for all $J \supset I$.
\end{definition}

Now we are able to formally define binary $t$-wise MRFs.

\begin{definition}
  \label{def:mrf}
For a graph $G= (V, E)$ on $\dims$ vertices, let $C_t(G)$ denotes all cliques of  size at most $t$ in G. A binary $t$-wise Markov random field on $G$ is a distribution $\cD$ on  $\{-1, 1\}^{\dims}$ which satisfies
\begin{align}
 \proboff{Z = z }{Z \sim \cD} \propto \exp \Paren{\sum_{I \in C_t(G) }\varphi_I(z) }, \nonumber
\end{align}
and each $\varphi_I:\RR^{\dims} \rightarrow \RR$ is a multilinear polynomial that depends only on the variables in $I$. 

We call $G$ the dependency graph of the MRF and $h(x) = \sum_{I \in C_t(G)} \varphi_I(x)$ the factorization polynomial of the MRF. The width of $\cD$ is defined as $\lambda = \max_{i \in [\dims]} \norm{\partial_i h}$, where $\norm{h} \coloneqq \sum_{I} \absv{\bar{h}(I)}$.
\end{definition}

Now we introduce the definition of $\delta$-unbiased distribution and its properties. The proof appears in~\cite{KlivansM17}.
\begin{definition}[$\delta$-unbiased]
Let $S$ be the alphabet set, e.g., $S=\{1, -1 \}$ for binary $t$-pairwise MRFs and $S=[\ab]$ for pairwise graphical models. A distribution $\cD$ on $S^{\dims}$ is $\delta$-unbiased if for $Z \sim \cD$, $\forall i \in [\dims]$, and any assignment $x \in S^{\dims-1}$ to $Z_{-i}$, $\min_{z \in S} \probof{Z_i = z |Z_{-i}= x } \ge \delta$.
\end{definition}

The marginal distribution of a $\delta$-unbiased distribution also satisfies $\delta$-unbiasedness.

\begin{lemma}
\label{lem:marginal}
Let $\cD$ be a $\delta$-unbiased on $S^{\dims}$, with alphabet set $S$. For $X \sim \cD$, $\forall i \in [\dims]$, the distribution of $X_{-i}$ is also $\delta$-unbiased.
\end{lemma}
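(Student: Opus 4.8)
The plan is to reduce the claim to a single application of the law of total probability, combined with the fact that a convex combination of numbers each at least $\delta$ is itself at least $\delta$. Fix the coordinate $i$ that is marginalized out, and let $Y = X_{-i}$ denote the resulting random vector on $S^{\dims-1}$, whose coordinates are indexed by $[\dims] \setminus \{i\}$. To verify that the law of $Y$ is $\delta$-unbiased, I must show that for every coordinate $j \in [\dims] \setminus \{i\}$, every assignment $x$ to the remaining coordinates $[\dims] \setminus \{i,j\}$, and every symbol $z \in S$, the conditional probability $\probof{X_j = z \mid X_\ell = x_\ell \ \forall \ell \in [\dims] \setminus \{i,j\}}$ is at least $\delta$. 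The key observation is that this conditioning event does \emph{not} fix $X_i$: relative to the original distribution $\cD$, the variable $X_i$ has been averaged out, so I am comparing a full conditional of $\cD$ against a partial conditional of its marginal.

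First I would record that every partial assignment has strictly positive probability under $\cD$. This follows from $\delta$-unbiasedness: by the chain rule, the probability of any full assignment factors into conditionals each of which is at least $\delta > 0$, so every full assignment---and hence every marginal event conditioned upon below---has positive mass, making all the conditional probabilities well-defined.

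Next I would expand the target conditional probability by conditioning additionally on the value of $X_i$, writing
\[
  \probof{X_j = z \mid X_{[\dims]\setminus\{i,j\}} = x} = \sum_{a \in S} \probof{X_j = z \mid X_i = a,\, X_{[\dims]\setminus\{i,j\}} = x}\, \probof{X_i = a \mid X_{[\dims]\setminus\{i,j\}} = x}.
\]
In each summand, the first factor conditions on all coordinates other than $j$, so by the $\delta$-unbiasedness of $\cD$ it is at least $\delta$; the second factors form a probability distribution over $a \in S$ and therefore sum to one. Hence the right-hand side is a convex combination of quantities each at least $\delta$, and so it is itself at least $\delta$.

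Since $j$, $x$, and $z$ were arbitrary, this establishes that the marginal law of $X_{-i}$ is $\delta$-unbiased, completing the argument. I do not anticipate any substantial obstacle here: the only points requiring care are the bookkeeping of which coordinates are being conditioned upon---in particular, recognizing that marginalizing out $X_i$ is precisely what converts the full conditional of $\cD$ into the partial conditional demanded by the definition for $X_{-i}$---and the trivial check that every conditioning event carries positive mass.
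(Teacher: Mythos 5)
Your proof is correct: the law-of-total-probability expansion over the marginalized coordinate $X_i$, expressing the partial conditional as a convex combination of full conditionals each at least $\delta$, is exactly the standard argument for this fact; note that the paper itself offers no proof, deferring to the cited reference \cite{KlivansM17}, whose argument is the same one you give. The only loose point is your appeal to the chain rule for positivity of conditioning events (chain-rule factors are prefix conditionals, not the full-complement conditionals that $\delta$-unbiasedness bounds), but this is easily patched by the same averaging trick or by changing one coordinate at a time from an assignment of positive mass, so it does not affect the substance.
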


The following lemmas provide $\delta$-unbiased guarantees for various graphical models.
\begin{lemma}
\label{lem:pairwise-unbiased}
Let $\cD(\cW,\Theta)$ be a pairwise graphical model  with alphabet size $\ab$ and width $\lambda(\cW,\Theta)$. Then $\cD(\cW,\Theta)$ is $\delta$-unbiased with $\delta=e^{-2\lambda(\cW,\Theta)} /\ab$. In particular, an Ising model $\cD(A,\theta)$ is $e^{-2\lambda(A,\theta)}/2$-unbiased.
\end{lemma}

\begin{lemma}
Let $\cD$ be a binary $t$-wise MRFs with width $\lambda$. Then $\cD$ is $\delta$-unbiased with $\delta=e^{-2\lambda}/2$. 
\end{lemma}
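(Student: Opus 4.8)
The plan is to follow the same route as the proof of the pairwise case, Lemma~\ref{lem:pairwise-unbiased}: reduce $\delta$-unbiasedness to a lower bound on a single conditional probability, and then exploit the multilinearity of the factorization polynomial $h$. Fix a coordinate $i \in [\dims]$ and an assignment $x \in \{-1,1\}^{\dims-1}$ to $Z_{-i}$. Because $h$ is multilinear, I would split it according to whether a monomial contains the variable $x_i$, writing $h = x_i \cdot \partial_i h + g$, where $g$ collects the monomials on index sets $I$ with $i \notin I$, and $\partial_i h$ is exactly the partial derivative defined earlier; crucially, both $\partial_i h$ and $g$ depend only on the remaining variables and not on $x_i$. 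Since $\cD \propto \exp(h)$, the factor $\exp(g(x))$ cancels between numerator and denominator of the conditional, leaving the logistic form
\[
  \probof{Z_i = z \mid Z_{-i} = x} = \frac{\exp\Paren{z\,\partial_i h(x)}}{\exp\Paren{\partial_i h(x)} + \exp\Paren{-\partial_i h(x)}}, \qquad z \in \{-1,1\}.
\]

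Next I would bound the exponent. For any $x \in \{-1,1\}^{\dims-1}$, the triangle inequality together with $\absv{x_j} = 1$ gives
\[
  \absv{\partial_i h(x)} = \absv{\sum_{J : i \notin J} \bar h(J \cup \{i\}) \prod_{j \in J} x_j} \le \sum_{J : i \notin J} \absv{\bar h(J \cup \{i\})} = \norm{\partial_i h} \le \lambda,
\]
where the final inequality is simply the definition of the width $\lambda = \max_i \norm{\partial_i h}$. Writing $a = \partial_i h(x)$ with $\absv{a} \le \lambda$, the smaller of the two conditional probabilities equals $\exp(-\absv{a}) / (\exp(a) + \exp(-a))$, and since this expression is decreasing in $\absv{a}$ it is at least $\exp(-\lambda)/(\exp(\lambda)+\exp(-\lambda)) = 1/(e^{2\lambda}+1)$.

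Finally, since $\lambda \ge 0$ we have $e^{2\lambda} + 1 \le 2 e^{2\lambda}$, so this quantity is at least $e^{-2\lambda}/2$, the claimed value of $\delta$; taking the minimum over $z$, $i$, and $x$ completes the argument. I do not anticipate a genuine obstacle here, as this is the binary-alphabet specialization of Lemma~\ref{lem:pairwise-unbiased}. The only points needing care are getting the multilinear decomposition $h = x_i \cdot \partial_i h + g$ right so that the terms independent of $x_i$ cancel cleanly, and verifying that the width $\max_i \norm{\partial_i h}$ is precisely the quantity controlling the exponent; once these line up, the bound $1/(e^{2\lambda}+1) \ge e^{-2\lambda}/2$ is immediate.
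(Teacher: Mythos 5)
Your proof is correct and takes essentially the same route as the paper's source: the paper states this lemma without proof, deferring to~\cite{KlivansM17}, and your logistic-form identity for $\Pr\left[Z_i = z \mid Z_{-i} = x\right]$ is exactly the conditional-sigmoid formula the paper later imports as Lemma~\ref{lem:sigmoidMRF} (Lemma 7.6 of~\cite{KlivansM17}), specialized and combined with the width bound $\absv{\partial_i h(x)} \le \norm{\partial_i h} \le \lambda$. The decomposition $h = x_i\,\partial_i h + g$, the cancellation of $\exp(g(x))$, and the final estimate $1/(e^{2\lambda}+1) \ge e^{-2\lambda}/2$ all check out.
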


Finally, we define two possible goals for learning graphical models.
First, the easier goal is \emph{structure learning}, which involves recovering the set of non-zero edges.
\begin{definition}
\label{def:learn-struct}
  An algorithm learns the \emph{structure} of a graphical model if, given samples $Z_1, \dots, Z_n \sim \cD$, it outputs a graph $\hat G = (V,\hat E)$ over $V = [\dims]$ such that $\hat E = E$, the set of edges in the dependency graph of $\cD$. 
\end{definition}

The more difficult goal is \emph{parameter learning}, which requires the algorithm to learn not only the location of the edges, but also their parameter values.

\begin{definition}
\label{def:learn-params}
  An algorithm learns the \emph{parameters} of an Ising model (resp.\ pairwise graphical model) if, given samples $Z_1, \dots, Z_n \sim \cD$, it outputs a matrix $\hat A$ (resp.\ set of matrices $\hat \cW$) such that $\max_{i,j \in [\dims]} |A_{i,j} -\hat A_{i,j}| \leq \alpha$ (resp.\ $|W_{i,j}(a,b) - \widehat W_{i,j}(a,b)| \leq \alpha$, $\forall i\neq j \in [\dims], \forall a,b \in [\ab]$).
\end{definition}

\begin{definition}
An algorithm learns the \emph{parameters} of a binary $t$-wise MRF with associated polynomial $h$ if, given samples $X^1, \dots, X^n \sim \cD$, it outputs another multilinear polynomial $u$ such that
that for all maximal monomial $I \subseteq [\dims]$, $\absv{\bar h(I) -\bar u(I)} \le \dist$.
\end{definition}

\subsection{Privacy Preliminaries}
A \emph{dataset} $X = (X^1,\dots,X^n) \in \cX^n$ is a collection of points from some universe $\cX$. We say that two datasets $X$ and $X^{\prime}$ are neighboring, which are denoted as $X \sim X^{\prime}$ if they differ in exactly one single point. In our work we consider a few different variants of differential privacy.  The first is the standard notion of differential privacy.

\begin{definition}[Differential Privacy (DP)~\cite{DworkMNS06}] 
A randomized algorithm $\cA: \cX^n \rightarrow \cS$ satisfies \emph{$(\eps,\delta)$-differential privacy ($(\eps,\delta)$-DP)} if for every pair of neighboring datasets $X, X' \in \cX^n$, and any event $S \subseteq \cS$, 
\[
\probof {\cA(X) \in S} \leq e^{\eps} \probof{\cA(X^{\prime}) \in S} + \delta.
\]
\end{definition}

The second is \emph{concentrated differential privacy}~\cite{DworkR16}. In this work, we specifically consider its refinement \emph{zero-mean concentrated differential privacy}~\cite{BunS16}.
\begin{definition}[Concentrated Differential Privacy (zCDP)~\cite{BunS16}]
    A randomized algorithm $\cA: \cX^n \rightarrow \cS$
    satisfies \emph{$\rho$-zCDP} if for
    every pair of neighboring datasets $X, X' \in \cX^n$,
    $$\forall \alpha \in (1,\infty)~~~D_\alpha\left(M(X)||M(X')\right) \leq \rho\alpha,$$
    where $D_\alpha\left(M(X)||M(X')\right)$ is the
    $\alpha$-R\'enyi divergence between $M(X)$ and $M(X')$.
\end{definition}


\noindent The following lemma quantifies the relationships between $(\eps,0)$-DP, $\rho$-zCDP and $(\eps,\delta)$-DP.
\begin{lemma}[Relationships Between Variants of DP~\cite{BunS16}] 
\label{lem:dpdefns}
For every $\eps \geq 0$,
\begin{enumerate}
\item If $\cA$ satisfies $(\eps,0)$-DP, then $\cA$ is $\frac{\eps^2}{2}$-zCDP.
\item If $\cA$ satisfies $\frac{\eps^2}{2}$-zCDP, then $\cA$ satisfies $(\frac{\eps^2}{2} + \eps \sqrt{2 \log(\frac{1}{\delta})},\delta)$-DP for every $\delta > 0$.
\end{enumerate}
\end{lemma}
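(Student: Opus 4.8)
The plan is to prove both implications through the \emph{privacy loss random variable} and its moment generating function, since R\'enyi divergence is nothing but the (rescaled) log-MGF of the privacy loss. Fix a pair of neighboring datasets and write $P = \cA(X)$, $Q = \cA(X')$, and let $Z = \log\frac{dP}{dQ}$ denote the privacy loss, viewed either under $P$ or under $Q$ as convenient. The identity I would build everything on is $D_\alpha(P\|Q) = \frac{1}{\alpha-1}\log \E_{x\sim P}[e^{(\alpha-1)Z}]$, so that controlling one MGF controls every R\'enyi divergence at once, and symmetrically $\E_{x\sim Q}[e^{Z}] = 1$ since $e^Z$ is a likelihood ratio.

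For part 1, I would first note that $(\eps,0)$-DP forces the density ratio $dP/dQ$ to lie in $[e^{-\eps},e^\eps]$ pointwise, so $Z \in [-\eps,\eps]$ almost surely in either direction. Two applications of Hoeffding's lemma then suffice. First, from $\E_{Q}[e^{Z}] = 1$ and $Z\in[-\eps,\eps]$, Hoeffding gives $1 \le e^{\E_Q[Z] + \eps^2/2}$, whence $\dkl(Q\|P) = -\E_Q[Z] \le \eps^2/2$, and by symmetry $\dkl(P\|Q) \le \eps^2/2$ as well. Second, applying Hoeffding to the centered variable $Z - \E_P[Z]$ under $P$ yields $\E_P[e^{(\alpha-1)Z}] \le e^{(\alpha-1)\E_P[Z] + (\alpha-1)^2\eps^2/2}$, so that $D_\alpha(P\|Q) \le \E_P[Z] + \frac{(\alpha-1)\eps^2}{2} = \dkl(P\|Q) + \frac{(\alpha-1)\eps^2}{2}$. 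Combining the two bounds gives $D_\alpha(P\|Q) \le \frac{\eps^2}{2} + \frac{(\alpha-1)\eps^2}{2} = \frac{\eps^2}{2}\alpha$, which is exactly $\frac{\eps^2}{2}$-zCDP.

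For part 2, I would convert the R\'enyi bound into a tail bound on the privacy loss and then invoke the standard equivalence between a privacy-loss tail bound and approximate DP. Assuming $\rho$-zCDP with $\rho = \eps^2/2$, the definition gives $\E_P[e^{(\alpha-1)Z}] \le e^{(\alpha-1)\alpha\rho}$ for every $\alpha>1$. A Chernoff--Markov bound then yields, for any threshold $\eps'$, that $\Pr_{x \sim P}[Z > \eps'] \le e^{(\alpha-1)(\alpha\rho - \eps')}$. Choosing $\alpha = 1 + \sqrt{\log(1/\delta)/\rho}$ together with $\eps' = \rho + 2\sqrt{\rho\log(1/\delta)}$ makes the exponent exactly $-\log(1/\delta)$, hence $\Pr_{x \sim P}[Z>\eps'] \le \delta$; substituting $\rho=\eps^2/2$ recovers the stated $\eps' = \frac{\eps^2}{2} + \eps\sqrt{2\log(1/\delta)}$. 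I would close by applying the routine lemma that a bound $\Pr_{s\sim P}[\log\frac{P(s)}{Q(s)} > \eps'] \le \delta$ implies $\Pr[P\in S] \le e^{\eps'}\Pr[Q\in S] + \delta$ for every event $S$, which is precisely $(\eps',\delta)$-DP.

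The main obstacle is obtaining the exact constant $\tfrac12\eps^2$ in part 1: a single naive application of Hoeffding to $Z$ centered at $0$ produces a bound of the form $\frac{\alpha^2}{\alpha-1}\cdot\frac{\eps^2}{2}$, which strictly exceeds $\frac{\eps^2}{2}\alpha$ and is therefore too weak. The fix, as above, is to center the MGF estimate at the true mean $\E_P[Z]$ and separately bound that mean (a KL divergence) by $\frac{\eps^2}{2}$, so that the two halves add up cleanly to $\frac{\eps^2}{2}\alpha$. In part 2 the only delicate point is the final tail-to-DP conversion, which requires splitting the output space into the region where the privacy loss exceeds $\eps'$ (contributing at most the additive $\delta$) and its complement (where the $e^{\eps'}$ multiplicative factor applies).
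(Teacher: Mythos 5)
Your proof is correct in both parts, but there is nothing in the paper to compare it against: the paper states this lemma purely as an imported result, citing Bun and Steinke~\cite{BunS16}, and gives no proof of its own. What you have written is a valid, self-contained reconstruction of the standard arguments behind the citation. In part 1, the key decomposition $D_\alpha(P\|Q) \le \dkl(P\|Q) + \frac{(\alpha-1)\eps^2}{2}$, with $\dkl(P\|Q) \le \frac{\eps^2}{2}$ extracted separately from the normalization $\E_Q[e^Z]=1$ via Hoeffding's lemma, is exactly what yields the sharp constant $\frac{\eps^2}{2}\alpha$; your diagnosis that a single uncentered application of Hoeffding is too weak is accurate, and centering at $\E_P[Z]$ (which is itself a KL divergence) is the right fix. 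In part 2, the Markov/Chernoff bound on $\E_P[e^{(\alpha-1)Z}]$ at the optimized order $\alpha = 1+\sqrt{\log(1/\delta)/\rho}$, giving the tail bound $\Pr_P[Z > \rho + 2\sqrt{\rho\log(1/\delta)}] \le \delta$, followed by the routine split of any event $S$ into the region where the privacy loss exceeds the threshold (absorbed by $\delta$) and its complement (absorbed by the $e^{\eps'}$ factor), is precisely the conversion argument of~\cite{BunS16}; substituting $\rho = \eps^2/2$ recovers the stated parameters exactly. The one stylistic caveat is that your claims implicitly assume densities and a well-defined likelihood ratio; a fully rigorous treatment would phrase $Z$ via Radon--Nikodym derivatives with respect to a common dominating measure, but this is cosmetic and does not affect the correctness of the argument.
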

Roughly speaking, pure DP is stronger than zero-concentrated DP, which is stronger than approximate DP.

A crucial property of all the variants of differential privacy is that they can be composed adaptively.  By adaptive composition, we mean a sequence of algorithms $\cA_1(X),\dots,\cA_T(X)$ where the algorithm $\cA_t(X)$ may also depend on the outcomes of the algorithms $\cA_1(X),\dots,\cA_{t-1}(X)$.
\begin{lemma}[Composition of zero-concentrated DP~\cite{BunS16}] \label{lem:dpcomp} If $\cA$ is an adaptive composition of
  differentially private algorithms $\cA_1,\dots, \cA_T$, and $\cA_1,\dots,\cA_T$ are
  $\rho_1,\dots,\rho_T$-zCDP respectively, then $\cA$ is $\rho$-zCDP for
  $\rho = \sum_t \rho_t$.  
\end{lemma}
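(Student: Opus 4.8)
The plan is to work directly from the definition of zCDP in terms of R\'enyi divergence and to establish additivity of R\'enyi divergence under adaptive composition. Write $P$ and $Q$ for the distributions of the full transcript $(Y_1,\dots,Y_T)$, where $Y_t = \cA_t(X; Y_1,\dots,Y_{t-1})$, under the neighboring datasets $X$ and $X'$ respectively. By the chain rule for densities, $P(y_1,\dots,y_T) = \prod_{t=1}^T P_t(y_t \mid y_{<t})$ and $Q(y_1,\dots,y_T) = \prod_{t=1}^T Q_t(y_t \mid y_{<t})$, where $P_t(\cdot \mid y_{<t})$ is the output distribution of $\cA_t$ run on $X$ with auxiliary input $y_{<t}$, and $Q_t(\cdot \mid y_{<t})$ is the same for $X'$.

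First I would recall that for any order $\alpha \in (1,\infty)$,
$$\exp\Paren{(\alpha-1) D_\alpha(P \| Q)} = \expectationf{\Paren{\tfrac{P(y)}{Q(y)}}^{\alpha}}{y \sim Q},$$
and substitute the product form of the likelihood ratio to obtain
$$\expectationf{\prod_{t=1}^T \Paren{\tfrac{P_t(y_t \mid y_{<t})}{Q_t(y_t \mid y_{<t})}}^{\alpha}}{y \sim Q}.$$
The key step is to peel off the last factor: conditioning on the prefix $y_{<T}$ and taking the inner expectation over $y_T \sim Q_T(\cdot \mid y_{<T})$ produces exactly $\exp\Paren{(\alpha-1) D_\alpha\Paren{P_T(\cdot \mid y_{<T}) \| Q_T(\cdot \mid y_{<T})}}$. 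Because $X$ and $X'$ are neighboring and $\cA_T$ is $\rho_T$-zCDP, this conditional divergence is at most $\rho_T \alpha$ \emph{uniformly over every prefix} $y_{<T}$, so the inner expectation is bounded by $\exp\Paren{(\alpha-1)\rho_T\alpha}$ regardless of $y_{<T}$. Pulling this uniform bound out of the outer expectation leaves the identical quantity for the first $T-1$ algorithms.

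Iterating this peeling argument (formally, an induction on $T$) yields
$$\expectationf{\Paren{\tfrac{P(y)}{Q(y)}}^{\alpha}}{y \sim Q} \le \exp\Paren{(\alpha-1)\alpha \sum_{t=1}^T \rho_t},$$
and taking logarithms and dividing by $\alpha - 1$ gives $D_\alpha(P \| Q) \le \alpha \sum_t \rho_t = \rho\alpha$. Since this holds for all $\alpha \in (1,\infty)$ and all neighboring $X, X'$, the composed algorithm $\cA$ is $\rho$-zCDP.

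I expect the main obstacle to be justifying the uniformity of the conditional bound in the adaptive setting: one must verify that the zCDP guarantee of each $\cA_t$ applies with the \emph{same} auxiliary prefix $y_{<t}$ fed to both the run on $X$ and the run on $X'$, so that the two conditional distributions are genuinely the outputs of $\cA_t$ on a pair of neighboring inputs. Once this is in place, the factorization and the order-by-order peeling are routine, and additivity of the R\'enyi divergence parameter translates immediately into additivity of the zCDP parameter.
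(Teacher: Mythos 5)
Your proof is correct: the chain-rule factorization of the transcript likelihood ratio, the peeling of the last R\'enyi factor by conditioning on the prefix, and the uniform-in-prefix bound $\exp((\alpha-1)\rho_T\alpha)$ supplied by each $\cA_t$'s zCDP guarantee (applied to the same auxiliary prefix fed to both neighboring runs) is exactly the standard argument for adaptive composition of R\'enyi-divergence-based privacy. The paper states this lemma without proof, importing it from~\cite{BunS16}, and your argument is essentially the proof given in that reference, so there is nothing further to reconcile.
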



\section{Parameter Learning of Pairwise Graphical Models}

\subsection{Private Sparse Logistic Regression}
\label{sec:PFW}

As a subroutine of our parameter learning algorithm, we consider the
following problem: given a training data set $D$ consisting of n pairs
of data $D = \{d^j\}_{j=1}^{\ns}= \{ (x^j, y^j)\}_{j=1}^{\ns}$, where
$x^j \in \RR^p$ and $y^j \in \RR$, a constraint set $\cC \in \RR^p$,
and a loss function $\ell: \cC \times \RR^{p+1} \rightarrow \RR$, we
want to find
$w^{erm} = \arg\min_{w\in \cC} ~\cL(w;D) = \arg\min_{w\in \cC} ~
\frac1\ns{\sum_{j=1}^{\ns} \ell(w; d^j)}$ with a zCDP constraint. This
problem was previously studied in~\cite{TalwarTZ14}. Before stating
their results, we need the following two definitions.  The first
definition is regarding Lipschitz continuity.
\begin{definition}
A function $\ell : \cC \rightarrow \RR$ is $L_1$-Lipschitz with respect to $\ell_1$ norm, if the following holds.
$$\forall w_1, w_2 \in \cC, \absv{\ell(w_1) - \ell(w_2)} \le L_1 \norm{w_1-w_2}.$$
\end{definition}

The performance of the algorithm also depends on the ``curvature" of the loss function, which is defined below, based on the definition of~\cite{Clarkson10, Jaggi13}. 
A side remark is that this is a strictly weaker constraint than smoothness~\cite{TalwarTZ14}.

\begin{definition}[Curvature constant]
For $\ell:\cC \rightarrow \RR$, $\Gamma_{\ell}$ is defined as 
$$ \Gamma_{\ell} = \sup_{w_1, w_2 \in \cC, \gamma \in (0,1], w_3 = w_1 + \gamma (w_2-w_1)}\frac{2}{\gamma^2} \Paren{\ell(w_3) - \ell(w_1) - \langle w_3-w_1, \nabla \ell(w_1) \rangle}.$$
\end{definition}

Now we are able to introduce the algorithm and its theoretical guarantees.

\begin{algorithm}[H]

\KwIn{Data set: $D = \{d^1,\cdots,d^{\ns}\}$, loss function: $\cL(w;D) = \frac1{\ns} \sum_{j=1}^{\ns} \ell(w;d^j)$ (with Lipschitz constant $L_1$), privacy parameters: $\rho$, convex set: $\cC = conv(S)$ with $\norm{\cC} \coloneqq  \max_{s \in S} \norm{s}$, iteration times: $T$}

Initialize $w$ from an arbitrary point in $\cC$

\For{$t=1$ to $T-1$}{
$\forall s\in S$, $\alpha_s \leftarrow \langle s, \nabla \cL(w;D)\rangle+ \text{Lap} \Paren{ 0, \frac{L_1 \norm{C} \sqrt{T}}{\ns \sqrt{\rho}}}$

$\tilde{w_t} \leftarrow \arg\min_{s \in S} \alpha_s$

$w_{t+1} \leftarrow (1-\mu_t)w_{t} + \mu_t \tilde{w_t}$, where $\mu_t = \frac{2}{t+2}$
}
\KwOut{ $w^{priv} = w_T$}

\caption{$\cA_{PFW}( D, \cL, \rho, \cC):$ Private Frank-Wolfe Algorithm}
\label{alg:privateFW}
\end{algorithm}

\begin{lemma}[Theorem 5.5 from~\cite{TalwarTZ14}]
\label{lem:PrivateFW}
Algorithm~\ref{alg:privateFW} satisfies $\rho$-zCDP. Furthermore, let $L_1$, $\norm{\cC}$ be defined as in Algorithm~\ref{alg:privateFW}. Let $\Gamma_{\ell}$ be an upper bound on the curvature constant for the loss function $\ell(\cdot;d)$ for all $d$ and $\absv{S}$ be the number of extreme points in $S$.
If we set $T=\frac{\Gamma_{\ell}^{\frac{2}{3}} (n \sqrt{\rho})^{\frac{2}{3}}}{L_1\norm{\cC}^{\frac{2}{3}}}$, then with probability at least $1-\beta$ over the randomness of the algorithm,

$$ \cL(w^{priv}; D) - \cL(w^{erm}; D) = O\Paren{ \frac{\Gamma_{\ell}^{\frac13} (L_1\norm{\cC}) ^{\frac{2}{3}}\log(\frac{ \ns\absv{S}}{\beta})} {(\ns \sqrt{\rho})^{\frac{2}{3}}}}.$$

\end{lemma}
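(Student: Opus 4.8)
\section*{Proof plan for Lemma~\ref{lem:PrivateFW}}

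The statement has two independent halves, and I would treat them separately: a \emph{privacy} half (the algorithm is $\rho$-zCDP) and a \emph{utility} half (the empirical risk bound under the prescribed choice of $T$). The algorithm is an instance of the Frank--Wolfe method in which the linear optimization step $\arg\min_{s\in S}\langle s,\nabla\cL(w;D)\rangle$ is replaced by a \emph{report-noisy-min}: independent Laplace noise is added to each candidate score $\langle s,\nabla\cL(w;D)\rangle$ before taking the minimizer. So the privacy will come from the report-noisy-min selection composed over the $T$ rounds, and the utility will come from running the standard curvature-based Frank--Wolfe convergence argument while carrying through the error introduced by the noisy selection.

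For privacy, the first step is to bound the per-coordinate sensitivity of the score vector. Since $\ell(\cdot;d)$ is $L_1$-Lipschitz with respect to $\norm{\cdot}$, its gradient satisfies $\norminf{\nabla\ell(\cdot;d)}\le L_1$, and by H\"older's inequality $\absv{\langle s,\nabla\ell(w;d)\rangle}\le \norm{s}\,\norminf{\nabla\ell(w;d)}\le \norm{\cC}L_1$ for every extreme point $s$. Because $\cL(w;D)=\tfrac1{\ns}\sum_j\ell(w;d^j)$, replacing a single datapoint changes each score $\langle s,\nabla\cL(w;D)\rangle$ by at most $\Delta=\tfrac{2L_1\norm{\cC}}{\ns}$. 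I would then invoke the standard guarantee that report-noisy-min with $\mathrm{Lap}(0,b)$ noise of scale $b=\tfrac{L_1\norm{\cC}\sqrt{T}}{\ns\sqrt{\rho}}$ is $\epsilon_0$-DP per round with $\epsilon_0=O(\Delta/b)=O(\sqrt{\rho/T})$. Converting via Lemma~\ref{lem:dpdefns}(1) gives $O(\rho/T)$-zCDP per round, and composing the $T$ rounds via Lemma~\ref{lem:dpcomp} yields $O(\rho)$-zCDP; tracking the constants (the factor of two in $\Delta$ and in the conversion) is exactly what makes it $\rho$-zCDP.

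For utility, I would set $h_t=\cL(w_t;D)-\cL(w^{erm};D)$ and combine three ingredients. The curvature bound gives
\[
\cL(w_{t+1};D)\le \cL(w_t;D)+\mu_t\langle \tilde w_t-w_t,\nabla\cL(w_t;D)\rangle+\tfrac{\mu_t^2}{2}\Gamma_{\ell}.
\]
Next, the noisy selection is only approximately optimal: writing $s^\star$ for the true minimizer, the chosen $\tilde w_t$ satisfies $\langle \tilde w_t,\nabla\cL\rangle\le\langle s^\star,\nabla\cL\rangle+2\max_s\absv{\mathrm{noise}_s}$, and convexity gives $\langle s^\star-w_t,\nabla\cL\rangle\le -h_t$, so $\langle \tilde w_t-w_t,\nabla\cL\rangle\le -h_t+2\max_s\absv{\mathrm{noise}_s}$. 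Finally, a union bound over the $T$ rounds (each a maximum of $\absv{S}$ independent $\mathrm{Lap}(0,b)$ variables) shows that with probability $1-\beta$ the quantity $G\coloneqq 2\max_s\absv{\mathrm{noise}_s}=O\!\Paren{b\log(\ns\absv{S}T/\beta)}$ bounds the selection error in every round. Substituting yields the perturbed recursion $h_{t+1}\le(1-\mu_t)h_t+\mu_t G+\tfrac{\mu_t^2}{2}\Gamma_{\ell}$ with $\mu_t=\tfrac{2}{t+2}$, whose solution is the standard $h_T=O(\Gamma_{\ell}/T)+O(G)$: the first term is the usual $O(\Gamma_\ell/T)$ Frank--Wolfe rate, and the additive $G$ is the irreducible floor imposed by the noisy linear optimization.

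It remains to balance $T$. Ignoring logarithmic factors, setting $\Gamma_\ell/T$ equal to $b=\tfrac{L_1\norm{\cC}\sqrt{T}}{\ns\sqrt{\rho}}$ gives $T\asymp\Paren{\tfrac{\Gamma_\ell \ns\sqrt{\rho}}{L_1\norm{\cC}}}^{2/3}$, matching (up to constants) the prescribed choice of $T$, and plugging back in gives
\[
h_T=O\!\Paren{\frac{\Gamma_{\ell}^{1/3}(L_1\norm{\cC})^{2/3}\log(\ns\absv{S}/\beta)}{(\ns\sqrt{\rho})^{2/3}}},
\]
as claimed (absorbing $\log T=O(\log \ns)$ into the logarithmic factor). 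The main obstacle is the utility half rather than the privacy half: one must correctly propagate the noisy-selection error $G$ through the Frank--Wolfe recursion and verify it produces a clean additive floor (rather than an accumulating or amplifying term), and one must establish the high-probability uniform bound on the Laplace maxima across all $T$ rounds so that the union bound costs only a $\log(\ns\absv{S}T/\beta)$ factor. The privacy half is comparatively routine, being a sensitivity calculation followed by the report-noisy-min guarantee and zCDP composition.
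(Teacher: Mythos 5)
Your proposal is correct, but it is substantially more self-contained than what the paper actually does. The paper's proof of Lemma~\ref{lem:PrivateFW} is three sentences long: it handles only the privacy half (each iteration adds Laplace noise calibrated so that it is $(\sqrt{\rho/T},0)$-DP, hence $\tfrac{\rho}{T}$-zCDP by Lemma~\ref{lem:dpdefns}, and composition via Lemma~\ref{lem:dpcomp} gives $\rho$-zCDP), and it imports the utility bound wholesale from~\cite{TalwarTZ14} with no argument. Your privacy half matches the paper's modulo the extra (correct and worthwhile) detail of the sensitivity calculation $\Delta = 2L_1\norm{\cC}/\ns$ justifying the noise scale. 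Your utility half, by contrast, re-derives the perturbed Frank--Wolfe analysis that the paper takes on faith: the curvature descent inequality, the noisy-selection error $G$ entering as an additive floor via the recursion $h_{t+1}\le(1-\mu_t)h_t+\mu_t G+\tfrac{\mu_t^2}{2}\Gamma_\ell$, the union bound over $T\absv{S}$ Laplace variables, and the balancing of $T$. What your route buys is a genuine verification that the original $(\eps,\delta)$-DP analysis of~\cite{TalwarTZ14} survives the change in noise calibration to the zCDP setting (noise scale proportional to $\sqrt{T}/\sqrt{\rho}$ rather than to $T$ times a per-round $\eps_0$), which the paper implicitly assumes; what the paper's route buys is brevity. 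One incidental finding of your derivation: balancing $\Gamma_\ell/T$ against the noise scale gives $T \asymp \Paren{\Gamma_\ell \ns\sqrt{\rho}/(L_1\norm{\cC})}^{2/3}$, i.e., $L_1^{2/3}$ in the denominator, whereas the lemma statement writes $L_1$ there; your choice is the one consistent with the final bound $O\Paren{\Gamma_\ell^{1/3}(L_1\norm{\cC})^{2/3}\log(\ns\absv{S}/\beta)/(\ns\sqrt{\rho})^{2/3}}$, so the exponent in the stated $T$ appears to be a typo that your argument silently corrects rather than a gap in your proof.
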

\begin{proof}
The utility guarantee is proved in~\cite{TalwarTZ14}. Therefore, it is enough to prove the algorithm satisfies $\rho$-zCDP.
According to the definition of the Laplace mechanism, every iteration of the algorithm satisfies $(\sqrt{\frac{\rho}{T}},0)$-DP, which naturally satisfies $\frac{\rho}{T}$-zCDP by Lemma~\ref{lem:dpdefns}. Then, by the composition theorem of zCDP (Lemma~\ref{lem:dpcomp}), the algorithm satisfies $\rho$-zCDP.
\end{proof}

If we consider the specific problem of sparse logistic regression, we will get the following corollary.
\begin{corollary}
\label{cor:EmpiricalError}

If we consider the problem of sparse logistic regression, i.e., $\cL(w;D) = \frac1{n} \sum_{j=1}^n \log(1+e^{-y^j \langle w, x^j \rangle})$, with the constraint that $\cC = \{w: \norm{w} \le \lambda\}$, and we further assume that $\forall j, \norminf{x^j} \le 1, y^j \in \{\pm 1\}$, let $T =  \lambda^{\frac{2}{3}} (\ns \sqrt{\rho})^{\frac{2}{3}}$,  then with probability at least $1-\beta$ over the randomness of the algorithm,
$$ \cL(w^{priv}; D) - \cL(w^{erm}; D) = O\Paren{ \frac{\lambda^{\frac{4}{3}}\log(\frac{ \ns\dims}{\beta}) } {(\ns \sqrt{\rho})^{\frac{2}{3}}}}.$$

Furthermore, the time complexity of the algorithm is $O(T \cdot \Paren{\ns \dims+\dims^2}) =O\Paren{ \ns^{\frac{2}{3}}\cdot \Paren{\ns\dims + \dims^2}}$.
\end{corollary}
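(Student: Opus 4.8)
The plan is to obtain this corollary by directly instantiating the general guarantee of Lemma~\ref{lem:PrivateFW} for the logistic loss over the $\ell_1$ ball; since Algorithm~\ref{alg:privateFW} is already shown to be $\rho$-zCDP there, the entire task reduces to evaluating the four problem-specific quantities entering that bound: the $\ell_1$-Lipschitz constant $L_1$, the set radius $\norm{\cC}$, the number of extreme points $\absv{S}$, and the curvature constant $\Gamma_\ell$. Once each is pinned down, the stated choice of $T$ and the claimed accuracy follow by substitution. Three of the four are immediate. For $L_1$, I would use that the $\ell_1$-Lipschitz constant equals $\sup \norminf{\nabla \ell}$, since $\ell_\infty$ is dual to $\ell_1$; differentiating $\ell(w;(x,y)) = \log(1+e^{-y\langle w,x\rangle})$ gives a gradient of the form $-y x\,\sigma(-y\langle w,x\rangle)$ with $\sigma$ the sigmoid, and as $\absv{y}=1$, $\sigma \in (0,1)$, and $\norminf{x}\le 1$ by assumption, we get $\norminf{\nabla\ell}\le 1$, so $L_1 = O(1)$. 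For the geometry, the ball $\cC = \{w : \norm{w}\le \lambda\}$ is the convex hull of the $2\dims$ signed scaled basis vectors $S = \{\pm\lambda e_i : i\in[\dims]\}$, so $\norm{\cC} = \lambda$ and $\absv{S} = 2\dims = O(\dims)$, which is precisely the source of the $\log(\ns\dims/\beta)$ term.

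The one step requiring real work, and the main obstacle, is bounding the curvature constant $\Gamma_\ell$. Here I would pass to a second-order estimate: the Hessian of the logistic loss is $\sigma(\langle w,x\rangle)(1-\sigma(\langle w,x\rangle))\,xx^\top$, whose scalar prefactor is at most $1/4$. Applying a second-order Taylor expansion to the numerator $\ell(w_3)-\ell(w_1)-\langle w_3-w_1,\nabla\ell(w_1)\rangle$ and bounding the resulting Hessian quadratic form via $(w_3-w_1)^\top \nabla^2\ell\,(w_3-w_1)\le \tfrac14\langle x,w_3-w_1\rangle^2 \le \tfrac14\norminf{x}^2\norm{w_3-w_1}^2$, then using $w_3-w_1 = \gamma(w_2-w_1)$ together with the $\ell_1$-diameter bound $\norm{w_2-w_1}\le 2\lambda$, the factors of $\gamma^2$ cancel against the $2/\gamma^2$ in the definition, leaving $\Gamma_\ell = O(\lambda^2)$.

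Plugging $L_1 = O(1)$, $\norm{\cC} = \lambda$, and $\Gamma_\ell = O(\lambda^2)$ into the prescribed $T = \Gamma_\ell^{2/3}(\ns\sqrt{\rho})^{2/3}/(L_1\norm{\cC}^{2/3})$ yields $T = \lambda^{2/3}(\ns\sqrt{\rho})^{2/3}$, and the same substitution into the error bound of Lemma~\ref{lem:PrivateFW}, with $\absv{S} = 2\dims$, collapses the numerator to $\lambda^{4/3}\log(\ns\dims/\beta)$, giving the stated accuracy. Finally, for the running time I would argue that each of the $T$ iterations is dominated by forming the gradient $\nabla\cL(w;D)$, which costs $O(\ns\dims)$ to compute all inner products $\langle w,x^j\rangle$ and aggregate them, with the remaining per-iteration vector operations and the linear-optimization step over $S$ absorbed into the stated $O(\ns\dims+\dims^2)$ figure; multiplying by $T = O(\ns^{2/3})$ (treating $\lambda,\rho$ as constants) gives the claimed $O(\ns^{2/3}(\ns\dims+\dims^2))$.
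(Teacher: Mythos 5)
Your proposal is correct and follows essentially the same route as the paper: instantiate Lemma~\ref{lem:PrivateFW} by computing $L_1 = O(1)$, $\norm{\cC} = \lambda$, $\absv{S} = 2\dims$, and $\Gamma_\ell = O(\lambda^2)$, then substitute to get the stated $T$, error bound, and per-iteration running time. The only difference is in the curvature step: the paper bounds $\Gamma_\ell$ by citing Remark 4 of~\cite{TalwarTZ14} (a Hessian operator-norm bound with $q=1$), whereas you derive $\Gamma_\ell \le \lambda^2$ directly via a second-order Taylor expansion and H\"older's inequality against the $\ell_1$-diameter of $\cC$ --- an equally valid and slightly more self-contained calculation.
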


\begin{proof}
First let we show $L_1 \le 2$. If we fix sample $d = (x,y)$, then for any $w_1, w_2\in \cC$,
$$\absv{ \ell(w_1;d) -\ell(w_2;d)} \le \max_w \norminf{\nabla_w \Paren{ \ell(w;d)}} \cdot \norm{w_1 - w_2}.$$
Since $\nabla_w \Paren{ \ell(w;d)} = \Paren{\sigma(\langle w, x \rangle) - y} \cdot x$, we have $\norminf{\nabla_w \Paren{ \ell(w;d)}} \le 2$.

  Next, we wish to show $\Gamma_{\ell} \le \lambda^2$.
  We use the following lemma from~\cite{TalwarTZ14}.
\begin{lemma}[Remark 4 in~\cite{TalwarTZ14}]
For any $q,r \ge 1$ such that $\frac1q+\frac1r=1$, $\Gamma_{\ell}$ is upper bounded by $\alpha {\left\lVert \cC \right\rVert_q}^2$, where $\alpha = \max_{w \in \cC, {\left\lVert v \right\rVert_q=1}}  {\left\lVert \nabla^2 \ell(w) \cdot v \right\rVert_q}$.
\end{lemma}

If we take $q=1, r = +\infty$, then $\Gamma_{\ell} \le \alpha \lambda^2$, where 
$$\alpha = \max_{w \in \cC, \norm{v}=1}   \norminf{ \nabla^2 \ell(w;d) \cdot v} \le \max_{i,j \in[\dims]} \Paren{ \nabla^2 \ell(w;d)}_{i,j}.$$
We have $\alpha \le 1$, since $\nabla^2 \ell(w;d) = \sigma(\langle w, x \rangle)  \Paren{1- \sigma(\langle w, x \rangle)} \cdot x x^T$, and $\norminf{x}\le 1$, 

Finally given $\cC = \{w: \norm{w}\le 1 \}$, the number of extreme points of $S$ equals $2\dims$. By replacing all these parameters in Lemma~\ref{lem:PrivateFW}, we have proved the loss guarantee in the corollary. 

With respect to the time complexity, we note that the time complexity of each iteration is $O\Paren{\ns\dims+\dims^2}$ and there are $T$ iterations in total.
\end{proof}

Now if we further assume the data set $D$ is drawn i.i.d.\ from some underlying distribution $P$, the following lemma from learning theory relates the true risk and the empirical risk, which shall be heavily used in the following sections.

\begin{theorem}
\label{thm:generalization_error}
If we consider the same problem setting and assumptions as in Corollary~\ref{cor:EmpiricalError}, and we further assume that the training data set $D$ is drawn i.i.d.\ from some unknown distribution $P$, then with probability at least $1-\beta$ over the randomness of the algorithm and the training data set,
$$ \expectationf{\ell(w^{priv};(X,Y))}{(X,Y)\sim P} - \expectationf{\ell(w^*;(X,Y))}{(X,Y)\sim P}  = O\Paren{ \frac{\lambda^{\frac{4}{3}}\log(\frac{ \ns\dims}{\beta}) } {(\ns \sqrt{\rho})^{\frac{2}{3}}} + \frac{\lambda \log\Paren{\frac1{\beta}}}{\sqrt{n}} },$$
where $w^* = \arg\min_{w\in C}\expectationf{\ell(w;(X,Y))}{(X,Y)\sim P} $.
\end{theorem}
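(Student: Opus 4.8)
The plan is to bound the excess population risk by combining the empirical‑risk guarantee of Corollary~\ref{cor:EmpiricalError} with two classical statistical‑learning arguments: a pointwise concentration bound at the fixed comparator $w^*$, and a uniform‑convergence bound over the constraint set $\cC$. Throughout, write $\cL_P(w) \coloneqq \expectationf{\ell(w;(X,Y))}{(X,Y)\sim P}$ for the population risk, so that $\cL(w;D)$ is its empirical counterpart. First I would use the decomposition
\[
\cL_P(w^{priv}) - \cL_P(w^*)
= \underbrace{\Big(\cL_P(w^{priv}) - \cL(w^{priv};D)\Big)}_{(\mathrm{I})}
+ \underbrace{\Big(\cL(w^{priv};D) - \cL(w^*;D)\Big)}_{(\mathrm{II})}
+ \underbrace{\Big(\cL(w^*;D) - \cL_P(w^*)\Big)}_{(\mathrm{III})}.
\]

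Term $(\mathrm{II})$ I would split as $\big(\cL(w^{priv};D) - \cL(w^{erm};D)\big) + \big(\cL(w^{erm};D) - \cL(w^*;D)\big)$: the first summand is exactly the empirical excess risk controlled by Corollary~\ref{cor:EmpiricalError}, contributing $O\Paren{\lambda^{4/3}\log(\ns\dims/\beta)/(\ns\sqrt{\rho})^{2/3}}$ with probability $\ge 1-\beta$ over the algorithm's coins, while the second summand is nonpositive since $w^{erm}$ minimizes the empirical risk over $\cC$ and $w^* \in \cC$. Term $(\mathrm{III})$ is easy because $w^*$ is fixed (independent of the sample), so $\cL(w^*;D)$ is an average of $\ns$ i.i.d.\ terms with mean $\cL_P(w^*)$; since $\norm{w^*}\le\lambda$ and $\norminf{x}\le 1$ give $\absv{\langle w^*,x\rangle}\le\lambda$, each loss lies in $[0,\log(1+e^\lambda)]\subseteq[0,\lambda+\log 2]$, and Hoeffding's inequality yields $\absv{\cL(w^*;D)-\cL_P(w^*)} = O\Paren{\lambda\sqrt{\log(1/\beta)/\ns}}$ with probability $\ge 1-\beta$ over the data.

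The main work, and the main obstacle, is term $(\mathrm{I})$: the generalization gap evaluated at the data‑ and algorithm‑dependent point $w^{priv}$. Because $w^{priv}$ depends on $D$, pointwise concentration does not apply, so instead I would bound $\sup_{w\in\cC}\big(\cL_P(w)-\cL(w;D)\big)$, which suffices since the Frank--Wolfe iterate remains in $\cC=\operatorname{conv}(S)$. The route is the standard symmetrization argument: by McDiarmid's bounded‑differences inequality (using the $O(\lambda)$ range of the loss) together with symmetrization, $\sup_{w\in\cC}\big(\cL_P(w)-\cL(w;D)\big) \le 2\,\mathfrak{R}_\ns(\mathcal{F}) + O\Paren{\lambda\sqrt{\log(1/\beta)/\ns}}$, where $\mathcal{F}=\{(x,y)\mapsto \ell(w;(x,y)):w\in\cC\}$. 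Since $z\mapsto\log(1+e^{-z})$ is $1$‑Lipschitz and $w\mapsto y\langle w,x\rangle$ is linear (with $y\in\{\pm1\}$ absorbed by Rademacher symmetry), the Ledoux--Talagrand contraction principle bounds $\mathfrak{R}_\ns(\mathcal{F})$ by the Rademacher complexity of the linear class $\{x\mapsto\langle w,x\rangle:\norm{w}\le\lambda\}$, and the standard $\ell_1/\ell_\infty$ estimate gives $\mathfrak{R}_\ns=O\Paren{\lambda\sqrt{\log\dims/\ns}}$ under $\norminf{x}\le1$.

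Finally I would add the three bounds and take a union bound over the (independent) algorithm‑randomness and sample‑randomness failure events, rescaling $\beta$ by a constant. The resulting generalization penalty $O\Paren{\lambda\big(\sqrt{\log\dims}+\sqrt{\log(1/\beta)}\big)/\sqrt{\ns}}$ matches, up to the precise logarithmic factor, the claimed $\lambda\log(1/\beta)/\sqrt{\ns}$ term, and the empirical term from $(\mathrm{II})$ supplies the first term of the statement. The points warranting care are the uniform boundedness of the loss on $\cC$ (which feeds both Hoeffding and McDiarmid) and the applicability of the contraction step; the rest is bookkeeping.
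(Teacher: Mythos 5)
Your proposal is correct and takes essentially the same route as the paper: the paper uses the identical decomposition (generalization gap at $w^{priv}$, empirical excess risk bounded by Corollary~\ref{cor:EmpiricalError}, the nonpositive term $\cL(w^{erm};D)-\cL(w^*;D)$, and the gap at the fixed comparator $w^*$). The only difference is that where you re-derive the uniform convergence bound from scratch via McDiarmid, symmetrization, contraction, and the Rademacher complexity of the $\ell_1$ ball, the paper cites it as a black box (Lemma 7 of~\cite{WuSD19}, whose proof is exactly this argument), and the extra $\lambda\sqrt{\log\dims/\ns}$ factor you flag is inherited from that lemma as well and is harmless since every application of the theorem takes $\beta \le 1/(10\dims)$, so $\log(1/\beta)$ dominates $\sqrt{\log \dims}$.
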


\begin{proof}
 By triangle inequality,
\begin{align*}
&~\expectationf{\ell(w^{priv};(X,Y))}{(X,Y)\sim P} - \expectationf{\ell(w^*;(X,Y))}{(X,Y)\sim P} \nonumber\\
\le& \absv{ \expectationf{\ell(w^{priv};(X,Y))}{(X,Y)\sim P} - \frac1{\ns} \sum_{m=1}^{\ns} \ell(w^{priv};d^m) } + \absv{  \frac1{\ns}\sum_{m=1}^{\ns} \ell(w^{priv};d^m) - \frac1{\ns}\sum_{m=1}^{\ns} \ell(w^{erm};d^m) } \nonumber\\
+& \Paren{  \frac1{\ns}\sum_{m=1}^{\ns} \ell(w^{erm};d^m) - \frac1{\ns}\sum_{m=1}^{\ns} \ell(w^*;d^m) } + \absv{ \expectationf{\ell(w^{*};(X,Y))}{(X,Y)\sim P} - \frac1{\ns}\sum_{m=1}^{\ns} \ell(w^{*};d^m) } \nonumber
\end{align*}

Now we need to bound each term. We firstly bound the first and last term simultaneously. By the generalization error bound (Lemma 7 from~\cite{WuSD19}), they are bounded by $O\Paren{ \frac{\lambda \log\Paren{\frac1{\beta}}}{\sqrt{n}}}$ simultaneously, with probability greater than $1 - \frac{2}{3}\beta$. Then we turn to the second term, by Corollary~\ref{cor:EmpiricalError}, with probability greater than $1 - \frac{1}{3}\beta $, it is bounded by $ O\Paren{ \frac{\lambda^{\frac{4}{3}}\log(\frac{ \ns\dims}{\beta}) } {(\ns \sqrt{\rho})^{\frac{2}{3}}}}$. Finally we bound the third term. According to the definition of $w^{erm}$, the third term should be smaller than 0.
Therefore, by union bound, $ \expectationf{\ell(w^{priv};(X,Y))}{(X,Y)\sim P} - \expectationf{\ell(w^*;(X,Y))}{(X,Y)\sim P}  = O\Paren{ \frac{\lambda^{\frac{4}{3}}\log(\frac{ \ns\dims}{\beta}) } {(\ns \sqrt{\rho})^{\frac{2}{3}}} + \frac{\lambda \log\Paren{\frac1{\beta}}}{\sqrt{n}} }$, with probability greater than $1-\beta$.
\end{proof}


\subsection{Privately Learning Ising Models}

We first consider the problem of estimating the weight matrix of the Ising model. 
To be precise, given $\ns$ i.i.d.\ samples $ \{z^1, \cdots, z^{\ns}\}$ generated from an unknown distribution $\cD(A,\theta)$, our goal is to design an $\rho$-zCDP estimator $\hat{A}$ such that with probability at least $\frac{2}{3}$, $\max_{i,j\in[\dims]}\absv{A_{i,j} - \hat{A}_{i,j}} \le \dist$.

An observation of the Ising model is that for any node $Z_i$, the probability of $Z_i=1$ conditioned on the values of the remaining nodes $Z_{-i}$ follows from a sigmoid function. The next lemma comes from~\cite{KlivansM17}, which formalizes this observation.

\begin{lemma}
\label{lem:IsingToLR}
Let $Z \sim \cD(A,\theta)$ and $Z \in \{-1,1\}^{\dims}$, then $\forall i \in [\dims]$, $\forall x \in \{-1,1\}^{[\dims] \backslash \{i\}}$,
\begin{align}
\probof{Z_i=1|Z_{-i}=x} &= \sigma\Paren{ \sum_{j\neq i}2A_{i,j}x_j + 2\theta_i} = \sigma\Paren{\langle w, x^{\prime} \rangle}.\nonumber
\end{align}
where $w= 2[A_{i,1},\cdots, A_{i,i-1}, A_{i,i+1},\cdots, A_{i,\dims},\theta_i]  \in \RR^{\dims}$, and $x^{\prime} = [x,1] \in \RR^{\dims}$.
\end{lemma}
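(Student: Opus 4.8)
The plan is to compute the conditional probability directly from the definition of the Ising model (Definition~\ref{def:ising}) and show that the unnormalized weights reduce to a sigmoid once the normalizing constant and all $z_i$-independent factors cancel. First I would write the conditional probability as the ratio
\[
\probof{Z_i = 1 \mid Z_{-i} = x} = \frac{\probof{Z_i = 1, Z_{-i} = x}}{\probof{Z_i = 1, Z_{-i} = x} + \probof{Z_i = -1, Z_{-i} = x}},
\]
and substitute the unnormalized density $\exp\Paren{\sum_{a \le b} A_{a,b} z_a z_b + \sum_a \theta_a z_a}$ for each joint probability. The key observation is that the intractable partition function appears identically in every term and hence cancels.

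The main step is isolating the dependence on $z_i$. I would split the exponent into the part that multiplies $z_i$ and the part that does not. Using the symmetry of $A$ and the convention $A_{ii} = 0$, the terms in $\sum_{a \le b} A_{a,b} z_a z_b$ that involve $z_i$ (namely those with $a = i$ or $b = i$) combine into $z_i \sum_{j \neq i} A_{i,j} z_j$, and together with the mean-field contribution $\theta_i z_i$ the full $z_i$-dependent part of the exponent is $z_i \cdot g$, where $g \coloneqq \sum_{j \neq i} A_{i,j} x_j + \theta_i$ after conditioning on $Z_{-i} = x$. All remaining terms form a common factor $e^{C}$ shared by the numerator and by both summands of the denominator.

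After this cancellation the ratio becomes $\frac{e^{g}}{e^{g} + e^{-g}} = \frac{1}{1 + e^{-2g}} = \sigma(2g)$, which is exactly $\sigma\Paren{\sum_{j \neq i} 2 A_{i,j} x_j + 2\theta_i}$. Finally I would match this to the inner-product form by observing that with $w = 2[A_{i,1}, \dots, A_{i,i-1}, A_{i,i+1}, \dots, A_{i,\dims}, \theta_i]$ and $x' = [x, 1]$ one has $\langle w, x' \rangle = \sum_{j \neq i} 2 A_{i,j} x_j + 2\theta_i$, which completes the identity.

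I do not anticipate a genuine obstacle here, since the statement is a direct computation rather than an argument with moving parts; the only point requiring care is the bookkeeping for the symmetric weight matrix under the $i \le j$ summation convention, which is precisely what yields the correct linear coefficient $g$ and, through the ratio of exponentials, the factor of $2$ inside the sigmoid.
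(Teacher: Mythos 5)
Your proposal is correct and follows essentially the same route as the paper's proof: both compute $\probof{Z_i = 1 \mid Z_{-i} = x}$ directly from the definition as a ratio of exponentials, cancel the partition function and all $z_i$-independent factors, and identify $\frac{e^{g}}{e^{g}+e^{-g}} = \sigma(2g)$ with $g = \sum_{j \neq i} A_{i,j}x_j + \theta_i$. Your write-up is in fact slightly more careful than the paper's (which silently drops the cancelled pairwise terms among coordinates other than $i$), but there is no substantive difference in the argument.
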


\begin{proof}
The proof is from~\cite{KlivansM17}, and we include it here for completeness. 
 According to the definition of the Ising model,
\begin{align}
\probof{Z_i=1|Z_{-i}=x} &= \frac{\exp \Paren{\sum\limits_{j\neq i}A_{i,j}x_j +\sum\limits_{j\neq i} \theta_j + \theta_i} }{\exp \Paren{\sum\limits_{j\neq i} A_{i,j}x_j +\sum_{j\neq i} \theta_j + \theta_i}+\exp\Paren{\sum_{j\neq i}-A_{i,j}x_j +\sum_{j\neq i} \theta_j - \theta_i}}\nonumber\\
& =\sigma\Paren{ \sum_{j\neq i}2A_{i,j}x_j + 2\theta_i}.\nonumber
\end{align}
\end{proof}

By Lemma~\ref{lem:IsingToLR}, we can estimate the weight matrix by
solving a logistic regression for each node, which is utilized
in~\cite{WuSD19} to design non-private estimators. Our algorithm uses
the private Frank-Wolfe method to solve the per-node logistic regression
problem, achieving the following theoretical guarantee.

\begin{algorithm}[H]

\KwIn{ $\ns$ samples $\{ z^1,\cdots, z^{\ns}\}$, where $z^m \in \{\pm1\}^{\dims}$ for $m\in [\ns]$; an upper bound on $\lambda(A,\theta) \le \lambda$, privacy parameter $\rho$}

\For{$i=1$ to $\dims$}{

$\forall m \in [\ns]$, $x^m \leftarrow [z_{-i}^{m},1]$, $y^m \leftarrow z_i^m$

$w^{priv} \leftarrow \cA_{PFW}( D, \cL, \rho^{\prime}, \cC)$, where $\rho^{\prime} = \frac{\rho}{\dims}$,   $D=\{\Paren{x^m, y^m} \}_{m=1}^{\ns}$, $\cL(w;D) =  \frac1{\ns}\sum_{m=1}^{\ns}\log\Paren{1+e^{-y^m \langle w, x^m \rangle}}$, $\cC = \{ \norm{w}\le 2\lambda\}$

$\forall j \in \dims$, $\hat{A}_{i,j} \leftarrow \frac1{2} w^{priv}_{\tilde{j}}$, where $\tilde{j} =j $ when $j<i$ and $\widetilde{j}  = j-1$ if $j>i$

}

\KwOut{$\hat{A} \in \RR^{\dims \times \dims}$}

\caption{Privately Learning Ising Models}
\label{alg:DPIsing}
\end{algorithm}

\begin{theorem}
\label{thm:est-ub-ising}
Let $\cD(A,\theta)$ be an unknown $\dims$-variable Ising model with $\lambda(A,\theta)\le \lambda$.
There exists an efficient $\rho$-zCDP algorithm which outputs a weight matrix $\hat{A} \in \RR^{\dims \times \dims}$ such that 
with probability greater than $2/3$, $\max_{i,j \in [\dims]} \absv{A_{i,j} - \hat{A}_{i,j}} \le \dist$ if the number of i.i.d.\ samples satisfies
$$\ns = \Omega\Paren{\frac{ \lambda^2 \log(\dims) e^{12\lambda}}{\dist^4}+ \frac{\sqrt{\dims} \lambda^2 \log^2(\dims)e^{9\lambda}}{\sqrt{\rho} \alpha^3}}.$$
\end{theorem}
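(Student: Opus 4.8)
The plan is to analyze Algorithm~\ref{alg:DPIsing} in three stages: privacy by composition, a per-node reduction to the sparse-logistic-regression guarantee already in hand, and a structural ``risk-to-recovery'' step converting small excess risk into a coordinate-wise parameter bound. Privacy is immediate: the algorithm runs $\cA_{PFW}$ once per node on the shared sample set, each call with budget $\rho'=\rho/\dims$ and hence $\rho'$-zCDP by Lemma~\ref{lem:PrivateFW}; since the $\dims$ calls all touch the same dataset, sequential composition (Lemma~\ref{lem:dpcomp}) yields $\sum_{i=1}^{\dims}\rho' = \rho$-zCDP overall.

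For accuracy, fix a node $i$. By Lemma~\ref{lem:IsingToLR} the conditional law of $Z_i$ given $Z_{-i}$ is exactly $\sigma(\langle w^*, x'\rangle)$ with $w^* = 2[A_{i,1},\dots,A_{i,\dims},\theta_i]$ (coordinate $i$ removed) and $x'=[z_{-i},1]$, so the per-node problem is a \emph{realizable} instance of the sparse logistic regression of Corollary~\ref{cor:EmpiricalError}, with $\norminf{x'}\le 1$ and label in $\{\pm1\}$. The width bound gives $\norm{w^*} = 2\Paren{\sum_{j\ne i}\absv{A_{i,j}} + \absv{\theta_i}} \le 2\lambda$, so $w^*$ lies in the constraint set $\cC$ of radius $2\lambda$ and is the population risk minimizer of Theorem~\ref{thm:generalization_error}. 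Invoking that theorem with $\rho'=\rho/\dims$ and failure probability $\beta=\Theta(1/\dims)$, the private output $w^{priv}$ satisfies, with the claimed probability,
\[
\gamma \coloneqq \E\Brack{\ell(w^{priv};(X,Y))} - \E\Brack{\ell(w^*;(X,Y))} = O\Paren{\frac{\lambda^{4/3}\log(\ns\dims/\beta)\,\dims^{1/3}}{(\ns\sqrt{\rho})^{2/3}} + \frac{\lambda\log(1/\beta)}{\sqrt{\ns}}},
\]
where I have substituted $\sqrt{\rho'}=\sqrt{\rho/\dims}$.

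The crux is converting this excess risk into an $\ell_\infty$ bound on $v\coloneqq w^{priv}-w^*$, which is where $\delta$-unbiasedness of the Ising model (Lemma~\ref{lem:pairwise-unbiased}, $\delta=e^{-2\lambda}/2$) enters; I would proceed in two moves. First, realizability makes the excess risk equal $\E\Brack{\mathrm{KL}(\sigma(\langle w^*,x'\rangle)\,\|\,\sigma(\langle w^{priv},x'\rangle))}$, which Pinsker lower bounds by $2\,\E\Brack{(\sigma(\langle w^*,x'\rangle)-\sigma(\langle w^{priv},x'\rangle))^2}$. Since both inner products lie in $[-2\lambda,2\lambda]$ and $\sigma'$ is minimized at the endpoints with $\sigma'(2\lambda)=\Omega(e^{-2\lambda})$, the mean-value theorem gives $(\sigma(\langle w^*,x'\rangle)-\sigma(\langle w^{priv},x'\rangle))^2 = \Omega(e^{-4\lambda})\langle v,x'\rangle^2$, so $\E\Brack{\langle v,x'\rangle^2}=O(e^{4\lambda}\gamma)$. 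Second, to extract a single non-bias coordinate $j$ (a genuine $\pm1$ variable $z_j$, $j\ne i$; the bias coordinate encodes $\theta_i$, which the Ising parameter-learning goal does not require), condition on $x'_{-j}$: the conditional variance of $\langle v,x'\rangle$ equals $v_j^2\,\mathrm{Var}(x'_j\mid x'_{-j}) \ge 4\delta(1-\delta)v_j^2$ by $\delta$-unbiasedness, whence $v_j^2 = O(e^{2\lambda})\,\E\Brack{\langle v,x'\rangle^2} = O(e^{6\lambda}\gamma)$ and $\absv{v_j}=O(e^{3\lambda}\sqrt{\gamma})$. As $\hat A_{i,j}=\tfrac12 w^{priv}_{\tilde j}$ and $A_{i,j}=\tfrac12 w^*_{\tilde j}$, this yields $\absv{\hat A_{i,j}-A_{i,j}}=O(e^{3\lambda}\sqrt{\gamma})$ for every $j\ne i$.

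It remains to choose $\ns$. Demanding $O(e^{3\lambda}\sqrt{\gamma})\le\alpha$, i.e.\ $\gamma=O(\alpha^2 e^{-6\lambda})$, and forcing each term of the displayed bound below this threshold, the $\lambda\log(1/\beta)/\sqrt{\ns}$ term gives $\ns=\Omega(\lambda^2\log(\dims)e^{12\lambda}/\alpha^4)$ and the private Frank--Wolfe term gives $\ns=\Omega(\sqrt{\dims}\lambda^2\log^2(\dims)e^{9\lambda}/(\sqrt{\rho}\alpha^3))$, matching the statement after absorbing $\beta=\Theta(1/\dims)$ into the logarithms; a union bound over the $\dims$ nodes completes the argument. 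I expect the structural risk-to-recovery step to be the main obstacle: pinning the exponent of $e^{\lambda}$ exactly hinges on pairing Pinsker's inequality with the sigmoid-Lipschitz bound on the range $[-2\lambda,2\lambda]$ and the per-coordinate conditional-variance lower bound from $\delta$-unbiasedness, since a cruder second-order/Hessian estimate loses a factor of $e^{\lambda}$ and would not reproduce the stated $e^{12\lambda}$ and $e^{9\lambda}$.
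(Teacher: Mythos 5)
Your proposal is correct, and its overall skeleton coincides with the paper's proof: privacy by composing $\dims$ calls to $\cA_{PFW}$ at budget $\rho/\dims$ (Lemma~\ref{lem:dpcomp}), a per-node reduction to realizable sparse logistic regression via Lemma~\ref{lem:IsingToLR} with $\norm{w^*}\le 2\lambda$, an invocation of Theorem~\ref{thm:generalization_error} with $\rho'=\rho/\dims$ and $\beta=\Theta(1/\dims)$, the choice $\gamma=\Theta(\alpha^2 e^{-6\lambda})$, and a union bound over nodes. Where you genuinely depart from the paper is the risk-to-recovery step, which you correctly identified as the crux. The paper does not prove this step from scratch: it imports a black-box lemma from~\cite{WuSD19} (stated as Lemma~\ref{lem:parameter-error-ising} inside the proof), which says that under $\delta$-unbiasedness of the covariate marginal, excess logistic risk $\gamma$ implies $\norminf{w^{priv}-w^*}=O(e^{\norm{u_1}+\absv{\theta_1}}\sqrt{\gamma/\delta})$, and combines it with Lemmas~\ref{lem:marginal} and~\ref{lem:pairwise-unbiased}. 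You instead re-derive this implication directly: realizability turns excess risk into expected KL between conditional Bernoullis, Pinsker lower-bounds this by squared sigmoid differences, the mean-value theorem with $\sigma'=\Omega(e^{-2\lambda})$ on $[-2\lambda,2\lambda]$ converts to $\expectation{\langle v,x'\rangle^2}=O(e^{4\lambda}\gamma)$, and the conditional-variance bound $\mathrm{Var}(x'_j\mid x'_{-j})\ge 4\delta(1-\delta)$ from $\delta$-unbiasedness ($\delta=e^{-2\lambda}/2$) extracts each coordinate, giving $\absv{v_j}=O(e^{3\lambda}\sqrt{\gamma})$ --- numerically identical to the cited lemma's bound, hence the same exponents $e^{12\lambda}$ and $e^{9\lambda}$. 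Your route is self-contained and arguably cleaner: it needs no side condition analogous to the cited lemma's requirement $\gamma\le\delta e^{-2\norm{u_1}-2\absv{\theta_1}-6}$ (which the paper's proof satisfies only implicitly via its final choice of $\gamma$), and it makes transparent exactly where each factor of $e^{\lambda}$ arises; the paper's route is shorter by leveraging prior work. The only blemishes in your write-up are log-factor bookkeeping (e.g., $\log^{3/2}$ vs.\ $\log^{2}$ after substituting $\beta=\Theta(1/\dims)$), but the paper's own accounting is loose at exactly the same level, so this is not a gap.
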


\begin{proof}
We first prove that Algorithm~\ref{alg:DPIsing}  satisfies $\rho$-zCDP. Notice that in each iteration, the algorithm solves a private sparse logistic regression under $\frac{\rho}{\dims}$-zCDP. Therefore, Algorithm~\ref{alg:DPIsing} satisfies $\rho$-zCDP  by composition (Lemma~\ref{lem:dpcomp}).

For the accuracy analysis, we start by looking at the first iteration  ($i=1$) and showing that $\absv{A_{1,j} - \hat{A}_{1,j}} \le \dist$, $\forall j \in [\dims]$, with probability greater than $1-\frac1{10\dims}$. 

Given a random sample $Z \sim \cD(A,\theta)$, we let  $X = [Z_{-1},1]$, $Y = Z_1$. From Lemma~\ref{lem:IsingToLR}, $\probof{Y=1|X=x} = \sigma\Paren{\langle w^*, x\rangle}$, where $w^* = 2[A_{1,2},\cdots,A_{1,\dims},\theta_1]$. We also note that $\norm{w^*}\le 2\lambda$, as a consequence of the width constraint of the Ising model.

For any $\ns$ i.i.d.\ samples $\{ z^{m}\}_{m=1}^{\ns}$ drawn from the Ising model, let $x^m = [z_{-1}^{m},1]$ and $y^m = z_1^m$, it is easy to check that each $(x^m,y^m)$ is the realization of $(X,Y)$. Let $w^{priv}$ be the output of  $\cA\Paren{ D, \cL, \frac{\rho}{\dims},  \{w: \norm{w}\le 2\lambda \}}$, where $D = \{(x^m,y^m)\}_{m=1}^{\ns}$.
By Lemma~\ref{thm:generalization_error}, when $\ns =O\Paren{ \frac{ \sqrt{\dims} \lambda^2 \log^2(\dims) }{\sqrt{\rho} \gamma^{\frac{3}{2}}}+ \frac{ \lambda^2 \log(\dims)}{\gamma^2} }$, with probability greater than $1-\frac{1}{10\dims}$, $ \expectationf{\ell(w^{priv};(X,Y))}{Z\sim \cD(A,\theta)} - \expectationf{\ell(w^*;(X,Y))}{Z\sim \cD(A,\theta)} \le \gamma.$

We will use  the following lemma from~\cite{WuSD19}. Roughly speaking, with the assumption that the samples are generated from an Ising model, any estimator $w^{priv}$ which achieves a small error in the loss $\cL$  guarantees an accurate parameter recovery in $\ell_{\infty}$ distance.

\begin{lemma}
\label{lem:parameter-error-ising}
Let $P$ be a distribution on $\{-1,1\}^{\dims-1} \times \{-1,1\}$. Given $u_1\in \RR^{\dims-1}, \theta_1\in \RR$, suppose $\probof{Y=1|X=x} = \sigma\Paren{\langle u_1,x \rangle+\theta_1}$ for $(X,Y) \sim P$. 
If the marginal distribution of $P$ on $X$ is $\delta$-unbiased, 
and $\expectationf{\log\Paren{1+e^{-Y \Paren{ \langle u_1, X \rangle+\theta_1}}} } {(X,Y)\sim P} - \expectationf{ \log\Paren{1+e^{-Y \Paren{ \langle u_2, X \rangle+\theta_2}}} }{(X,Y)\sim P} \le \gamma$ for some $u_2\in \RR^{\dims-1}, \theta_2\in \RR$, and $\gamma \le \delta e^{-2\norm{u_1} -2\norm{\theta_1}-6}$, then $\norminf{u_1-u_2} = O(e^{\norm{u_1} +\norm{\theta_1}} \cdot \sqrt{\gamma/\delta}).$
\end{lemma}

By Lemma~\ref{lem:marginal}, Lemma~\ref{lem:pairwise-unbiased} and Lemma~\ref{lem:parameter-error-ising}, if $ \expectationf{\ell(w^{priv};(X,Y))}{Z\sim \cD(A,\theta)} - \expectationf{\ell(w^*;(X,Y))}{Z\sim \cD(A,\theta)} \le O\Paren{\alpha^2 e^{-6\lambda}}$, we have $\norminf{w^{priv} - w^*} \le \alpha$. By replacing $\gamma = \alpha^2 e^{-6\lambda}$, we prove that $\norminf{A_{1,j} - \hat{A}_{1,j}} \le \dist$ with probability greater than $1-\frac1{10\dims}$. Noting that similar argument works for the other iterations and non-overlapping part of the matrix is recovered in different iterations. By union bound over $\dims$ iterations, we prove that with probability at least $\frac{2}{3}$, $\max_{i,j\in[\dims]}\absv{A_{i,j} - \hat{A}_{i,j}} \le \dist$.

Finally, we note that the time compexity of the algorithm is $poly(\ns, \dims)$ since the private Frank-Wolfe algorithm is time efficient by Corollary~\ref{cor:EmpiricalError}.
\end{proof}

\subsection{Privately Learning Pairwise Graphical Models}
\label{sec:ub-pair}

Next, we study parameter learning for pairwise graphical models over general alphabet. Given $\ns$ i.i.d.\ samples $ \{z^1, \cdots, z^{\ns}\}$ drawn from an unknown distribution $\cD(\cW,\Theta)$, we want to design an $\rho$-zCDP estimator $\hat{\cW}$ such that with probability at least $\frac{2}{3}$, $\forall i \neq j \in [\dims], \forall \posa, \posb \in [\ab], \absv{W_{i,j}(\posa, \posb) - \widehat{W}_{i,j}(\posa, \posb)} \le \dist$. To facilitate our presentation, we assume that $\forall i \neq j \in [\dims]$, every row (and column) vector of $W_{i,j}$ has zero mean.\footnote{The assumption that $W_{i,j}$ is centered is without loss of generality and widely used in the literature~\cite{KlivansM17, WuSD19}. We present the argument here for completeness. Suppose the $a$-th row of $W_{i,j}$ is not centered, i.e., $\sum_{b} W_{i,j} (a,b) \neq 0$, we can define $W^{\prime}_{i,j} (a,b) = W_{i,j} (a,b)  - \frac1k \sum_{b} W_{i,j} (a,b)$ and $\theta^{\prime}_{i}(a)  = \theta_i(a)+\frac1k \sum_{b} W_{i,j} (a,b)$, and the probability distribution remains unchanged.}

Analogous to Lemma~\ref{lem:IsingToLR} for the Ising model, a pairwise graphical model has the following property, which can be utilized to recover its parameters. 
\begin{lemma}[Fact 2 of~\cite{WuSD19}]
\label{lem:PairwiseToLR}
Let $Z \sim \cD(\cW,\Theta)$ and $Z \in [\ab]^\dims$. For any $i \in [\dims]$, any $\posa \neq \posb \in [\ab]$, and any $x \in [\ab]^{\dims-1}$,

$$\probof{Z_i  =\posa | Z_i \in \{\posa, \posb\}, Z_{-i}=x }  = \sigma \Paren{\sum_{j\neq i}\Paren{W_{i,j}(\posa, x_j) -W_{i,j}(\posb, x_j) }+\theta_i(\posa)-\theta_i(\posb)}.
$$
\end{lemma}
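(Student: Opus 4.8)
The plan is to compute the conditional probability directly from the definition of the pairwise graphical model, mirroring the computation in Lemma~\ref{lem:IsingToLR} for the Ising case. First I would expand the conditional probability as a ratio of two joint-probability terms: since the event $\{Z_i = \posa, Z_{-i} = x\}$ is contained in the conditioning event $\{Z_i \in \{\posa,\posb\}, Z_{-i} = x\}$, we have
$$\probof{Z_i = \posa \mid Z_i \in \{\posa, \posb\}, Z_{-i} = x} = \frac{\probof{Z_i = \posa, Z_{-i} = x}}{\probof{Z_i = \posa, Z_{-i} = x} + \probof{Z_i = \posb, Z_{-i} = x}}.$$
Each of these joint probabilities is, up to the shared partition function, the exponential of $\sum_{1\le k\le l\le\dims} W_{k,l}(z_k,z_l) + \sum_k \theta_k(z_k)$ evaluated at the corresponding assignment.

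The key observation is that in this ratio the normalizing constant cancels, and so does every term of the exponent that does not depend on the value of coordinate $i$. After this cancellation only the terms touching coordinate $i$ survive, namely the pairwise interactions $W_{i,j}(z_i, x_j)$ for $j \neq i$ together with the mean-field term $\theta_i(z_i)$. Collecting these, I would rewrite the ratio as
$$\frac{\exp\Paren{\sum_{j \neq i} W_{i,j}(\posa, x_j) + \theta_i(\posa)}}{\exp\Paren{\sum_{j \neq i} W_{i,j}(\posa, x_j) + \theta_i(\posa)} + \exp\Paren{\sum_{j \neq i} W_{i,j}(\posb, x_j) + \theta_i(\posb)}}.$$

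Finally I would divide numerator and denominator by the numerator and recognize the result as $\sigma(u) = 1/(1+e^{-u})$ evaluated at $\sum_{j\neq i}\Paren{W_{i,j}(\posa, x_j) - W_{i,j}(\posb, x_j)} + \theta_i(\posa) - \theta_i(\posb)$, which is exactly the claimed expression. There is no genuine obstacle here: the argument is a routine conditioning calculation, and the only point requiring care is the bookkeeping — verifying that precisely the terms involving coordinate $i$ survive the cancellation, and using the symmetry $W_{i,j} = W_{j,i}^T$ to correctly gather both orderings of each interacting pair out of the ordered sum $\sum_{1\le k\le l\le\dims}$.
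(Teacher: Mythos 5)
Your proposal is correct, and it is essentially the paper's approach: the paper itself states this lemma without proof (importing it as Fact 2 of~\cite{WuSD19}), but your conditioning-and-cancellation computation is exactly the same calculation the paper carries out for the Ising special case (Lemma~\ref{lem:IsingToLR}), generalized to the $[\ab]$-alphabet setting with the symmetry $W_{i,j}=W_{j,i}^T$ used to collect both orderings from the ordered sum. No gaps.
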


Now we introduce our algorithm. Without loss of generality, we consider estimating $W_{1,j}$ for all $j \in [\dims]$ as a running example. We fix a pair of values $(\posa, \posb)$, where $\posa,\posb \in [\ab]$ and $\posa \neq \posb$. Let $S_{\posa,\posb}$ be the samples where $Z_1  \in \{\posa,\posb \}$. In order to utilize Lemma~\ref{lem:PairwiseToLR}, we perform the following transformation on the samples in $S_{\posa,\posb}$: for the $m$-th sample $z^m$, let $y^m =1$ if $z_1^m = \posa$, else $y^m = -1$. And $x^m$ is the one-hot encoding of the vector $[z^m_{-1},1]$, where $\text{OneHotEncode}(s)$ is a mapping from $[\ab]^\dims$  to $\RR^{\dims \times \ab}$, and the $i$-th row is the $t$-th standard basis vector given $s_i = t$. Then we define $w^* \in \RR^{\dims \times \ab}$ as follows:
\begin{align} 
&w^*(j,\cdot) = W_{1,j+1} (\posa, \cdot) -W_{1,j+1}(\posb, \cdot), \forall j \in [\dims-1] ;  \nonumber\\
&w^*(\dims,\cdot) = [\theta_1(\posa)-\theta_1(\posb),0,\cdots,0].\nonumber
\end{align}
Lemma~\ref{lem:PairwiseToLR} implies that $\forall t$, $\probof{Y^t=1} = \sigma\Paren{\langle w^*, X^t \rangle}$, where $\langle \cdot ,\cdot \rangle$ is the element-wise multiplication of matrices. According to the definition of the width of $\cD(\cW,\Theta)$, $\norm{w^*}\le \lambda \ab$. Now we can apply the sparse logistic regression method of Algorithm~\ref{alg:DPPair} to the samples in $S_{\posa,\posb}$.

Suppose $w^{priv}_{\posa,\posb}$ is the output of the private Frank-Wolfe algorithm, we define $U_{\posa, \posb} \in \RR^{\dims \times \ab}$ as follows: $\forall b \in [\ab]$,
\begin{align}
\label{equ:centering}
&U_{\posa, \posb} (j,b) = w^{priv}_{\posa, \posb}(j,b) - \frac1{\ab} \sum_{a \in [\ab]} w^{priv}_{\posa, \posb}(j,a), \forall j \in [\dims-1]; \nonumber\\
& U_{\posa, \posb} (\dims,b) = w^{priv}_{\posa, \posb}(\dims,b) + \frac1{\ab} \sum_{j \in [\dims-1]}\sum_{a \in [\ab]} w^{priv}_{\posa, \posb}(j,a).
\end{align}


$U_{\posa, \posb}$ can be seen as a ``centered" version of $w^{priv}_{\posa, \posb}$ (for the first $\dims-1$ rows).
It is not hard to see that $\langle U_{\posa, \posb}, x \rangle = \langle w^{priv}_{\posa, \posb}, x \rangle$, so $U_{\posa, \posb}$ is also a minimizer of the sparse logistic regression. 

For now, assume that $\forall  j \in [\dims-1], b \in [\ab]$, $U_{\posa, \posb}(j,b)$ is a  ``good'' approximation of $\Paren{ W_{1,j+1} (\posa, b) -W_{1,j+1}(\posb, b)}$, which we will show later. If we sum over $\posb \in [\ab]$, it can be shown that 
$\frac{1}{\ab}\sum_{\posb \in [\ab]}U_{\posa, \posb} ( j,b)$ is also a ``good'' approximation of $W_{1,j+1}(\posa, b)$, for all $ j \in [\dims-1]$, and $ \posa, b \in [\ab]$, because of the centering assumption of $\cW$, i.e., $\forall  j \in [\dims-1], b \in [\ab], \sum_{\posb \in [\ab]} W_{1,j+1}(\posb,b) = 0$. With these considerations in mind, we are able to introduce our algorithm.

\begin{algorithm}[H]

\KwIn{alphabet size $k$, $\ns$ i.i.d.\ samples $\{ z^1,\cdots, z^{\ns}\}$, where $z^m \in [\ab]^{\dims}$ for $m\in [\ns]$; an upper bound on $\lambda (\cW,\Theta)  \le \lambda$, privacy parameter $\rho$}

\For{$i=1$ to $\dims$}{
	\For{each pair $\posa \neq \posb \in [\ab]$}{
        $S_{\posa, \posb} \leftarrow \{ z^m, m\in [\ns]:  z_i^m \in \{u,v \} \}$
        
        $\forall z^m \in S_{\posa, \posb}$, $x^m \leftarrow \text{OneHotEncode}([z^{m}_{-i},1])$, $y^m \leftarrow 1$ if $ z_i^m=\posa$; $y^t \leftarrow -1$ if $ z_i^m=\posb$
   
        $w^{priv}_{\posa, \posb} \leftarrow \cA_{PFW}( D, \cL, \rho^{\prime}, \cC)$, where $\rho^{\prime} = \frac{\rho}{\ab^2\dims}$, $D=\{ \Paren{x^m, y^m}:z^m \in S_{\posa, \posb} \}$, $\cL(w;D) = \frac1{|S_{\posa, \posb}|} \sum_{m=1}^{|S_{\posa, \posb}|}  \log\Paren{1+e^{-y^m \langle w, x^m \rangle}}$, $\cC = \{ \norm{w}\le 2\lambda \ab\}$
   
      Define $U_{\posa, \posb} \in \RR^{\dims \times \ab}$ by centering the first $\dims-1$ rows of $w^{priv}_{\posa, \posb}$, as in Equation~\ref{equ:centering}}
	
	\For{$j \in [\dims] \backslash i$  and $\posa \in [\ab]$}{
$\widehat{W}_{i,j}(\posa,:) \leftarrow \frac{1}{\ab}\sum_{\posb\in [\ab]}U_{\posa, \posb} (\tilde{j},:)$, where $\tilde{j} = j$ when $j<i$ and $\tilde{j} = j-1$ when $j>i$
	      }
}
\KwOut{$\widehat{W}_{i,j} \in \RR^{\ab \times \ab}$ for all $i \neq j \in[\dims]$}

 \caption{Privately Learning Pairwise Graphical Model}
 \label{alg:DPPair}
\end{algorithm}

The following theorem is the main result of this section.
Its proof is structurally similar to that of Theorem~\ref{thm:est-ub-ising}.

\begin{theorem}
\label{thm:est-ub-pair}
  
Let $\cD(\cW,\Theta)$ be an unknown $\dims$-variable pairwise graphical model distribution, and we suppose that $\cD(\cW,\Theta)$ has width $\lambda(\cW,\Theta)\le \lambda$. There exists an efficient $\rho$-zCDP algorithm which outputs $\widehat{W}$ such that with probability greater than $2/3$,
$\absv{ W_{i,j}(\posa, \posb)- \widehat{W}_{i,j}(\posa, \posb)} \le \dist$, $\forall i \neq j \in [\dims], \forall \posa, \posb \in [\ab]$ if the number of i.i.d.\ samples satisfy
$$\ns = \Omega\Paren{ \frac{ \lambda^2 k^5 \log(\dims k) e^{O(\lambda)}}{\dist^4} + \frac{\sqrt{\dims} \lambda^2 k^{5.5} \log^2(\dims k)e^{O(\lambda)}}{\sqrt{\rho} \alpha^3}}.$$
\end{theorem}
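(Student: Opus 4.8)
The plan is to mirror the structure of the proof of Theorem~\ref{thm:est-ub-ising}, adapting each step to account for the alphabet size $\ab$ and the pairwise structure. First I would establish privacy: Algorithm~\ref{alg:DPPair} runs the private Frank-Wolfe subroutine once for each node $i \in [\dims]$ and each unordered pair $\posa \neq \posb \in [\ab]$, for a total of at most $\ab^2 \dims$ invocations, each with privacy budget $\rho' = \frac{\rho}{\ab^2 \dims}$. By the composition theorem for zCDP (Lemma~\ref{lem:dpcomp}) and the privacy guarantee of $\cA_{PFW}$ (Lemma~\ref{lem:PrivateFW}), the overall algorithm satisfies $\rho$-zCDP.

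For accuracy, I would fix a node $i$ (say $i=1$ as in the running example) and a pair $(\posa,\posb)$, and analyze the single logistic regression on the subsample $S_{\posa,\posb}$. By Lemma~\ref{lem:PairwiseToLR}, conditioned on $Z_1 \in \{\posa,\posb\}$ the label follows a sigmoid model with true parameter $w^*$ satisfying $\norm{w^*} \le \lambda \ab$, so I set the constraint radius to $2\lambda\ab$ and the problem fits the sparse logistic regression framework of Corollary~\ref{cor:EmpiricalError}. The key quantitative point is that here the ambient dimension of $w$ is $\dims \times \ab$, so the relevant $\ell_1$-radius is $\norm{\cC} = O(\lambda \ab)$ and the number of extreme points is $O(\dims \ab)$; plugging these into Theorem~\ref{thm:generalization_error} gives a population-risk bound that carries the extra factors of $\ab$. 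Then, exactly as in the Ising case, I invoke the parameter-recovery lemma (the pairwise analog of Lemma~\ref{lem:parameter-error-ising}, from~\cite{WuSD19}), using that the conditional-on-$\{\posa,\posb\}$ marginal inherits $\delta$-unbiasedness with $\delta = e^{-2\lambda}/\ab$ via Lemma~\ref{lem:marginal} and Lemma~\ref{lem:pairwise-unbiased}. This converts an excess-risk bound of $O(\gamma)$ into an $\ell_\infty$ guarantee $\norminf{U_{\posa,\posb} - (W_{1,\cdot}(\posa,\cdot) - W_{1,\cdot}(\posb,\cdot))} \le \dist$ once $\gamma = \tilde O(\dist^2 e^{-O(\lambda)})$; solving for $\ns$ produces the two claimed terms, with the $\sqrt{\dims}$ arising from the $\rho'=\rho/(\ab^2\dims)$ budget split.

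The step that requires genuine care, beyond the Ising argument, is the \emph{centering/averaging} reduction that turns the pairwise differences $W_{1,j}(\posa,\cdot) - W_{1,j}(\posb,\cdot)$ into the individual entries $W_{1,j}(\posa,b)$. One must verify that replacing $w^{priv}_{\posa,\posb}$ by its centered version $U_{\posa,\posb}$ (Equation~\ref{equ:centering}) preserves the inner product $\langle U_{\posa,\posb}, x\rangle = \langle w^{priv}_{\posa,\posb}, x\rangle$, hence the same excess risk and the same $\ell_\infty$ accuracy, and then that the final average $\widehat{W}_{1,j}(\posa,:) = \frac1\ab \sum_\posb U_{\posa,\posb}(\tilde j,:)$ recovers $W_{1,j}(\posa,b)$ using the zero-mean (centering) assumption on each row of $\cW$, i.e.\ $\sum_\posb W_{1,j}(\posb,b)=0$. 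This averaging over $\posb$ can only amplify the per-pair error by a constant factor, so the accuracy target is maintained up to rescaling $\dist$ by a constant. Finally, I would track the probability budget: each of the $\le \ab^2 \dims$ regressions must succeed with failure probability $O(1/(\ab^2\dims))$, which is what forces the $\log(\dims\ab)$ factors; a union bound over all node/pair combinations yields overall success probability at least $2/3$. Efficiency follows immediately since each $\cA_{PFW}$ call runs in $\poly(\ns,\dims,\ab)$ time by Corollary~\ref{cor:EmpiricalError}.
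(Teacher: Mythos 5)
Your proposal follows the same route as the paper's proof (per-node, per-pair logistic regressions via private Frank--Wolfe, the recovery lemma from~\cite{WuSD19}, the centering/averaging step, and the union bounds), but it has a genuine gap in the sample-size accounting, and the gap sits exactly where the claimed powers of $\ab$ come from. The regression for a pair $(\posa, \posb)$ runs only on the subsample $S_{\posa,\posb} = \{z^m : z_i^m \in \{\posa,\posb\}\}$, so the bound you extract from Theorem~\ref{thm:generalization_error} constrains $\ns^{\posa,\posb} = \absv{S_{\posa,\posb}}$, not the total sample size $\ns$. Because the model is only guaranteed to be $\delta$-unbiased with $\delta = e^{-2\lambda}/\ab$ (Lemma~\ref{lem:pairwise-unbiased}), one can only assert $\probof{Z_i \in \{\posa,\posb\}} \ge 2\delta$, i.e., each subsample is roughly an $e^{-O(\lambda)}/\ab$ fraction of the data. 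The paper closes this gap explicitly: by Hoeffding's inequality and a union bound over all pairs, $\ns = O\Paren{\ns^{\posa,\posb}/\delta + \log(\dims\ab^2)/\delta^2}$ samples suffice so that every subsample is simultaneously large enough, and this division by $\delta$ is precisely what turns the per-pair requirement of order $\ab^4/\dist^4$ and $\sqrt{\dims}\,\ab^{4.5}/\dist^3$ into the stated $\ab^5$ and $\ab^{5.5}$. Your proposal never performs this conversion: ``solving for $\ns$'' silently treats the per-pair regression as if it received all $\ns$ samples.

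A second, related omission: your threshold $\gamma = \tilde O(\dist^2 e^{-O(\lambda)})$ drops a factor of $1/\ab$. Lemma~\ref{lem:parameter-error-kalphabet} gives $\norminf{u_1 - u_2} = O(e^{O(\lambda)}\sqrt{\gamma/\delta})$ with $\delta = e^{-O(\lambda)}/\ab$, so to force $\ell_\infty$ error $\dist$ you must take $\gamma = O(\dist^2 e^{-O(\lambda)}/\ab)$, as the paper does. With your $\gamma$ and your identification of $\ns$ with the regression sample size, the computation yields only $\ab^2$- and $\ab^3$-type terms, so the argument as written neither matches the claimed bound nor is valid as a proof of it (the per-pair regressions would be starved of samples by a factor of roughly $\ab\, e^{O(\lambda)}$). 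Everything else --- privacy by composition over the $\ab^2\dims$ budget split, the width bound $\norm{w^*} \le 2\lambda\ab$, the observation that centering preserves $\langle U_{\posa,\posb}, x\rangle$, the averaging over $\posb$ using $\sum_{\posb} W_{1,j}(\posb, b) = 0$, and the failure-probability budget forcing the $\log(\dims\ab)$ factors --- is consistent with the paper's argument.
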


\begin{proof}
We consider estimating $W_{1,j}$ for all $j \in [\dims]$ as an example. 
Fixing one pair $(\posa,\posb)$, let $S_{\posa,\posb}$ be the samples whose first element is either $\posa$ or $\posb$, and $\ns^{\posa,\posb}$ be the number of samples in $S_{\posa,\posb}$. 
We perform the following transformation on the samples in $S_{\posa,\posb}$: for the sample $Z$, let $Y=1$ if $Z_1 = \posa$, else $Y = -1$, and let $X$ be the one-hot encoding of the vector $[Z_{-1},1]$. 

Suppose the underlying joint distribution of $X$ and $Y$ is $P$, i.e., $(X,Y) \sim P$, then
by Theorem~\ref{thm:generalization_error}, when $\ns^{\posa,\posb} = O \Paren{\frac{\lambda^2 \ab^2 \log^2(\dims\ab)}{\gamma^2}+\frac{\sqrt{d} \lambda^2 \ab^3 \log^2(\dims\ab)}{\gamma^{\frac{3}{2}} \sqrt{\rho} } }$, with probability greater than $1-\frac1{10 \dims \ab^2}$, 
$$ \expectationf{\ell(U_{\posa, \posb};(X,Y))}{(X,Y)\sim P} - \expectationf{\ell(w^*;(X,Y))}{(X,Y)\sim P} \le \gamma.$$ 
The following lemma appears in~\cite{WuSD19}, which is analogous to Lemma~\ref{lem:parameter-error-ising} for the Ising model.
\begin{lemma}
\label{lem:parameter-error-kalphabet}
Let $\cD$ be a $\delta$-unbiased distribution on $[\ab]^{\dims-1}$. For $Z \sim \cD$, $X$ denotes the one-hot encoding of $Z$. 
Let $u_1, u_2 \in \RR^{(\dims-1) \times \ab}$ be two matrices where $\sum_{a} u_1 (i,a) =0$ and  $\sum_{a} u_2(i,a) =0$ for all $i \in [\dims-1]$.
Let $P $ be a distribution such that given $u_1, \theta_1\in \RR$, $\probof{Y=1| X= X} = \sigma\Paren{\langle u_1,x \rangle+\theta_1}$ for $(X,Y) \sim P$. Suppose $\expectationf{\log\Paren{1+e^{-Y \Paren{ \langle u_1, X \rangle+\theta_1}}} } {(X,Y)\sim P} - \expectationf{ \log\Paren{1+e^{-Y \Paren{ \langle u_2, X \rangle+\theta_2}}} }{(x,Y)\sim P} \le \gamma$ for $u_2\in \RR^{(\dims-1) \times \ab}, \theta_2\in \RR$, and $\gamma \le \delta e^{-2\norminfone{u_1} -2\norm{\theta_1}-6}$, then $\norminf{u_1-u_2} = O(e^{\norminfone{u_1} +\norm{\theta_1}} \cdot \sqrt{\gamma/\delta}).$
\end{lemma}

By Lemma~\ref{lem:marginal}, Lemma~\ref{lem:pairwise-unbiased} and Lemma~\ref{lem:parameter-error-kalphabet}, if we substitute $\gamma = \frac{e^{-6\lambda} \alpha^2}{\ab}$, when $\ns^{\posa,\posb}  = O\Paren{\frac{ \lambda^2 k^4 \log(\dims k) e^{O(\lambda)}}{\dist^4}+\frac{\sqrt{\dims} \lambda^2 k^{4.5} \log^2(\dims k)e^{O(\lambda)}}{\sqrt{\rho} \alpha^3}}, $
\begin{align}
\label{equation:approxdifference}
\absv{W_{1,j}(\posa, b) - W_{1,j}(\posb, b) - U^{\posa ,\posb} (j ,b)} \le \alpha, \forall j \in [\dims-1], \forall b \in [\ab].
\end{align}
  By a union bound, Equation~(\ref{equation:approxdifference}) holds for all $(\posa, \posb)$ pairs simultaneously with probability greater than $1-\frac1{10\dims}$.
If we sum over $\posb \in [\ab]$ and use the fact that $\forall j,b, \sum_{\posb \in [\ab]} W_{1,j}(\posb,b) = 0$, we have
$$\absv{W_{1,j}(\posa, b) -  \frac{1}{\ab}\sum_{\posb \in [\ab]}U_{\posa, \posb} ( j,b)} \le \alpha, \forall j \in [\dims-1], \forall \posa, b \in [\ab]. $$

Note that we need to guarantee that we obtain $\ns^{\posa, \posb}$ samples for each pair $(\posa, \posb)$. Since $\cD(\cW,\Theta)$ is $\delta$-unbiased, given $Z \sim \cD(\cW, \Theta)$, for all $\posa \neq \posb$,  $\probof{Z \in S_{\posa, \posb}} \ge 2\delta$. By Hoeffding's inequality, when $\ns = O\Paren{\frac{\ns^{\posa,\posb} }{\delta} + \frac{\log (\dims k^2)}{\delta^2}}$, with probability greater than $1 - \frac1{10\dims}$, we have enough samples for all $(\posa, \posb)$ pairs simultaneously. Substituting $\delta = \frac{e^{-6\lambda}}{k}$, we have
$$\ns = O\Paren{\frac{ \lambda^2 k^5 \log(\dims k) e^{O(\lambda)}}{\dist^4} +\frac{\sqrt{\dims} \lambda^2 k^{5.5} \log^2(\dims k)e^{O(\lambda)}}{\sqrt{\rho} \alpha^3}}.$$

The same argument holds for other entries of the matrix.
We conclude the proof by a union bound over $\dims$ iterations.

Finally, we note that the time compexity of the algorithm is $\poly(\ns, \dims)$ since the private Frank-Wolfe algorithm is time efficient by Corollary~\ref{cor:EmpiricalError}.
\end{proof}


\section{Privately Learning Binary $t$-wise MRFs}
\label{sec:bin-mrf}

Let $\cD$ be a $t$-wise MRF on $\{1, -1 \}^{\dims}$ with underlying dependency graph $G$ and factorization polynomial $h(x) = \sum_{I \in C_t(G)} h_I(x)$. 
We assume that the width of $\cD$ is bounded by $\lambda$, i.e., $\max_{i \in [\dims]} \norm{\partial_i h} \le \lambda$, where $\norm{h}\coloneqq \sum_{I \in C_t(G)} \absv{\bar{h}(I)}$. 
Similar to~\cite{KlivansM17}, given $\ns$ i.i.d.\ samples $ \{z^1, \cdots, z^{\ns}\}$ generated from an unknown distribution $\cD$, we consider the following two related learning objectives, under the constraint of $\rho$-zCDP:
\begin{enumerate}
\item[1.] find a multilinear polynomial $u$ such that with probability greater than $\frac{2}{3}$, $\norm{h-u} \coloneqq \sum_{I \in C_t(G)} \absv{\bar{h}(I) -\bar{u}(I)} \le \dist$ ;
\item[2.] find a multilinear polynomial $u$ such that with probability greater than $\frac{2}{3}$, for every maximal monomial $I$ of $h$, $\absv{\bar{h}(I) -\bar{u}(I)} \le \dist$.
\end{enumerate}


We note that our first objective can be viewed as parameter estimation in $\ell_1$ distance, where only an average performance guarantee is provided. 
In the second objective, the algorithm recovers every maximal monomial, which can be viewed as parameter estimation in $\ell_\infty$ distance. 
These two objectives are addressed in Sections~\ref{sec:MRFl_1} and~\ref{sec:MRFl_infty}, respectively.

\subsection{Parameter Estimation in $\ell_1$ Distance}
\label{sec:MRFl_1}
The following property of MRFs, from~\cite{KlivansM17}, plays a critical role in our algorithm. 
The proof is similar to that of Lemma~\ref{lem:IsingToLR}.
\begin{lemma}[Lemma 7.6 of~\cite{KlivansM17}]
\label{lem:sigmoidMRF}
Let $\cD$ be a $t$-wise MRF on $\{1, -1 \}^{\dims}$ with underlying dependency graph $G$ and factorization polynomial $h(x) = \sum_{I \in C_t(G)} h_I(x)$, then
$$\probof{Z_i=1|Z_{-i}=x} = \sigma(2\partial_i h(x)), \forall i \in [\dims], \forall x \in \{1 ,-1\}^{[\dims] \backslash i}.$$
\end{lemma}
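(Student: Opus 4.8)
The plan is to mimic the computation in the proof of Lemma~\ref{lem:IsingToLR}, replacing the quadratic exponent of the Ising model by the general factorization polynomial $h$. Fix a coordinate $i$ and an assignment $x \in \{1,-1\}^{[\dims]\backslash i}$, and write $z^+$ (resp.\ $z^-$) for the full configuration obtained by setting the $i$-th coordinate to $+1$ (resp.\ $-1$) and the remaining coordinates to $x$. Since $\probof{Z=z} \propto \exp(h(z))$, the normalizing constant cancels in the conditional probability, giving
$$\probof{Z_i=1 \mid Z_{-i}=x} = \frac{e^{h(z^+)}}{e^{h(z^+)}+e^{h(z^-)}} = \frac{1}{1+e^{-\Paren{h(z^+)-h(z^-)}}} = \sigma\Paren{h(z^+)-h(z^-)}.$$
So it remains to establish the algebraic identity $h(z^+) - h(z^-) = 2\,\partial_i h(x)$.

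For this step I would exploit multilinearity. Grouping the monomials of $h$ according to whether they contain $i$, and writing each monomial containing $i$ as $\{i\} \cup J$ with $i \notin J$, one obtains
$$h(z) = \sum_{J: i \notin J} \bar h(J)\prod_{j \in J} z_j \;+\; z_i \sum_{J: i \notin J} \bar h(J\cup\{i\}) \prod_{j \in J} z_j = g(z_{-i}) + z_i\,\partial_i h(z_{-i}),$$
where both the first term $g$ and the partial derivative $\partial_i h$ depend only on $z_{-i}$ (by definition, $\partial_i h$ sums only over $J$ with $i \notin J$, so it carries no $z_i$). Evaluating this decomposition at $z^+$ and $z^-$, the $g$-term is identical and cancels, while the linear term contributes $+\partial_i h(x)$ and $-\partial_i h(x)$ respectively, yielding $h(z^+)-h(z^-) = 2\,\partial_i h(x)$. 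Substituting into the sigmoid identity above completes the argument.

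This plan has essentially no hard step: the whole content is the clean decomposition $h = g + z_i\,\partial_i h$ afforded by multilinearity, together with the fact that the partition function cancels in the conditional law of a single binary coordinate. The one point requiring care is to confirm that $\partial_i h$, as defined via $\partial_i h(x) = \sum_{J: i\notin J}\bar h(J\cup\{i\})\prod_{j\in J}x_j$, is exactly the coefficient of $z_i$ in this decomposition and genuinely does not depend on the value of $z_i$, so that evaluating at $\pm 1$ merely flips the sign of that term. Both facts are immediate from the definition, so the main obstacle is bookkeeping rather than any real difficulty.
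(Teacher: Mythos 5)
Your proof is correct and follows essentially the same route the paper indicates: the paper defers to the computation in Lemma~\ref{lem:IsingToLR} (ratio of exponentials with the partition function cancelling), and your argument is exactly that computation, with the multilinear decomposition $h(z) = g(z_{-i}) + z_i\,\partial_i h(z_{-i})$ supplying the identity $h(z^+)-h(z^-)=2\partial_i h(x)$ that replaces the explicit quadratic bookkeeping of the Ising case. No gaps; the details you fill in are precisely the ones the paper leaves implicit.
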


 Lemma~\ref{lem:sigmoidMRF} shows that, similar to pairwise graphical models, it also suffices to learn the parameters of binary $t$-wise MRF using sparse logistic regression.

\begin{algorithm}[H]

\KwIn{ $\ns$ i.i.d.\ samples $\{ z^1,\cdots, z^{\ns}\}$, where $z^m \in \{\pm1\}^{\dims}$ for $m\in [\ns]$; an upper bound $\lambda$ on $\max_{i \in [\dims]} \norm{\partial_i h}$, privacy parameter $\rho$}

\For{$i=1$ to $\dims$}{
$\forall m \in [\ns]$, $x_m \leftarrow \Brack{\prod_{j \in I} z_j^m: I \subset [\dims \backslash i], \absv{I} \le t-1}$, $y_m \leftarrow z_i^m$

$w^{priv} \leftarrow \cA( D, \cL, \rho^{\prime}, \cC)$
where $D=\{\Paren{x_m, y_m} \}_{m=1}^{\ns}$, $\ell(w;d) =  \log\Paren{1+e^{-y \langle w, x \rangle}}$, $\cC = \{ \norm{w}\le 2\lambda\}$, and $\rho^{\prime} = \frac{\rho}{\dims}$

\For{$I \subset [\dims \backslash i]$ with $\absv{I} \le t-1$ }{
$\bar{u}(I \cup \{i\}) = \frac1{2} w^{priv}(I)$,  when $\arg\min (I \cup i) = i$
}
}

\KwOut{ $ \bar{u}(I): I \in C_t(K_{\dims})$,  where $K_{\dims}$ is the $\dims$-dimensional complete graph}

\caption{Private Learning binary $t$-wise MRF in $\ell_1$ distance}
\label{alg:DPMRF}
\end{algorithm}

\begin{theorem}
There exists a $\rho$-zCDP algorithm which, with probability at least $2/3$, finds a multilinear polynomial $u$ such that 
$\norm{h-u} \le \dist,$
given $\ns$ i.i.d.\ samples $Z^1,\cdots, Z^{\ns} \sim \cD$, where
$$\ns = O\Paren{ \frac{ (2t)^{O(t)} e^{O(\lambda t)} \cdot \dims^{4t} \cdot \log (\dims) }  {\dist^4} + \frac{ (2t)^{O(t)} e^{O(\lambda t)}\cdot \dims^{3t+\frac12} \cdot \log^2 (\dims)} { \sqrt{\rho}\dist^3}}. $$
\end{theorem}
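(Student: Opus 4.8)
The plan is to follow the same template as the proof of Theorem~\ref{thm:est-ub-ising}, replacing single‑variable features by low‑degree monomial features and replacing the Ising recovery lemma by its $t$‑wise analogue. For \textbf{privacy}, Algorithm~\ref{alg:DPMRF} invokes $\cA_{PFW}$ once per node $i \in [\dims]$, each call run under $\rho' = \rho/\dims$‑zCDP; by the composition lemma (Lemma~\ref{lem:dpcomp}) the whole procedure is $\rho$‑zCDP, and since $\cA_{PFW}$ is private on every input this holds for arbitrary data. So the work is entirely in the accuracy analysis, which I would carry out one node at a time and then union bound.

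For the \textbf{reduction to logistic regression}, fix a node $i$. By Lemma~\ref{lem:sigmoidMRF}, $\probof{Z_i=1\mid Z_{-i}=x} = \sigma\Paren{2\partial_i h(x)} = \sigma\Paren{\langle w^*, X\rangle}$, where $X$ is the vector of monomials $\prod_{j\in I}x_j$ over $I\subseteq[\dims]\setminus\{i\}$ with $\absv{I}\le t-1$ and $w^*(I)=2\bar h(I\cup\{i\})$. The feature count is $\dims' = O(\dims^{t-1})$, each coordinate of $X$ lies in $\{\pm1\}$, and the width bound gives $\norm{w^*}=2\norm{\partial_i h}\le 2\lambda$, matching $\cC=\{\norm{w}\le 2\lambda\}$; in particular $\absv{\langle w^*,X\rangle}\le 2\lambda$. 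I would then apply Theorem~\ref{thm:generalization_error} with dimension $\dims'$, sparsity $2\lambda$, and privacy $\rho/\dims$ to conclude that, with probability $\ge 1-\beta$, the population logistic risk of $w^{priv}$ exceeds that of $w^*$ by at most some target $\gamma$, provided
$$\ns = \Omega\Paren{\frac{\lambda^2\log^2(1/\beta)}{\gamma^2} + \frac{\lambda^2\sqrt{\dims}\,\log^{O(1)}(\dims/\beta)}{\gamma^{3/2}\sqrt{\rho}}},$$
the extra $\sqrt{\dims}$ coming from $\rho\mapsto\rho/\dims$ and $\log\dims' = O(t\log\dims)$.

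The \textbf{crux} is converting this excess risk into coefficient error. Since both $\absv{\langle w^*,X\rangle}$ and $\absv{\langle w^{priv},X\rangle}$ are $O(\lambda)$, the sigmoid derivative along the connecting segment is $\ge e^{-O(\lambda)}$, so the logistic excess risk lower‑bounds $e^{-O(\lambda)}\expectation{\Paren{\langle w^{priv}-w^*,X\rangle}^2}$; hence the degree‑$(t-1)$ polynomial $q_i(x)=\langle w^{priv}-w^*,X\rangle$ satisfies $\expectation{q_i(X)^2}\le e^{O(\lambda)}\gamma$. The remaining step is purely structural: the marginal of $\cD$ on $[\dims]\setminus\{i\}$ is $\delta$‑unbiased with $\delta=e^{-2\lambda}/2$ (by Lemma~\ref{lem:marginal} and binary‑MRF unbiasedness), and I would invoke the fact from~\cite{KlivansM17} that for $\delta$‑unbiased distributions the degree‑$\le t$ monomials are well‑conditioned, so that their mean‑squared value controls each coefficient individually, $\norminf{\bar q_i}\le \delta^{-O(t)}(2t)^{O(t)}\sqrt{\expectation{q_i(X)^2}}$. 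This yields $\absv{\bar h(I\cup\{i\})-\bar u(I\cup\{i\})}\le e^{O(\lambda t)}(2t)^{O(t)}\sqrt{\gamma}$ for every recovered coefficient. This is the generalization of Lemma~\ref{lem:parameter-error-ising} to higher order, and it is the main obstacle precisely because the feature vector $X$ is \emph{not} $\delta$‑unbiased — its coordinates are deterministically linked monomials — so the $\ell_\infty$ recovery cannot go through the feature marginal directly and must instead use the moment‑matrix invertibility of the $\delta$‑unbiased base distribution, which is what produces the $e^{O(\lambda t)}(2t)^{O(t)}$ factors.

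Finally, for \textbf{aggregation and sample complexity}, the condition $\arg\min(I\cup i)=i$ in Algorithm~\ref{alg:DPMRF} ensures each monomial $S$ with $\absv{S}\le t$ is recovered exactly once, at iteration $i=\min(S)$. Summing the per‑coefficient $\ell_\infty$ bounds over all $O(\dims^t)$ monomials gives
$$\norm{h-u}=\sum_{S}\absv{\bar h(S)-\bar u(S)} \le O(\dims^t)\cdot e^{O(\lambda t)}(2t)^{O(t)}\sqrt{\gamma}.$$
Setting $\gamma = \Theta\Paren{\dist^2/\big(\dims^{2t}e^{O(\lambda t)}(2t)^{O(t)}\big)}$ makes this at most $\dist$, and substituting into the sample bound above (with $\beta=\Theta(1/\dims)$ to union bound over the $\dims$ nodes) produces the two stated terms: the statistical term scales as $\lambda^2/\gamma^2 \asymp \dims^{4t}/\dist^4$ and the privacy term as $\sqrt{\dims}/(\gamma^{3/2}\sqrt{\rho}) \asymp \dims^{3t+1/2}/(\sqrt{\rho}\,\dist^3)$, each carrying the $(2t)^{O(t)}e^{O(\lambda t)}$ and logarithmic factors. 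I expect only routine bookkeeping in this last step; the real content is the structural recovery lemma of the preceding paragraph.
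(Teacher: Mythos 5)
Your proposal is correct and follows essentially the same route as the paper: per-node private Frank--Wolfe sparse logistic regression with $\rho/\dims$-zCDP per node, the generalization bound of Theorem~\ref{thm:generalization_error}, a structural recovery step, and a union bound over nodes with $\gamma \asymp \dist^2/\dims^{2t}$ (up to $(2t)^{O(t)}e^{O(\lambda t)}$ factors). The only difference is presentational: what you re-derive as your ``crux'' --- strong convexity of the well-specified logistic risk giving $\expectation{q_i(X)^2} \le e^{O(\lambda)}\gamma$, followed by coefficient recovery via $\delta$-unbiased moment conditioning and summation over the $O(\dims^t)$ monomials --- is exactly the content of Lemma~\ref{lem:parameter-error-mrf} (Lemma 6.4 of~\cite{KlivansM17}), which the paper invokes as a black box to get the $\ell_1$ coefficient bound with its $(2t)^t\sqrt{\gamma/\delta^t}\binom{\dims}{t}$ factor.
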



\begin{proof}
Similar to the previous proof, we start by fixing $i=1$. Given a random sample $Z \sim \cD$,   let $X = \Brack{\prod_{j \in I} Z^j: I \subset [\dims] \backslash 1, \absv{I} \le t-1}$ and $Y = Z_i $. According to Lemma~\ref{lem:sigmoidMRF}, we know that $\expectation{Y|X} = \sigma\Paren{\langle w^*,X \rangle}$, where $w^* = \Paren{ 2\cdot \overline{ \partial_{1}h} (I): I \subset [\dims] \backslash 1, \absv{I}\le t-1}$. Furthermore, $\norm{w^*} \le 2\lambda$ by the width constraint. Now, given $\ns$ i.i.d.\ samples $\{ z^{m}\}_{m=1}^{\ns}$ drawn from $\cD$, it is easy to check that for any given $z^m$, its corresponding $(x^m,y^m)$ is one realization of $(X,Y)$. Let $w^{priv}$ be the output of  $\cA\Paren{ D, \cL, \frac{\rho}{\dims},  \{w: \norm{w}\le 2\lambda \}}$, where $D = \{(x^m,y^m)\}_{m=1}^{\ns}$ and $\ell(w;(x,y)) = \log\Paren{1+e^{-y \langle w, x \rangle}}$. By Lemma~\ref{thm:generalization_error}, $ \expectationf{\ell(w^{priv};(X,Y))}{Z\sim \cD(A,\theta)} - \expectationf{\ell(w^*;(X,Y))}{Z\sim \cD(A,\theta)} \le \gamma$ with probability greater than $1-\frac{1}{10\dims}$, assuming $\ns =\Omega\Paren{ \frac{ \sqrt{\dims} \lambda^2 \log^2(\dims) }{\sqrt{\rho} \gamma^{\frac{3}{2}}}+ \frac{ \lambda^2 \log(\dims)}{\gamma^2} }$.

Now we need the following lemma from~\cite{KlivansM17}, which is analogous to Lemma~\ref{lem:IsingToLR} for the Ising model.

\begin{lemma}[Lemma 6.4 of~\cite{KlivansM17}]
\label{lem:parameter-error-mrf}
Let $P$ be a distribution on $\{-1,1\}^{\dims-1} \times \{-1,1\}$. Given multilinear polynomial $u_1\in \RR^{\dims-1}$, $\probof{Y=1|X=x} = \sigma\Paren{u_1(X)}$ for $(X,Y) \sim P$. 
Suppose the marginal distribution of $P$ on $X$ is $\delta$-unbiased, 
and $\expectationf{\log\Paren{1+e^{-Y \Paren{ u_1(X) }}} } {(X,Y)\sim P} - \expectationf{ \log\Paren{1+e^{-Y \Paren{ u_2(X)}}} }{(X,Y)\sim P} \le \gamma$ for another multilinear polynomial $u_2$, where $\gamma \le \delta^t e^{-2\norm{u_1}-6}$, then $\norm{u_1-u_2} = O\Paren{ (2t)^{t} e^{\norm{u_1}} \cdot \sqrt{\gamma/\delta^t}\cdot {\dims \choose t} }.$
\end{lemma}

By substituting $\gamma = e^{ -O( \lambda t)} \cdot (2t)^{-O(t)} \cdot \dims^{-3t} \cdot \alpha^2 $, we have that with probability greater than $1-\frac1{10\dims}$, $\sum_{I:\arg\min I =1} \absv{\bar{u}(I) - h(I)} \le \frac{\alpha}{\dims}$. We note that the coefficients of different monomials are recovered in each iteration. Therefore, by a union bound over $\dims$ iterations, we prove the desired result.
\end{proof}

\subsection{Parameter Estimation in $\ell_{\infty}$ Distance}
\label{sec:MRFl_infty}
In this section, we introduce a slightly modified version of the algorithm in the last section.

\begin{algorithm}[H]

\KwIn{ $\ns = \ns_1+\ns_2$ i.i.d.\ samples $\{ z^1,\cdots, z^{\ns}\}$, where $z^m \in \{\pm1\}^{\dims}$ for $m\in [\ns]$; an upper bound $\lambda$ on $\max_{i \in [\dims]} \norm{\partial_i h}$, privacy parameter $\rho$}

\For{each $I \subset [\dims]$ with $\absv{I} \le t$}{
Let $Q(I) \coloneqq \frac1{n_2} \sum_{m=n_1}^{n_1+n_2} \prod_{j \in I} z_j^m$

Compute $\hat{Q}(I)$, an estimate of  $Q(I)$ through an $\rho/2$-zCDP query release algorithm (PMW~\cite{HardtR10} or sepFEM~\cite{neworacle})
}

\For{$i=1$ to $\dims$}{
$\forall m \in [\ns_1]$, $x_m \leftarrow \Brack{\prod_{j \in I} z^m_j: I \subset [\dims \backslash i], \absv{I} \le t-1}$, $y_m \leftarrow z_i^m$

$w^{priv} \leftarrow \cA( D, \cL, \rho^{\prime}, \cC)$
where $D=\{\Paren{x_m, y_m} \}_{m=1}^{\ns}$, $\ell(w;d) =  \log\Paren{1+e^{-y \langle w, x \rangle}}$, $\cC = \{ \norm{w}\le 2\lambda\}$, and $\rho^{\prime} = \frac{\rho}{2\dims}$

Define a polynomial $v_i: \RR^{\dims-1} \rightarrow \RR$ by setting $\bar{v_i}(I) = \frac1{2} w^{priv}(I)$ for all $I \subset [\dims \backslash i]$

\For{each $I \subset [\dims \backslash i]$ with $\absv{I} \le t-1$}{
$\bar{u}(I \cup \{i\}) = \sum_{I^{\prime} \subset [\dims]}  \overline{\partial_I v_i}(I^{\prime}) \cdot \hat{Q}(I^{\prime})$,
when $\arg\min (I \cup i) = i$
}
}

\KwOut{ $ \bar{u}(I): I \in C_t(K_{\dims})$,  where $K_{\dims}$ is the $\dims$-dimensional complete graph}

\caption{Private Learning binary $t$-wise MRF in $\ell_\infty$ distance}
\label{alg:DPMRF-infty}
\end{algorithm}

We first show that if the estimates $\hat Q$ for the parity queries
$Q$ are sufficiently accurate, Algorithm~\ref{alg:DPMRF-infty} solves
the $\ell_\infty$ estimation problem, as long as the sample size $n_1$
is large enough.

\begin{lemma}\label{yo}
  Suppose that the estimates $\hat Q$ satisfies
  $|\hat Q(I) - Q(I)| \leq \alpha/(8\lambda)$ for all $I\subset [p]$
  such that $|I|\leq t$ and $n_2 = \Omega(\lambda^2 t\log(p)/\alpha^2)$. Then
  with probability at least $3/4$, Algorithm~\ref{alg:DPMRF-infty}
  outputs a multilinear polynomial $u$ such that for every maximal
  monomial $I$ of $h$, $\absv{\bar{h}(I) - \bar{u}(I)} \le \dist,$
  given $\ns$ i.i.d.\ samples $Z^1,\cdots, Z^{\ns} \sim \cD$, as long
  as
$$n_1 =\Omega\Paren{ \frac{ e^{5\lambda t} \cdot \sqrt{\dims} \log^2(\dims)
  }{\sqrt{\rho} \dist^{\frac{9}{2}}}+ \frac{ e^{6\lambda t} \cdot
    \log(\dims)}{\dist^6} }. $$
\end{lemma}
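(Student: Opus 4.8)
The plan is to show that, for each maximal monomial $S$ of $h$, the algorithm's output $\bar u(S)$ concentrates around $\bar h(S)$, by first identifying the exact population quantity that $\bar u(S)$ estimates and then controlling the two sources of error. Write $S = I \cup \{i\}$ with $i = \min S$ and $g_i := \partial_i h$, so that the algorithm sets $\bar u(S) = \sum_{I'}\overline{\partial_I v_i}(I')\,\hat Q(I')$, which is exactly the plug-in estimate of $\mathbb{E}_{Z\sim\cD}[\partial_I v_i(Z)]$ (since $\hat Q(I') \approx \mathbb{E}_\cD[\prod_{j\in I'}Z_j]$). The first step is a structural identity: for a maximal monomial $S$,
\[
\mathbb{E}_{Z\sim\cD}[\partial_I g_i(Z)] = \bar h(S).
\]
Indeed, expanding $\partial_I g_i(x) = \sum_{J:\,J\cap I=\emptyset}\bar{g_i}(J\cup I)\prod_{j\in J}x_j$ and taking expectations, the $J=\emptyset$ term contributes $\bar{g_i}(I) = \overline{\partial_i h}(I) = \bar h(S)$, while every $J\neq\emptyset$ term has coefficient $\bar h(J\cup S)$ with $J \cup S \supsetneq S$, which vanishes by maximality of $S$. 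Hence it suffices to bound $\bar u(S) - \mathbb{E}_\cD[\partial_I g_i(Z)]$.

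Next I would decompose this difference as $(A)+(B)$, where $(A) = \sum_{I'}\overline{\partial_I v_i}(I')\big(\hat Q(I') - \mu(I')\big)$ with $\mu(I') = \mathbb{E}_\cD[\prod_{j\in I'}Z_j]$, and $(B) = \mathbb{E}_\cD[\partial_I e_i(Z)]$ with $e_i := v_i - g_i$. For $(A)$, the constraint set forces $\norm{v_i} = \tfrac12\norm{w^{priv}} \le \lambda$, hence $\norm{\partial_I v_i}\le\norm{v_i}\le\lambda$ (its coefficients are a subset of those of $v_i$); combining the hypothesis $|\hat Q(I') - Q(I')|\le \alpha/(8\lambda)$ with a Hoeffding bound on $|Q(I') - \mu(I')|$, union-bounded over the at most $p^t$ parities of order $\le t$ — this is exactly where $n_2 = \Omega(\lambda^2 t\log(p)/\alpha^2)$ is used — gives $\max_{I'}|\hat Q(I') - \mu(I')|\le \alpha/(4\lambda)$, so that $|(A)|\le\lambda\cdot\alpha/(4\lambda)=\alpha/4$.

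The main obstacle is term $(B) = \mathbb{E}_\cD[\partial_I e_i(Z)]$, since the naive bound $|(B)|\le\norm{\partial_I e_i}\le\norm{e_i}$ reintroduces the prohibitive $\binom{p}{t}$ blow-up responsible for the $\poly(p)$ rate of the $\ell_1$ algorithm. The key idea is to avoid the $\ell_1$ route entirely: writing $\partial_I$ as a discrete derivative, $\partial_I e_i(z) = 2^{-|I|}\sum_{\tau\in\{\pm1\}^I}(\prod_{j\in I}\tau_j)\,e_i(z^{I\to\tau})$, and using that $\cD$ is $\delta$-unbiased with $\delta = e^{-2\lambda}/2$ (so conditioning/flipping $|I|\le t$ coordinates distorts the law by a factor at most $\delta^{-t}$), one obtains $|(B)| \le \delta^{-t}\,\mathbb{E}_\cD[|e_i|] \le \delta^{-t}\sqrt{\mathbb{E}_\cD[e_i^2]}$, replacing $\binom{p}{t}$ by $\delta^{-t}=e^{O(\lambda t)}$. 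It then remains to bound $\mathbb{E}_\cD[e_i^2]$ by the logistic excess risk: since $\partial_i h$ is the population (Bayes-optimal) minimizer of the logistic loss and lies in $\cC$ by the width bound, the private Frank--Wolfe guarantee composed with the generalization bound of Theorem~\ref{thm:generalization_error} gives excess risk $\gamma$, and a Pinsker plus mean-value argument (as in the proof of Lemma~\ref{lem:parameter-error-mrf}), using that $\sigma'$ is bounded below by $e^{-O(\lambda)}$ on the range $[-2\lambda,2\lambda]$, yields $\mathbb{E}_\cD[e_i^2] = e^{O(\lambda)}\gamma$. I would then choose $\gamma = e^{-\Theta(\lambda t)}\poly(\alpha)$ small enough that $|(B)|\le\alpha/2$ and substitute this $\gamma$ into Theorem~\ref{thm:generalization_error} to read off the stated bound on $n_1$.

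Finally I would assemble the pieces: $|\bar u(S) - \bar h(S)|\le|(A)|+|(B)|\le\alpha$ for every maximal monomial $S$, each handled in the single iteration $i=\min S$ from one regression solve $v_i$ and the shared estimates $\hat Q$, so that no union bound over monomials is needed — only over the $p$ coordinates (regression failure $\le \tfrac1{10}$ via $\beta=\tfrac1{10p}$ in Theorem~\ref{thm:generalization_error}) together with the single parity-concentration event (failure $\le\tfrac18$), for total success probability at least $3/4$. The crux of the whole argument is the $(B)$ bound: trading the $\binom{p}{t}$ factor for $\delta^{-t}$ via $\delta$-unbiasedness is precisely what lets $\ell_\infty$ recovery of maximal monomials run at the $\sqrt p$ rate rather than the $\poly(p)$ rate forced by the $\ell_1$ guarantee.
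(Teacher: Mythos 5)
Your proof is correct, and it follows the paper's overall skeleton: run private Frank--Wolfe per coordinate with budget $\rho/(2\dims)$, invoke Theorem~\ref{thm:generalization_error} to get excess logistic risk $\gamma$, and then combine three error sources (learning error, sampling error of the parities, and the assumed $\hat Q$-vs-$Q$ error) by a triangle inequality, with the same probability accounting. The genuine difference is in the central step. The paper imports Lemma 6.4 of~\cite{KlivansM17} as a black box, which is a \emph{pointwise-in-$X$} statement, $\Pr_X\bigl[\absv{\overline{\partial_1 h}(I) - \partial_I v_1(X)} \ge \dist/4\bigr] = O(\gamma e^{3\lambda t}/\dist^2)$, and then converts it into a bound on $\absv{\expectation{\partial_I v_1(X)} - \overline{\partial_1 h}(I)}$ using the worst-case bound $2\lambda$ on the bad event, followed by a Hoeffding bound on the empirical mean of $\partial_I v_1$ over the second batch. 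You instead bound exactly the same quantity $(B) = \expectation{\partial_I(v_i - \partial_i h)(Z)}$ \emph{directly in expectation}, via the discrete-derivative identity, $\delta$-unbiasedness (to pay $\delta^{-t} = e^{O(\lambda t)}$ instead of $\binom{\dims}{t}$), Cauchy--Schwarz, and a Pinsker/mean-value argument converting excess risk to $\expectation{e_i^2} \le e^{O(\lambda)}\gamma$ --- effectively inlining a streamlined variant of the machinery behind the cited lemma rather than quoting it. This buys you two things: the argument is self-contained, and because the in-expectation route avoids the pointwise-concentration overhead, it suffices to take $\gamma = e^{-O(\lambda t)}\dist^2$ rather than the paper's $\gamma = e^{-3\lambda t}\dist^3/(8\lambda)$, so your requirement on $n_1$ scales as $\dist^{-3}$ and $\dist^{-4}$ rather than $\dist^{-9/2}$ and $\dist^{-6}$; since the lemma's stated bound on $n_1$ is larger, it a fortiori implies your condition and the stated conclusion follows (indeed you prove a slightly stronger lemma). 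One cosmetic caveat: your choice of $\gamma$ does not literally ``read off'' the displayed $n_1$ bound, but a smaller one, so the final sentence of your write-up should be phrased as the stated $n_1$ being sufficient rather than matching; also, absorbing the $2^{O(t)}$ factors from $\delta^{-t}$ into $e^{O(\lambda t)}$ involves the same level of informality the paper itself uses.
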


\begin{proof}
  We will condition on the event that $\hat{Q}$ is a ``good'' estimate
  of $Q$: $|\hat Q(I) - Q(I)| \leq \alpha/(8\lambda)$ for all
  $I\subset [p]$ such that $|I|\leq
  t$. 
  Let us fix $i=1$. Let
  $X = \Brack{\prod_{j \in I} Z^j: I \subset [\dims] \backslash \{1
    \}, \absv{I} \le t-1}$, $Y = Z_i $, and we know that
  $\expectation{Y|X} = \sigma\Paren{\langle w^*,X \rangle}$, where
  $w^* = \Paren{ 2\cdot \overline{ \partial_{1}h} (I): I \subset
    [\dims] \backslash 1, \absv{I}\le t-1}$. Now given $\ns_1$
  i.i.d.\ samples $\{ z^{m}\}_{m=1}^{\ns_1}$ drawn from $\cD$, let
  $w^{priv}$ be the output of
  $\cA\Paren{ D, \cL, \frac{\rho}{\dims}, \{w: \norm{w}\le 2\lambda
    \}}$, where $D = \{(x^m,y^m)\}_{m=1}^{\ns_1}$ and
  $\ell(w;(x,y)) = \log\Paren{1+e^{-y \langle w, x
      \rangle}}$. Similarly, with probability at least
  $1-\frac{1}{10\dims}$,
  $$ \expectationf{\ell(w^{priv};(X,Y))}{Z\sim \cD(A,\theta)} -
  \expectationf{\ell(w^*;(X,Y))}{Z\sim \cD(A,\theta)} \le \gamma$$ as long as
  $\ns_1 =\Omega\Paren{ \frac{ \sqrt{\dims} \lambda^2 \log^2(\dims)
    }{\sqrt{\rho} \gamma^{\frac{3}{2}}}+ \frac{ \lambda^2
      \log(\dims)}{\gamma^2} }$.

Now we utilize Lemma 6.4 from~\cite{KlivansM17}, which states that if  $ \expectationf{\ell(w^{priv};(X,Y))}{Z\sim \cD(A,\theta)} - \expectationf{\ell(w^*;(X,Y))}{Z\sim \cD(A,\theta)} \le \gamma$, given a random sample $X$, for any maximal monomial $I \subset [\dims]\backslash \{1\}$ of $\partial_1 h$,
$$\probof{  \absv{\overline{\partial_1 h}(I) -  \partial_I v_1(X)} \ge \frac{\dist}{4} } < O\Paren{ \frac{\gamma \cdot e^{3\lambda t}}{\dist^2}}. $$

By replacing
$\gamma = \frac{ e^{-3\lambda t}\cdot \dist^3}{8\lambda}$, we have
$\probof{ \absv{\overline{\partial_1 h}(I) - \partial_I v_1(X)} \ge
  \frac{\dist}{4} } <\frac{\dist}{8\lambda}$, as long as
$n_1 = \Omega\Paren{ \frac{ \sqrt{\dims} e^{5\lambda t} \log^2(\dims)
  }{\sqrt{\rho} \dist^{\frac{9}{2}}}+ \frac{ e^{6\lambda t}
    \log(\dims)}{\dist^6} }$.  Accordingly, for any maximal monomial
$I$,
$ \absv{\expectation{\partial_I v_1(X)} - \overline{\partial_1 h} (I)}
\le \expectation { \absv{\partial_I v_1(X) - \overline{\partial_1
      h}(I) }} \le \frac{\dist}{4} + 2\lambda \cdot
\frac{\dist}{8\lambda} = \frac{\dist}{2}$.  By Hoeffding inequality,
given $n_2 = \Omega\Paren{\frac{\lambda^2t\log {\dims}}{\dist^2}}$, for each maximal
monomial $I$, with probability greater than $1-\frac{1}{\dims^t}$,
$\absv{ \frac1{n_2} \sum_{m=1}^{n_2}\partial_I v_1(X_m) -
  \expectation{\partial_I v_1(X)} }\le \frac{\dist}{4}$. Note that
$\absv{ Q(I) - \hat{Q}(I)} \le \frac{\dist}{8\lambda}$, then
$\absv{\frac1{n_2} \sum_{m=1}^{n_2}\partial_I v_1(X_m)
  -\sum_{I^{\prime} \subset [\dims]} \overline{\partial_I
    v_1}(I^{\prime}) \cdot \hat{Q}(I^{\prime}) } \le \frac{\dist}{8}
$. Therefore,
\begin{align}
&\absv{\sum_{I^{\prime} \subset [\dims]}  \overline{\partial_I v_1}(I^{\prime}) \cdot \hat{Q}(I^{\prime})   -   \overline{\partial_1 h}(I) } \nonumber\\
\le&\absv{ \sum_{I^{\prime} \subset [\dims]}  \overline{\partial_I v_1}(I^{\prime}) \cdot \hat{Q}(I^{\prime})   -\frac1{n_2} \sum_{m=1}^{n_2}\partial_I v_1(X_m) } + \absv{\frac1{n_2} \sum_{m=1}^{n_2}\partial_I v_1(X_m) -   \expectation{\partial_I v_1(X)}  } + \absv{  \expectation{\partial_I v_1(X)} -  \overline{\partial_1 h}(I)   } \nonumber\\
\le&\frac{\dist}{8}+\frac{\dist}{4}+\frac{\dist}{2} = \frac{7\dist}{8}.\nonumber
\end{align}
Finally, by a union bound over $\dims$ iterations and all the maximal monomials, we prove the desired results.\end{proof}

We now consider two private algorithms for releasing the parity
queries. The first algorithm is called Private Multiplicative Weights
(PMW)~\cite{HardtR10}, which provides a better accuracy guarantee but
runs in time exponential in the dimension $\dims$. The following theorem can be viewed as a zCDP version of Theorem 4.3 in~\cite{Vadhan17}, 
by noting that during the analysis, every iteration satisfies $\eps_0$-DP, which naturally satisfies $\eps_0^2$-zCDP, and by replacing the strong composition theorem of $(\eps,\delta)$-DP by the composition theorem of zCDP (Lemma~\ref{lem:dpcomp}).

\begin{lemma}[Sample complexity of PMW, modification of Theorem 4.3 of~\cite{Vadhan17}]\label{hey}
  The PMW algorithm satisfies $\rho$-zCDP and releases $\hat{Q}$ such
  that with probability greater than $\frac{19}{20}$, for all
  $I \subset [\dims]$ with $\absv{I} \le t$,
  $\absv{\hat{Q}(I) -Q(I)} \le \frac{\dist}{8\lambda}$ as long as the
  size of the data set
  $$n_2 = \Omega\Paren{ \frac{ t \lambda^2 \cdot \sqrt{\dims}
      \log{\dims}}{\sqrt{\rho}\dist^2} }.$$
\end{lemma}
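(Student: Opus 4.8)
The plan is to adapt the standard Private Multiplicative Weights analysis (Theorem 4.3 of~\cite{Vadhan17}) to the zCDP setting, replacing its advanced-composition step by the zCDP composition of Lemma~\ref{lem:dpcomp}. Recall that PMW maintains a distribution $y$ over the data universe $\mathcal{X} = \{\pm 1\}^{\dims}$, so $\log|\mathcal{X}| = \Theta(\dims)$, initialized to uniform, and processes the collection of parity queries $\{Q(I) : I \subseteq [\dims],\, |I|\le t\}$. The number of such queries is $M = \sum_{j\le t}\binom{\dims}{j} = O(\dims^t)$, hence $\log M = O(t\log\dims)$. For each query PMW runs a sparse-vector (AboveThreshold) test comparing the maintained estimate $Q_y(I)$ to a noisy version of the true answer $Q(I)$; when the gap exceeds a threshold it performs a multiplicative-weights update, and otherwise answers lazily with $Q_y(I)$. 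The target per-query accuracy is $\alpha' \coloneqq \dist/(8\lambda)$.

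First I would bound the number of update rounds by the standard relative-entropy potential argument: $\mathrm{KL}(D \| y)$ starts at $\log|\mathcal{X}| = O(\dims)$ and decreases by $\Omega(\alpha'^2)$ on every update (optimizing the multiplicative-weights learning rate at $\Theta(\alpha')$), so the number of updates is at most $k = O(\log|\mathcal{X}|/\alpha'^2) = O(\dims/\alpha'^2)$. The essential point, inherited from the sparse-vector technique, is that the privacy cost is charged only to these $k$ update rounds and \emph{not} to all $M$ queries.

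Next I would set the budget. Each update round can be made $\eps_0$-DP, which by Lemma~\ref{lem:dpdefns} is $\tfrac{\eps_0^2}{2}$-zCDP; composing the $k$ rounds via Lemma~\ref{lem:dpcomp} gives $\tfrac{k\eps_0^2}{2}$-zCDP, so taking $\eps_0 = \sqrt{2\rho/k}$ makes the whole algorithm $\rho$-zCDP. For accuracy, each reported answer carries Laplace noise of scale $O(1/(\eps_0 n_2))$ (the sensitivity of an average parity query is $1/n_2$); union-bounding over the $\mathrm{poly}(M)$ noise draws shows that every answer lies within $\alpha'/2$ of its true value with probability at least $19/20$, provided $n_2 = \Omega\Paren{\log(M)/(\eps_0\alpha')}$. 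Note that, unlike the $(\eps,\delta)$-DP version, no $\sqrt{\log(1/\delta)}$ factor appears, which is precisely the benefit of switching to zCDP composition.

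Finally, substituting $\eps_0 = \sqrt{2\rho/k}$ and $k = O(\dims/\alpha'^2)$, so that $\sqrt{k} = O(\sqrt{\dims}/\alpha')$, yields
$$n_2 = \Omega\Paren{\frac{\log(M)\,\sqrt{k}}{\sqrt{\rho}\,\alpha'}} = \Omega\Paren{\frac{\log(M)\,\sqrt{\dims}}{\sqrt{\rho}\,\alpha'^2}}.$$
Plugging in $\log M = O(t\log\dims)$ and $\alpha' = \dist/(8\lambda)$, so $1/\alpha'^2 = \Theta(\lambda^2/\dist^2)$, recovers the claimed bound $n_2 = \Omega\Paren{t\lambda^2\sqrt{\dims}\log\dims/(\sqrt{\rho}\dist^2)}$. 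I expect the main obstacle to be the bookkeeping of the privacy--accuracy tradeoff: one must verify that the $\sqrt{k}$ coming from zCDP composition combines with $k = O(\dims/\alpha'^2)$ to give a $\sqrt{\dims}$ rather than a $\dims$ dependence. This gain hinges entirely on charging privacy only to the update rounds via the sparse-vector technique, which is exactly what distinguishes the correct analysis from a naive per-query composition.
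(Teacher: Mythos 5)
Your proposal is correct and follows essentially the same route as the paper: the paper's justification for this lemma is precisely the one-line remark that each epoch of PMW is $\eps_0$-DP, hence $O(\eps_0^2)$-zCDP by Lemma~\ref{lem:dpdefns}, and that the advanced composition step in Theorem~4.3 of~\cite{Vadhan17} should be replaced by zCDP composition (Lemma~\ref{lem:dpcomp}). You have simply filled in the standard details of that analysis (potential argument bounding the update count by $O(\dims/\alpha'^2)$, charging privacy only to update rounds via sparse vector, and the final substitution), and your bookkeeping correctly reproduces the claimed $\sqrt{\dims}$ dependence.
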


The second algorithm sepFEM (Separator-Follow-the-perturbed-leader
with exponential mechanism) has slightly worse sample complexity, but
runs in polynomial time when it has access to an optimization oracle
$\mathcal{O}$ that does the following: given as input a weighted
dataset
$(I_1, w_1), \ldots , (I_m, w_m)\in 2^{[p]} \times \mathbb{R}$, find $x\in \{\pm 1\}^\dims$,
\[
  \max_{x\in \{\pm 1\}^\dims} \sum_{i=1}^m w_i \prod_{j\in I_i} x_{j}.
\]
The oracle $\mathcal{O}$ essentially solves cost-sensitive
classification problems over the set of parity functions~\cite{ZLA03},
and it can be implemented with an integer program
solver~\cite{neworacle,GaboardiAHRW14}.

\begin{lemma}[Sample complexity of sepFEM,~\cite{neworacle}]\label{man}
  The sepFEM algorithm satisfies $\rho$-zCDP and releases $\hat{Q}$
  such that with probability greater than $\frac{19}{20}$, for all
  $I \subset [\dims]$ with $\absv{I} \le t$,
  $\absv{\hat{Q}(I) -Q(I)} \le \frac{\dist}{8\lambda}$ as long as the
  size of the data set
  $$n_2 = \Omega\Paren{ \frac{ t \lambda^2 \cdot {\dims^{5/4}}
      \log{\dims}}{\sqrt{\rho}\dist^2} }$$ The algorithm runs in
  polynomial time given access to the optimization oracle
  $\mathcal{O}$ defined above.
\end{lemma}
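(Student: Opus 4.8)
The plan is to obtain Lemma~\ref{man} as a specialization of the generic sepFEM guarantee of~\cite{neworacle} to the class of bounded-degree parity queries, mirroring how Lemma~\ref{hey} specializes PMW. First I would set up the query class: each $I \subseteq [\dims]$ with $\absv{I} \le t$ defines a linear (statistical) query $q_I(z) = \prod_{j \in I} z_j$ over the data domain $\{\pm1\}^{\dims}$, and the quantity $Q(I)$ of Algorithm~\ref{alg:DPMRF-infty} is exactly its normalized empirical answer. The number of such queries is $\absv{Q} = \sum_{i \le t} \binom{\dims}{i} = O(\dims^{t})$, so $\log \absv{Q} = O(t\log\dims)$. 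The accuracy that Lemma~\ref{yo} demands of the released $\hat Q$ is $\dist/(8\lambda)$, at confidence $19/20$; these are the targets I would feed into the generic bound, and the substitution $\dist_{\text{target}} = \dist/(8\lambda)$ is what produces the $\lambda^{2}/\dist^{2}$ factor in the final sample complexity.

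Next I would supply the two class-specific ingredients sepFEM requires. (i) A \emph{separator} for the parity class, i.e.\ a set $S^{\ast} \subseteq \{\pm1\}^{\dims}$ on which any two distinct monomials disagree. Taking the $\dims$ points $\{x^{(j)}\}_{j \in [\dims]}$, where $x^{(j)}$ is $-1$ in coordinate $j$ and $+1$ elsewhere, suffices: for $I \neq I'$ pick $j \in I \triangle I'$; then $\prod_{\ell \in I} x^{(j)}_{\ell} = -1$ exactly when $j \in I$, and since precisely one of $I,I'$ contains $j$, the two monomials take opposite values at $x^{(j)}$. This exhibits a separator of size polynomial in $\dims$. (ii) The best-response oracle: in each round sepFEM must find the synthetic record maximizing a signed, weighted combination of the current queries, $\max_{x \in \{\pm1\}^{\dims}} \sum_{i} w_i \prod_{j \in I_i} x_{j}$ over a weighted dataset $(I_i, w_i)$, which is exactly the optimization oracle $\mathcal{O}$ in the statement (cost-sensitive classification over parities). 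Hence sepFEM runs in polynomial time given $\mathcal{O}$, yielding the claimed oracle-efficiency.

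Finally I would handle privacy and parameters. As in the PMW case, the algorithm is analyzed as an adaptive composition of per-round private perturbation/selection steps; replacing the advanced $(\eps,\delta)$-composition used in~\cite{neworacle} by the zCDP composition of Lemma~\ref{lem:dpcomp} (each step being $\eps_0$-DP, hence $\tfrac{\eps_0^2}{2}$-zCDP) certifies the overall $\rho$-zCDP guarantee. Plugging $\log\absv{Q} = O(t\log\dims)$, the separator, and target accuracy $\dist/(8\lambda)$ into the sepFEM utility theorem and solving for the dataset size gives $$n_2 = \Omega\Paren{ \frac{t \lambda^{2}\, \dims^{5/4}\, \log\dims}{\sqrt{\rho}\,\dist^{2}} }.$$ The main obstacle is the bookkeeping in this last step: tracking exactly how the generic sepFEM bound's dependence on the separator size and the number of Frank--Wolfe/FTPL rounds combines to collapse the dimension dependence to $\dims^{5/4}$, in contrast to the information-theoretic $\sqrt{\dims}$ of PMW (Lemma~\ref{hey}) which instead pays $\sqrt{\log\absv{\mathcal{X}}} = \sqrt{\dims}$. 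Everything else is direct substitution and the standard zCDP conversion.
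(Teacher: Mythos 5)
Your proposal is correct and takes essentially the same route as the paper: the paper states this lemma as a direct import from~\cite{neworacle}, i.e., the generic sepFEM guarantee instantiated for the class of parity queries of order at most $t$ with accuracy target $\dist/(8\lambda)$ and the zCDP adaptation of the composition step, which is exactly what you reconstruct. The paper provides no derivation beyond the citation, so your added details (the $O(\dims^t)$ query count, the size-$\dims$ separator set for parities, and the oracle viewed as cost-sensitive classification) are consistent elaborations of what that citation entails rather than a departure from it.
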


Now we can combine Lemmas~\ref{yo}, \ref{hey}, and \ref{man} to state
the formal guarantee of Algorithm~\ref{alg:DPMRF-infty}.

\begin{theorem}
  Algorithm~\ref{alg:DPMRF-infty} is a $\rho$-zCDP algorithm which, with probability at least $2/3$, 
  finds a multilinear polynomial $u$ such that for every maximal
  monomial $I$ of $h$, $\absv{\bar{h}(I) - \bar{u}(I)} \le \dist,$
  given $\ns$ i.i.d.\ samples $Z^1,\cdots, Z^{\ns} \sim \cD$, and
  \begin{enumerate}
  \item if it uses PMW for releasing $\hat Q$; it has a sample complexity of
$$\ns =  O\Paren{ \frac{ e^{5\lambda t} \cdot \sqrt{\dims} \log^2(\dims)
  }{\sqrt{\rho} \dist^{\frac{9}{2}}}+ \frac{ t \lambda^2 \cdot
    \sqrt{\dims} \log{\dims}}{\sqrt{\rho}\dist^2}+\frac{ e^{6\lambda t} \cdot
    \log(\dims)}{\dist^6} }$$ 
and a runtime complexity that is exponential in $\dims$;
\item if it uses sepFEM for releasing $\hat Q$, it has a sample
  complexity of
$$\ns =\tilde O\Paren{ \frac{ e^{5\lambda t} \cdot \sqrt{\dims} \log^2(\dims)
  }{\sqrt{\rho} \dist^{\frac{9}{2}}}+ \frac{ t \lambda^2 \cdot
    \dims^{5/4} \log{\dims}}{\sqrt{\rho} \dist^2}+\frac{ e^{6\lambda
      t} \cdot \log(\dims)}{\dist^6} }$$ and runs in polynomial time
whenever $t = O(1)$.
\end{enumerate}
\end{theorem}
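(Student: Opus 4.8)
The plan is to assemble the three ingredients already established. First I would verify the $\rho$-zCDP guarantee by composition. Algorithm~\ref{alg:DPMRF-infty} has two privacy-consuming components: the parity-query release step, run under $\rho/2$-zCDP (via PMW or sepFEM), and the $\dims$ per-node private logistic regressions, each run under $\rho^{\prime} = \rho/(2\dims)$-zCDP. By the composition theorem for zCDP (Lemma~\ref{lem:dpcomp}), the total privacy cost is $\rho/2 + \dims \cdot \rho/(2\dims) = \rho$, so the whole algorithm is $\rho$-zCDP. Note that Lemmas~\ref{hey} and~\ref{man} are stated for a budget of $\rho$; rescaling to $\rho/2$ only changes constants, so the asymptotic form of the $\ns_2$ bounds is unaffected.

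Next I would handle accuracy by conditioning and a union bound. Invoking Lemma~\ref{hey} (for PMW) or Lemma~\ref{man} (for sepFEM), the released estimates satisfy $\absv{\hat{Q}(I) - Q(I)} \le \dist/(8\lambda)$ for all $I$ with $\absv{I} \le t$, except with probability at most $1/20$, provided $\ns_2$ meets the stated query-release requirement. Conditioned on this ``good'' event, Lemma~\ref{yo} guarantees that the output $u$ satisfies $\absv{\bar{h}(I) - \bar{u}(I)} \le \dist$ for every maximal monomial $I$ of $h$, except with probability at most $1/4$, provided $\ns_1$ is large enough and $\ns_2 = \Omega(\lambda^2 t \log(\dims)/\dist^2)$. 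A union bound over these two failure events gives overall success probability at least $1 - 1/20 - 1/4 > 2/3$, as required.

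Finally I would read off the sample complexity by setting $\ns = \ns_1 + \ns_2$ and identifying the dominant terms. The $\ns_1$ requirement is exactly the bound from Lemma~\ref{yo}, contributing the $e^{5\lambda t}\sqrt{\dims}\log^2(\dims)/(\sqrt{\rho}\dist^{9/2})$ and $e^{6\lambda t}\log(\dims)/\dist^6$ terms. For $\ns_2$, the query-release requirement dominates the weaker $\Omega(\lambda^2 t \log(\dims)/\dist^2)$ condition from Lemma~\ref{yo}: PMW yields the $t\lambda^2 \sqrt{\dims}\log(\dims)/(\sqrt{\rho}\dist^2)$ term, while sepFEM yields the $t\lambda^2 \dims^{5/4}\log(\dims)/(\sqrt{\rho}\dist^2)$ term. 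Summing $\ns_1$ and $\ns_2$ reproduces the two claimed bounds, the $\tilde{O}$ in the sepFEM case absorbing logarithmic factors. The runtime claims follow immediately: PMW runs in time exponential in $\dims$, whereas sepFEM runs in polynomial time given the oracle $\mathcal{O}$ (efficient when $t = O(1)$), and each logistic regression is efficient by Corollary~\ref{cor:EmpiricalError}.

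I do not anticipate a genuine obstacle, since the theorem is essentially a direct composition of prior results; the only point requiring care is the bookkeeping of the privacy-budget split and the two-event union bound, ensuring the $\rho/2$ rescaling of the query-release lemmas is tracked consistently so that the stated probabilities and sample-complexity constants line up.
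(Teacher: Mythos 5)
Your proposal is correct and follows essentially the same route as the paper, which proves this theorem precisely by combining Lemma~\ref{yo} (accuracy given good parity-query estimates and sufficiently large $\ns_1$) with Lemma~\ref{hey} or Lemma~\ref{man} (the $\ns_2$ requirement for PMW or sepFEM), splitting the privacy budget as in Algorithm~\ref{alg:DPMRF-infty} and summing $\ns = \ns_1 + \ns_2$. Your write-up actually makes explicit two points the paper leaves implicit --- the zCDP composition accounting $\rho/2 + \dims\cdot\rho/(2\dims) = \rho$ and the union bound $1/20 + 1/4 < 1/3$ --- but these are bookkeeping, not a different argument.
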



\section{Lower Bounds for Parameter Learning}
\label{sec:est-lb}
The lower bound for parameter estimation is based on mean estimation in $\ell_\infty$ distance.

\begin{theorem}
\label{thm:est-lb}
  Suppose $\cA$ is an $(\eps, \delta)$-differentially private
  algorithm that takes $n$ i.i.d.\ samples $Z^1, \ldots , Z^n$ drawn
  from any unknown $\dims$-variable Ising model $\cD(A,\theta)$ and outputs $\hat A$ such that
$\expectation {  \max_{i,j \in [p]} |A_{i, j} - \hat A_{i, j}|} \leq \alpha \leq 1/50.$
  Then $\ns = \Omega\Paren{\frac{\sqrt{\dims}}{\alpha\eps}}$.
\end{theorem}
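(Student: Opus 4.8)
The plan is to reduce $\ell_\infty$ mean estimation for product distributions---which has a well-known $\Omega(\sqrt{k}/(\alpha\eps))$ private sample complexity lower bound via fingerprinting codes~\cite{BunUV14,SteinkeU17a} and differentially private Fano/Assouad arguments~\cite{AcharyaSZ20}---to parameter learning of Ising models. Concretely, I would invoke the following known fact as a black box: any $(\eps,\delta)$-DP algorithm that, given $\ns$ i.i.d.\ samples from a product distribution $\mathcal{P}_\mu$ on $\{-1,1\}^k$ with mean vector $\mu \in [-1/2,1/2]^k$, outputs $\hat\mu$ with $\expectation{\max_{i\in[k]} |\mu_i - \hat\mu_i|} \le \alpha$ for $\alpha \le 1/50$, must use $\ns = \Omega(\sqrt{k}/(\alpha\eps))$ samples.

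The hard family of Ising models is the perfect matching: set $\dims = 2k$, take $\theta = 0$, and let the only nonzero entries of $A$ be $A_{2i-1,2i} = A_{2i,2i-1}$ for $i \in [k]$. Since the matching shares no vertices, $\cD(A,0)$ factorizes as a product over the $k$ pairs, and for a single pair a direct computation from the definition gives $\Pr[Z_{2i-1}Z_{2i}=1] = \sigma(2A_{2i-1,2i})$, so the product variable $P_i \coloneqq Z_{2i-1}Z_{2i} \in \{-1,1\}$ has mean $\expectation{P_i} = \tanh(A_{2i-1,2i})$. Choosing $A_{2i-1,2i} = \mathrm{arctanh}(\mu_i)$ makes each $P_i$ a $\pm1$ variable with mean exactly $\mu_i$; because $|\mu_i|\le 1/2$, every edge weight is bounded by a constant, so the width $\lambda$ of the instance is $O(1)$ and these are legitimate bounded-width Ising models over which $\cA$ is assumed to succeed.

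The reduction then proceeds as follows. Given samples $P^1,\dots,P^{\ns} \sim \mathcal{P}_\mu$, I would build Ising samples $Z^1,\dots,Z^{\ns}$ by, independently for each $j$ and each $i$, drawing $Z^j_{2i-1}$ uniformly from $\{-1,1\}$ and setting $Z^j_{2i} = P^j_i\, Z^j_{2i-1}$; one checks that $Z^j$ is then distributed exactly as $\cD(A,0)$ with $A_{2i-1,2i}=\mathrm{arctanh}(\mu_i)$, using that $Z_1$ and $Z_1 Z_2$ are independent under the single-pair Ising law. This encoding is applied record-by-record with fresh randomness, so neighboring $P$-datasets map to neighboring $Z$-datasets and the $(\eps,\delta)$-DP guarantee of $\cA$ is preserved; feeding $Z^1,\dots,Z^{\ns}$ to $\cA$ and reporting $\hat\mu_i = \tanh(\hat A_{2i-1,2i})$ yields an $(\eps,\delta)$-DP mean estimator. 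Since $\tanh$ is $1$-Lipschitz, $\max_i |\mu_i - \hat\mu_i| \le \max_i |A_{2i-1,2i} - \hat A_{2i-1,2i}| \le \max_{i,j\in[\dims]}|A_{i,j}-\hat A_{i,j}|$, so the assumed accuracy of $\cA$ transfers in expectation to $\hat\mu$. Applying the product-distribution lower bound with $k = \dims/2$ then gives $\ns = \Omega(\sqrt{k}/(\alpha\eps)) = \Omega(\sqrt{\dims}/(\alpha\eps))$.

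I expect the main work to lie in pinning down the precise form of the product-distribution lower bound to cite---it must be in $\ell_\infty$, in expectation of the maximum coordinate error, for \emph{approximate} DP, and for means bounded away from $\pm1$---and in verifying cleanly that the per-record randomized encoding both reproduces the matching Ising law and preserves the privacy guarantee. The remaining steps, namely the factorization over pairs, the identity $\expectation{P_i}=\tanh(A_{2i-1,2i})$, and the Lipschitz transfer of accuracy, are routine.
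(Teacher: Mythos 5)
Your proposal is correct and takes essentially the same route as the paper: the paper uses the identical perfect-matching family $\cD(A,0)$ with $\theta=0$, observes that $X_i = Z_{2i-1}Z_{2i}$ yields a product distribution whose coordinate means are smooth functions of the edge weights, and reduces to a fingerprinting-based lower bound for privately estimating product-distribution means. The only cosmetic differences are that the paper invokes the $\ell_2^2$ form of that lower bound (Lemma 6.2 of~\cite{KamathLSU19}, with $\gamma^2 = 32p\alpha^2$, $d = p/2$) and converts the $\ell_\infty$ guarantee into it rather than citing an $\ell_\infty$ statement as a black box, while your explicit record-by-record re-randomization (drawing $Z_{2i-1}$ uniformly and setting $Z_{2i} = P_i Z_{2i-1}$) spells out the sample-conversion and privacy-preservation step that the paper leaves implicit.
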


\begin{proof}
  Consider a Ising model $\cD (A, 0)$ with
  $A \in \RR^{\dims \times \dims}$ defined as follows: for
  $i \in [\frac{\dims}{2}], A_{2i-1,2i} = A_{2i,2i-1} = \eta_i\in
  [-\ln(2), \ln(2)]$, and $A_{ll'} = 0$ for all other pairs of
  $(l, l')$. This construction divides the $\dims$ nodes
  into $\frac{\dims}{2}$ pairs, where there is no correlation between nodes belonging to different pairs.
 It follows that
\begin{align}
  \probof {Z_{2i-1}=1, Z_{2i} =1 } =\probof {Z_{2i-1}=-1,  Z_{2i} = -1} = \frac{1}{2} \frac{e^{\eta_i}}{e^{\eta_i}+1},\nonumber\\
  \probof {Z_{2i-1}=1, Z_{2i} =-1 } =\probof {Z_{2i-1}=-1,  Z_{2i} = 1} =  \frac{1}{2} \frac{1}{e^{\eta_i}+1}.\nonumber
\end{align}
For each observation $Z$, we obtain an observation
$X\in \{\pm 1\}^{\dims/2}$ such that $X_{i} = Z_{2i-1} Z_{2i}$.  Then
each observation $X$ is distributed according to a product
distribution in $\{\pm 1\}^{(\dims/2)}$ such that the mean of each
coordinate $j$ is $(e^{\eta_i} - 1)/(e^{\eta_i} + 1)\in [-1/3, 1/3]$.

Suppose that an $(\eps, \delta)$-differentially private algorithm
takes $n$ observations drawn from any such Ising model distribution
and output a matrix $\hat A$ such that
$\E \left[\max_{i,j\in [p]} |A_{i,j} - \hat A_{i, j}|\right] \leq \alpha$. Let
$\hat \eta_i = \min\{\max\{\hat A_{2i-1, 2i} , -\ln(2)\}, \ln(2)\}$ be
the value of $A_{2i-1, 2i}$ rounded into the range of
$[-\ln(2), \ln(2)]$, and so $|\eta_i - \hat \eta_i| \leq \alpha$. It
follows that
\begin{align*}
  \left| \frac{e^{\eta_i} - 1}{e^{\eta_i} + 1} -  \frac{e^{\hat \eta_i} - 1}{e^{\hat \eta_i} + 1}\right| 
                                                                                                         &= 2  \left| \frac{e^{\eta_i} - e^{\hat \eta_i}}{(e^{\eta_i} + 1)(e^{\hat \eta_i} + 1)}\right|\\
                                                                                                         &< 2 \left| {e^{\eta_i} - e^{\hat \eta_i}}\right| \leq 4 \left(e^{|\eta_i - \hat \eta_i|} - 1\right)\leq 8 |\eta_i - \hat \eta_i| 
\end{align*}
where the last step follows from the fact that $e^a \leq 1 + 2a$ for
any $a\in [0, 1]$. Thus, such private algorithm also can estimate the
mean of the product distribution accurately:
\[
  \E \left[\sum_{i=1}^{p/2} \left| \frac{e^{\eta_i} - 1}{e^{\eta_i} +
        1} - \frac{e^{\hat \eta_i} - 1}{e^{\hat \eta_i} + 1}\right|^2
  \right] \leq 32 p \alpha^2
\]
Now we will use the following sample complexity lower bound on private
mean estimation on product distributions.
\begin{lemma}[Lemma 6.2 of~\cite{KamathLSU19}]\label{prod-low}
  If $M\colon \{\pm 1\}^{n\times d} \rightarrow [-1/3, 1/3]^d$ is
  $(\eps, 3/(64n))$-differentially private, and for every product
  distribution $P$ over $\{\pm 1\}^d$ such that the mean of each
  coordinate $\mu_j$ satisfies $-1/3 \leq \mu_j \leq 1/3$,
  \[
    \E_{X \sim P^n} \left[ \| M(X) - \mu \|_2^2 \right] \leq \gamma^2
    \leq \frac{d}{54},
  \]
  then $n\geq d/(72 \gamma \eps)$.
\end{lemma}
Then our stated bound follows by instantiating $\gamma^2 = 32p\alpha^2$
and $d = p/2$ in Lemma~\ref{prod-low}.\end{proof}


\section{Structure Learning of Graphical Models}
\label{sec:struct-ub}
In this section, we will give an $(\varepsilon,\delta)$-differentially private algorithm for learning the \emph{structure} of a Markov Random Field.
The dependence on the dimension $d$ will be only \emph{logarithmic}, in comparison to the complexity of privately learning the parameters.
As we have shown in Section~\ref{sec:est-lb}, this dependence is necessarily polynomial in $\dims$, even under approximate differential privacy.
Furthermore, as we will show in Section~\ref{sec:struct-lb}, if we wish to learn the structure of an MRF under more restrictive notions of privacy (such as pure or concentrated), the complexity also becomes polynomial in $\dims$.
Thus, in very high-dimensional settings, learning the structure of the MRF under approximate differential privacy is essentially the only notion of private learnability which is tractable.

The following lemma is immediate from stability-based mode arguments (see, e.g., Proposition 3.4 of~\cite{Vadhan17}).
\begin{lemma}
  Suppose there exists a (non-private) algorithm which takes $X = (X^1, \dots, X^n)$ sampled i.i.d.\ from some distribution $\cD$, and outputs some fixed value $Y$ (which may depend on $\cD$) with probability at least $2/3$.
  Then there exists an $(\varepsilon, \delta)$-differentially private algorithm which takes $O\left(\frac{n\log(1/\delta)}{\varepsilon}\right)$ samples and outputs $Y$ with probability at least $1 - \delta$.
\end{lemma}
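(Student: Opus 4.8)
The plan is to convert the constant success probability of the non-private algorithm into a privately releasable \emph{stable mode}, following the stability-based histogram approach referenced in the statement.

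First I would boost the success probability and manufacture a gap. Set $m = \Theta(\log(1/\delta)/\eps)$ and draw $N = mn$ samples, partitioned into $m$ disjoint batches $G_1, \dots, G_m$, each of size $n$. Running the non-private algorithm independently on each batch yields outputs $y_1 = \cA(G_1), \dots, y_m = \cA(G_m)$, and for each candidate value $v$ I form the count $c_v = |\{j : y_j = v\}|$. Since each $y_j$ equals the target value $Y$ independently with probability at least $2/3$, a Chernoff bound gives $c_Y \ge m/2$ except with probability $\exp(-\Omega(m)) \le \delta/3$; because the counts sum to $m$, on this event every other value has count at most $m/2$, so the gap between $c_Y$ and the runner-up count is in fact $\Omega(m)$.

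Next I would privatize the argmax via the stability-based histogram mechanism. The key privacy observation is that changing a single one of the $N$ underlying datapoints lands in exactly one batch, hence alters at most one output $y_j$; this moves at most one unit of mass between two bins, so each count has sensitivity $1$. I add independent $\mathrm{Lap}(2/\eps)$ noise to every nonempty count, report the value maximizing the noisy count, and output $\bot$ unless that noisy maximum exceeds a threshold $\tau = \Theta(\log(1/\delta)/\eps)$. The threshold is exactly what buys approximate privacy: a value absent from the histogram (count $0$) is reported only if its noise exceeds $\tau$, which happens with probability at most $\delta$ by the Laplace tail, and the standard analysis (e.g.\ Proposition~3.4 of~\cite{Vadhan17}) then certifies $(\eps,\delta)$-DP.

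Finally I would check utility on the good event. There $c_Y = \Omega(m)$, and choosing $m$ a large enough constant multiple of $\log(1/\delta)/\eps$ ensures simultaneously that (i) $c_Y$ clears the threshold $\tau$ even after noise, and (ii) the $\Omega(m)$ gap dominates the $O(\log(1/\delta)/\eps)$-scale noise on both the $Y$-bin and the runner-up bin, so the noisy argmax equals $Y$; each failure here has probability at most $\delta/3$ by Laplace concentration. A union bound over the three events gives output $Y$ with probability at least $1 - \delta$, and the total sample count is $N = mn = O(n \log(1/\delta)/\eps)$ as claimed. The one genuinely delicate point is calibrating $\tau$ and the noise scale together so that privacy (no absent value clears $\tau$) and utility ($Y$ clears $\tau$ with room to beat the runner-up) hold at the \emph{same} $m = \Theta(\log(1/\delta)/\eps)$; everything else reduces to routine concentration.
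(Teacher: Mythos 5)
Your proposal is correct and takes essentially the same route as the paper: the paper proves this lemma by direct appeal to stability-based mode/histogram arguments (Proposition 3.4 of Vadhan's survey, plus the subsample-and-aggregate step), and your batching-plus-noisy-thresholded-histogram argument is precisely the standard proof underlying that citation. One minor quantitative slip: an $\Omega(m)$ gap between $c_Y$ and the runner-up does not follow from $c_Y \ge m/2$ alone (both could be exactly $m/2$); instead take the Chernoff bound at, say, $c_Y \ge 0.6m$, so every other count is at most $0.4m$ and the gap is at least $0.2m$.
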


We can now directly import the following theorem from~\cite{WuSD19}.
\begin{theorem}[\cite{WuSD19}]
There exists an algorithm which, with probability at least $2/3$, learns the structure of a pairwise graphical model.
  It requires $n = O\left(\frac{\lambda^2 \ab^4 e^{14\lambda} \log(\dims \ab)}{\eta^4}\right)$ samples.
\end{theorem}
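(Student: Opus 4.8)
The plan is to reduce exact structure recovery to sufficiently accurate parameter estimation, and then threshold. Concretely, I would run the node-wise $\ell_1$-constrained logistic regression of~\cite{WuSD19} --- the non-private analogue of Algorithm~\ref{alg:DPPair} obtained by replacing $\cA_{PFW}$ with an ordinary empirical risk minimizer over $\cC = \{\norm{w} \le 2\lambda\ab\}$ --- to produce estimates $\widehat{W}_{i,j}(\posa,\posb)$ of every block of the weight matrices. I would then declare an edge $(i,j)$ present precisely when $\max_{\posa,\posb}\absv{\widehat{W}_{i,j}(\posa,\posb)} \ge \eta/2$, and absent otherwise.

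The correctness of the threshold rests on achieving $\ell_\infty$ parameter accuracy strictly below $\eta/2$. First I would invoke Lemma~\ref{lem:PairwiseToLR} to write, for each node $i$ and each pair $\posa\neq\posb$, the relevant conditional law of $Z_i$ as a sigmoid $\sigma(\langle w^*, \cdot\rangle)$ with $\norm{w^*}\le\lambda\ab$, so that minimizing logistic loss is the right objective. Next, standard uniform-convergence bounds for $\ell_1$-constrained logistic regression (the generalization term of Theorem~\ref{thm:generalization_error} with the privacy contribution removed, i.e.\ Lemma 7 of~\cite{WuSD19}) give excess population risk at most $\gamma$ using $O(\lambda^2\ab^2\log(\dims\ab)/\gamma^2)$ samples. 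Finally, since the model is $\delta$-unbiased with $\delta = e^{-2\lambda}/\ab$ (Lemma~\ref{lem:pairwise-unbiased}, together with Lemma~\ref{lem:marginal} for the marginal on the one-hot features), Lemma~\ref{lem:parameter-error-kalphabet} converts excess risk $\gamma$ into $\norminf{w^{erm}-w^*} = O(e^{\norminfone{w^*}}\sqrt{\gamma/\delta})$. Setting this below $\eta/2$ forces $\gamma = \Theta(\delta\,\eta^2 e^{-2\norminfone{w^*}})$, and tracking the $e^{O(\lambda)}$ factors from $1/\delta$ and $e^{\norminfone{w^*}}$ --- plus an additional $1/\delta = \ab e^{2\lambda}$ blow-up because only a $2\delta$ fraction of samples have $Z_i\in\{\posa,\posb\}$, exactly as in the proof of Theorem~\ref{thm:est-ub-pair} --- yields the claimed $e^{14\lambda}$ dependence.

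Given accuracy $\alpha < \eta/2$ at a single block, the separation argument is immediate: if $(i,j)\in E$ then some entry has $\absv{W_{i,j}(\posa,\posb)}\ge\eta$, so its estimate exceeds $\eta-\alpha>\eta/2$; if $(i,j)\notin E$ then $W_{i,j}\equiv 0$ and every estimate is at most $\alpha<\eta/2$. Thus thresholding recovers $E$ exactly whenever all blocks are estimated to accuracy $\alpha$ simultaneously. To guarantee this, I would drive the per-block failure probability below $1/(10\dims^2\ab^2)$ and union bound over the $\binom{\dims}{2}$ candidate edges and $\ab^2$ entries, which supplies the $\log(\dims\ab)$ factor; the $1/\eta^4$ arises from composing the $1/\gamma^2$ risk-to-sample conversion with the substitution $\gamma\propto\eta^2$.

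The main obstacle I anticipate is not any single step but controlling the constants and exponential factors along the chain risk $\to$ parameter error $\to$ threshold so that \emph{exact} recovery survives the union bound with constant probability. In particular, Lemma~\ref{lem:parameter-error-kalphabet} applies only when $\gamma \le \delta e^{-2\norminfone{w^*}-6}$, which forces $\gamma$ to be exponentially small in $\lambda$; it is the interaction of this applicability constraint with the $1/\gamma^2$ sample dependence that produces the $e^{14\lambda}$ term, and verifying that a single choice of $\gamma$ both satisfies the hypothesis and drives the $\ell_\infty$ error below $\eta/2$ is the delicate bookkeeping. Since this is precisely the non-private guarantee established in~\cite{WuSD19}, I would ultimately cite their analysis rather than re-derive these constants.
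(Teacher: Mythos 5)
Your proposal is correct and takes essentially the same approach as the paper: the paper gives no internal proof of this statement at all, importing it directly from \cite{WuSD19}, and your sketch --- node-wise $\ell_1$-constrained logistic regression, conversion of excess risk to $\ell_\infty$ parameter error via Lemma~\ref{lem:parameter-error-kalphabet} under $\delta$-unbiasedness, then thresholding at $\eta/2$ --- is precisely the argument underlying the cited result (the same chain of lemmas the paper reuses in its private analogue, Theorem~\ref{thm:est-ub-pair}). Like the paper, you ultimately defer to \cite{WuSD19} for the delicate $e^{14\lambda}$ constant-tracking, so nothing further is needed.
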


This gives us the following private learning result as a corollary.
\begin{corollary}
\label{thm:str-ub-pair}
  There exists an $(\varepsilon, \delta)$-differentially private algorithm which, with probability at least $2/3$, learns the structure of a pairwise graphical model.
  It requires $n = O\left(\frac{\lambda^2 \ab^4 e^{14\lambda} \log(\dims \ab)\log(1/\delta)}{\varepsilon\eta^4}\right)$ samples.
\end{corollary}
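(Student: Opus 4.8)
The plan is to derive the corollary as an immediate consequence of the two preceding results: the generic stability-based privacy transformation (the lemma stated just above) and the non-private structure learning guarantee of~\cite{WuSD19} (the theorem stated just above). The entire argument is a composition, so I would not need any new technical machinery beyond checking that the hypotheses line up.

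First I would verify that the non-private structure learning algorithm satisfies the hypothesis of the transformation lemma, namely that it outputs a \emph{fixed} value with probability at least $2/3$. The key observation is that the target output---the edge set $E$ of the dependency graph of $\cD(\cW,\Theta)$---is a deterministic function of the underlying distribution and does not depend on the random draw of samples. The imported theorem guarantees that the algorithm recovers this exact edge set with probability at least $2/3$; crucially, the guarantee is \emph{exact} recovery rather than approximate, so over the randomness of the samples the algorithm's output is concentrated on a single value $Y = E$, which is precisely the form required by the lemma.

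Next I would invoke the transformation lemma with this algorithm and with $Y$ set to the true edge set. The lemma then yields an $(\varepsilon,\delta)$-differentially private algorithm that outputs $Y$ with probability at least $1-\delta \ge 2/3$, at the cost of inflating the sample requirement by a multiplicative factor of $O(\log(1/\delta)/\varepsilon)$. Substituting the non-private sample complexity $n_0 = O\left(\frac{\lambda^2 \ab^4 e^{14\lambda}\log(\dims \ab)}{\eta^4}\right)$ from the theorem into this bound gives
\[
  n = O\left(\frac{\lambda^2 \ab^4 e^{14\lambda}\log(\dims \ab)\log(1/\delta)}{\varepsilon\eta^4}\right),
\]
as claimed.

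The main (and essentially only) point requiring care is the verification in the first step: one must check that exact structure recovery really does place the algorithm's output distribution on a single deterministic value, so that the stability-based mode argument applies. Were the non-private guarantee merely approximate---returning, say, an edge set within small Hamming distance of $E$---the output would not be a fixed value and the lemma could not be applied directly; it is the exactness of the recovery in~\cite{WuSD19} that makes the transformation go through cleanly.
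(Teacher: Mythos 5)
Your proposal is correct and follows exactly the paper's route: the corollary is obtained by applying the stability-based transformation lemma to the non-private exact-recovery guarantee of~\cite{WuSD19}, with the multiplicative $O(\log(1/\delta)/\varepsilon)$ blow-up in sample complexity. Your added care in verifying that exact structure recovery makes the output a single fixed value $Y=E$ (so the mode argument applies) is precisely the observation the paper relies on, stated there only informally in Section~\ref{sec:results-techniques}.
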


For binary MRFs of higher-order, we instead import the following theorem from~\cite{KlivansM17}:

\begin{theorem}[\cite{KlivansM17}]
There exists an algorithm which, with probability at least $2/3$, learns the structure of a binary $t$-wise MRF.
It requires $n = O\left(\frac{ e^{O\Paren{ \lambda t}} \log(\frac{\dims}{\eta} )} {\eta^4}\right)$ samples.
\end{theorem}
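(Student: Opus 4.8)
The plan is to reduce structure learning to a collection of sparse logistic regression problems, one per node, just as in the parameter-learning arguments of Sections~\ref{sec:MRFl_1} and~\ref{sec:MRFl_infty}, but run non-privately and with the accuracy target sharpened to the minimum-weight scale $\eta$. Concretely, for each node $i$ I would invoke Lemma~\ref{lem:sigmoidMRF}, which gives $\probof{Z_i=1\mid Z_{-i}=x}=\sigma(2\partial_i h(x))$, to cast recovery of the coefficient vector $w^\ast = 2\,\overline{\partial_i h}$ (indexed by subsets $J\subset[\dims]\setminus\{i\}$ with $\absv{J}\le t-1$) as fitting a logistic model over the degree-$\le t-1$ monomial features of $Z_{-i}$. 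The width bound gives $\norm{w^\ast}\le 2\lambda$, so the problem lives in an $\ell_1$ ball of radius $2\lambda$, and a standard non-private excess-risk bound for $\ell_1$-constrained logistic regression (the non-private analogue of Theorem~\ref{thm:generalization_error}, using the uniform-convergence guarantee of~\cite{WuSD19}, with sample complexity scaling as $\lambda^2\log(\dims^{t})/\gamma^2$) produces an estimate whose expected logistic loss is within $\gamma$ of optimal.

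Next I would convert this loss guarantee into coefficient accuracy. The key input is Lemma~\ref{lem:parameter-error-mrf}, or equivalently the per-monomial tail bound invoked inside the proof of Lemma~\ref{yo}: combined with the $\delta$-unbiasedness of binary $t$-wise MRFs, $\delta=e^{-2\lambda}/2$, a loss gap $\gamma$ yields recovery of each coefficient $\overline{\partial_i h}(J)$ up to error $O\Paren{e^{\lambda t}\sqrt{\gamma}}$, since the conversion factor $\sqrt{\gamma/\delta^{t}}$ supplies the $e^{O(\lambda t)}$ dependence. To separate true edges (coefficient at least $\eta$ in absolute value) from non-edges (coefficient $0$), I need this error below $\eta/2$, which forces $\gamma=\Theta\Paren{\eta^2 e^{-O(\lambda t)}}$. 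Substituting back into the $1/\gamma^2$ sample bound gives the stated $\ns = O\Paren{e^{O(\lambda t)}\log(\dims/\eta)/\eta^{4}}$, where $\log(\dims/\eta)$ absorbs both the $\log(\dims^t)$ feature-count dependence and the slack needed for a union bound over all nodes and monomials.

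The final step is a hard threshold: declare the clique $J\cup\{i\}$ present iff the estimated coefficient exceeds $\eta/2$ in magnitude, and take the union of the recovered cliques over $i\in[\dims]$. Correctness is immediate from the $\ell_\infty$ guarantee, since every genuinely nonzero coefficient, being at least $\eta$, survives the threshold while every zero coefficient stays below it, and a union bound over the $\dims$ nodes and the $\binom{\dims}{t}$ candidate monomials succeeds because the per-event failure probability decays polynomially in $1/\dims$ while the sample count pays only a logarithm. The main obstacle I anticipate is precisely keeping the dependence on $\dims$ logarithmic: a naive bound using $\norminf{\cdot}\le\norm{\cdot}$, as in the $\ell_1$ statement of Lemma~\ref{lem:parameter-error-mrf}, injects a $\binom{\dims}{t}$ factor and would destroy high-dimensional tractability. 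Avoiding this requires the sharper \emph{per-monomial} high-probability recovery used in Lemma~\ref{yo}, which controls each coefficient individually rather than their aggregate, so that the thresholding decision for each monomial is made without paying for all of them simultaneously.
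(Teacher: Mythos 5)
First, a scoping note: the paper itself does not prove this theorem at all --- it is imported verbatim from \cite{KlivansM17} and used as a black box (the paper's contribution here is only the private corollary obtained via the stability lemma). So your attempt has to be judged against the proof in \cite{KlivansM17}, whose machinery the paper reproduces in its private form in Section~\ref{sec:MRFl_infty}. Your skeleton --- per-node $\ell_1$-constrained logistic regression via Lemma~\ref{lem:sigmoidMRF}, a loss-to-parameter conversion exploiting $\delta$-unbiasedness, and a threshold at the scale $\eta/2$ --- is indeed that proof's skeleton.

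The gap is in the conversion step, and it is not cosmetic. The per-monomial bound you invoke (the inner step of Lemma~\ref{yo}, i.e., Lemma 6.4 of \cite{KlivansM17}) controls the random \emph{evaluation} $\partial_J v_i(X)$ at a fresh sample $X$: it says $\Pr_X\bigl[\absv{\overline{\partial_i h}(J) - \partial_J v_i(X)} \ge \eta/4\bigr] = O\Paren{\gamma e^{3\lambda t}/\eta^2}$. It does not control the fitted coefficient $\bar v_i(J)$, and no such control follows from the loss guarantee: the logistic loss only constrains the predictions $\langle w,x\rangle$ on typical inputs, so the minimizer can shift $\bar v_i(J)$ and compensate with higher-degree monomials (which need not vanish in $v_i$ even when $J$ is maximal in $h$) whose contributions cancel on typical $X$. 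Hence ``threshold the estimated coefficient'' is not a defined procedure as written; you need an explicit estimator aggregated over fresh samples. If you aggregate by \emph{averaging}, as the paper's Lemma~\ref{yo} does, the bad event contributes $\Theta(\lambda)\cdot\Pr[\text{bad}]$ of bias, which forces the tail probability down to $O(\eta/\lambda)$, hence $\gamma = O\Paren{\eta^3 e^{-O(\lambda t)}/\lambda}$ and a sample complexity scaling as $e^{O(\lambda t)}/\eta^6$ --- strictly worse than the claimed $1/\eta^4$ (this is precisely why the paper's private $\ell_\infty$ result carries a $1/\dist^6$ term). What \cite{KlivansM17} actually does, and what is missing from your write-up, is to keep $\gamma = \Theta\Paren{\eta^2 e^{-O(\lambda t)}}$ so that the tail probability is merely a small constant, and then decide each candidate monomial by a \emph{majority vote} of the events $\absv{\partial_J v_i(X^{(m)})} > \eta/2$ over $O(t\log\dims)$ fresh samples; Chernoff plus a union bound over the at most $\dims^{t}$ candidates then yields the stated $\log(\dims/\eta)/\eta^4$ rate. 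With that replacement your outline goes through, together with one small correction to your correctness claim: zero coefficients of \emph{non-maximal} monomials are not guaranteed to test small, but false positives there are harmless since such a monomial lies inside a true clique, and conversely every edge is covered by some maximal monomial of coefficient at least $\eta$, so the dichotomy should be stated at the level of maximal monomials rather than all coefficients.
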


This gives us the following private learning result as a corollary.
\begin{corollary}
  There exists an $(\varepsilon, \delta)$-differentially private algorithm which, with probability at least $2/3$,  learns the structure of a binary $t$-wise MRF.
  It requires $$n = O\left(\frac{ e^{O\Paren{ \lambda t}} \log(\frac{\dims}{\eta} )\log(1/\delta)} {\varepsilon \eta^4}\right)$$ samples.
\end{corollary}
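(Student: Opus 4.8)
The plan is to obtain this corollary as a direct composition of the non-private structure-learning guarantee of~\cite{KlivansM17} (stated immediately above) with the stability-based transformation lemma at the start of this section. The essential observation is that structure learning fits the template required by that lemma: the object to be output is the edge set $E$ of the dependency graph of $\cD$, which is a \emph{fixed value depending only on $\cD$} rather than on the particular sample. Since the non-private algorithm of~\cite{KlivansM17} recovers $E$ \emph{exactly} with probability at least $2/3$, its output coincides with this fixed target $Y = E$ with probability at least $2/3$, so the hypothesis of the stability lemma is satisfied verbatim.

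First I would instantiate the stability-based lemma with $\cA$ equal to the non-private structure learner, which succeeds on $n_0 = O\Paren{e^{O(\lambda t)} \log\Paren{\frac{\dims}{\eta}}/\eta^4}$ samples. The lemma then yields an $(\varepsilon, \delta)$-differentially private algorithm that uses $O\Paren{n_0 \log(1/\delta)/\varepsilon}$ samples and outputs $Y = E$ with probability at least $1 - \delta$. Substituting the value of $n_0$ gives the claimed sample complexity
$$n = O\Paren{\frac{e^{O(\lambda t)} \log\Paren{\frac{\dims}{\eta}} \log(1/\delta)}{\varepsilon \eta^4}}.$$
For any $\delta \le 1/3$ the success probability $1 - \delta$ exceeds the required $2/3$, which establishes the accuracy claim, while the privacy guarantee is inherited directly from the transformation lemma.

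I expect no serious obstacle here, since the statement is purely a black-box combination of two results already available in the excerpt; the only point requiring genuine care is the justification that structure learning produces a single fixed target $Y$. This relies crucially on the \emph{exactness} of the non-private recovery guarantee: an approximate structure learner that merely identified most edges correctly would not return a deterministic function of $\cD$, and the stability argument (whether realized through propose-test-release or through stability-based histograms) would break down. Fortunately, as emphasized in the discussion preceding this corollary, the known non-private algorithms for this problem recover the entire graph exactly with constant probability, which is precisely the property the transformation demands.
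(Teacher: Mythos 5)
Your proposal is correct and matches the paper's own argument exactly: the paper likewise obtains this corollary by plugging the exact-recovery guarantee of the non-private structure learner of~\cite{KlivansM17} into the stability-based transformation lemma (Proposition 3.4 of~\cite{Vadhan17}), paying the multiplicative $O(\log(1/\delta)/\varepsilon)$ cost in sample complexity. Your emphasis on the necessity of \emph{exact} recovery (so that the output is a fixed value $Y = E$ determined by $\cD$) is precisely the point the paper highlights as well.
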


\section{Lower Bounds for Structure Learning of Graphical Models}
\label{sec:struct-lb}

%
%
%

In this section, we will prove structure learning lower bounds under pure DP or zero-concentrated DP. 
The graphical models we consider are the Ising models and pairwise graphical model. 
However, we note that all the lower bounds for the Ising model also hold for binary $t$-wise MRFs, since the Ising model is a special case of binary $t$-wise MRFs corresponding to $t=2$. 
We will show that under $(\eps,0)$-DP or $\rho$-zCDP, a polynomial dependence on the dimension is unavoidable in the sample complexity.

In Section~\ref{sec:struct-lb-Ising}, we assume that our samples are generated from an Ising model. In Section~\ref{sec:struct-lb-pairwise}, we extend our lower bounds to pairwise graphical models.

\subsection{Lower Bounds for Structure Learning of Ising Models}
\label{sec:struct-lb-Ising}

\begin{theorem}
\label{thm:str-ising}
Any $(\eps,0)$-DP algorithm which learns the structure of an Ising model with minimum edge weight $\eta$ with probability at least $2/3$ requires
$\ns = \Omega\Paren{\frac{\sqrt{\dims}}{\eta\eps} + \frac{\dims}{\eps}}$  samples. 
Furthermore, at least $\ns =\Omega\Paren{ \sqrt {\frac{\dims} {\rho}} }$ samples are required for the same task under $\rho$-zCDP.
\end{theorem}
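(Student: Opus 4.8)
The plan is to prove every term by a \emph{packing} argument, reducing the task to distinguishing a large family of Ising models with pairwise distinct dependency graphs. I would first record the packing lemma in two forms. Suppose $\{\cD_v\}_{v\in\mathcal V}$ is a family of $m$ Ising models whose structures $G_v$ are pairwise distinct, and suppose an algorithm $\cA$ outputs $G_v$ with probability at least $2/3$ on $n$ i.i.d.\ samples from $\cD_v$. For each $v$, averaging the success probability over the draw of the sample produces a \emph{fixed} size-$n$ dataset $D_v$, in the support of $\cD_v^n$, with $\pr{\cA(D_v)=G_v}\ge 2/3$. Any two size-$n$ datasets differ in at most $n$ points, so group privacy for $(\eps,0)$-DP gives $\pr{\cA(D_v)=G_{v'}}\ge e^{-\eps n}\pr{\cA(D_{v'})=G_{v'}}\ge \tfrac23 e^{-\eps n}$; summing over the disjoint events $\{G_{v'}\}_{v'}$ gives $1\ge \tfrac23 m\,e^{-\eps n}$, i.e.\ $n=\Omega(\log m/\eps)$. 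For the concentrated case I would use that a $\rho$-zCDP mechanism is $k^2\rho$-zCDP on datasets at Hamming distance $k$; taking $k=n$, converting to approximate DP, and running the identical computation yields the weaker bound $n=\Omega(\sqrt{\log m/\rho})$. I would stress that this is inherently a pure/concentrated-DP argument: it breaks under approximate DP (where the stability-based upper bound of Corollary~\ref{thm:str-ub-pair} gives only $O(\log\dims)$ samples), which is exactly the separation the section is after.

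Next I would build the packing realizing $\log m=\Omega(\dims)$. Take the $\dims/2$ disjoint candidate edges $(2i-1,2i)$ and, for each $S\subseteq[\dims/2]$, let $\cD_S$ be the Ising model in which edge $(2i-1,2i)$ carries weight $\eta$ exactly when $i\in S$ and all other weights vanish. These are $m=2^{\dims/2}$ genuinely distinct structures, each of width at most $\eta\le\lambda$ and minimum nonzero weight exactly $\eta$, and each is a product distribution over the pairs, so distinct structures are always identifiable. Feeding $\log m=\Theta(\dims)$ into the two packing bounds produces the terms $\Omega(\dims/\eps)$ and $\Omega(\sqrt{\dims/\rho})$, establishing the second summand of the pure-DP bound and the entire zCDP bound.

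The remaining term $\Omega(\sqrt{\dims}/(\eta\eps))$ is the delicate one and I expect it to be the main obstacle, since the disjoint-matching packing above is completely insensitive to $\eta$. To surface the $1/\eta$ factor one cannot just count matchings; instead I would exploit that a small minimum weight, together with the bounded-width constraint, lets a single graph carry \emph{many more} edges: fix a degree-$D$ template on all $\dims$ nodes with $D=\lfloor \lambda/\eta\rfloor$, and index the packing by arbitrary \emph{subgraphs} of the template, each present edge again weighted $\eta$. Every such subgraph has degree at most $D$, hence width at most $D\eta\le\lambda$ and minimum weight $\eta$, and distinct edge sets give distinct structures, so $\log m=\Omega(\dims/\eta)$ (in particular $\Omega(\sqrt{\dims}/\eta)$). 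Plugging this into the pure-DP packing bound yields at least $\Omega(\sqrt{\dims}/(\eta\eps))$, as claimed. The hard part is the bookkeeping of this second construction: verifying that the template admits exponentially many valid subgraphs while respecting the exact width and minimum-weight definitions of Definition~\ref{def:ising}, confirming the two pure-DP terms are genuinely complementary (the $\eta$-term dominates precisely when $\eta\lesssim 1/\sqrt{\dims}$), and double-checking that the whole argument is consistent with the $O(\log\dims)$ approximate-DP upper bound — that is, that it truly uses $\delta=0$ and does not secretly reduce to a fingerprinting-type mean-estimation lower bound (which would illegitimately transfer to approximate DP).

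Finally, I would obtain the pairwise-graphical-model statements of Section~\ref{sec:struct-lb-pairwise} by embedding these binary Ising packings into the larger alphabet $[\ab]$, so that the same two constructions and the same two packing inequalities carry over verbatim.
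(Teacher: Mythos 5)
Your matching packing together with the standard group-privacy packing lemma correctly delivers the $\Omega(\dims/\eps)$ and $\Omega\bigl(\sqrt{\dims/\rho}\bigr)$ terms, and there your argument coincides with the paper's (the paper phrases the pure-DP step via a coupling lemma and the zCDP step via a packing lemma of~\cite{BunS16}, but with coupling distance $\ns$ these reduce to exactly your computation). The genuine gap is the term you yourself flagged as delicate, $\Omega\bigl(\sqrt{\dims}/(\eta\eps)\bigr)$. The paper does \emph{not} obtain the $1/\eta$ factor from a larger packing; it obtains it from a sharper packing lemma (its Lemma~\ref{lem:coupling}): any two models in the \emph{same} matching family are within total variation distance $2\eta\sqrt{\dims}$ of one another, so a maximal coupling of their $\ns$-fold products has expected Hamming distance at most $\min\bigl(2\ns\eta\sqrt{\dims},\,\ns\bigr)$, and group privacy applied over this (possibly much smaller than $\ns$) distance gives $\eps = \Omega\bigl(\dims/(\ns\min(\eta\sqrt{\dims},1))\bigr)$, i.e.\ both claimed terms simultaneously, for every $\eta$. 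Your substitute---all subgraphs of a degree-$\lfloor\lambda/\eta\rfloor$ template---cannot recover this in general: any counting-based packing is capped by the total number of labeled graphs, $\log m \le \binom{\dims}{2}$, so your method can never certify more than $\Omega(\dims^2/\eps)$. Since $\sqrt{\dims}/(\eta\eps) > \dims^2/\eps$ precisely when $\eta < \dims^{-3/2}$, your proof does not establish the theorem in the small-$\eta$ regime (relatedly, once $\eta < \lambda/(\dims-1)$ the template degree caps at $\dims-1$ and your count stops growing as $\eta$ shrinks, which your bookkeeping does not account for). The $1/\eta$ factor is intrinsically a statement about the \emph{statistical closeness} of the hard instances, not their number, and only a closeness-aware (coupling/TV-refined) packing lemma converts it into a privacy lower bound.

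A secondary weakening: your template instances have width $\Theta(\lambda)$, whereas the paper's hard family consists of sub-matchings in which every present edge has weight exactly $\eta$, hence width $\eta$. The paper's bound therefore holds even against algorithms that are promised the model is a width-$\eta$ matching, while yours only rules out algorithms required to handle width-$\Theta(1)$, degree-$\Theta(1/\eta)$ graphs; since this lower bound is meant to be contrasted with width-bounded upper bounds (Table~\ref{tbl:ba-table}), that distinction is not cosmetic. Your zCDP step (group privacy scaling as $k^2\rho$, then converting and rerunning the packing computation) is fine and matches the effect of the paper's Lemma~\ref{lem:coupling-zCDP}.
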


\begin{proof}
Our lower bound argument is in two steps. 
The first step is to construct a set of distributions, consisting of $2^{\frac{\dims}{2}}$ different Ising models such that any feasible structure learning algorithm should output different answers for different distributions. In the second step, we utilize the probabilistic packing argument for DP~\cite{AcharyaSZ20}, or the packing argument for zCDP~\cite{BunS16} to get the desired lower bound.

To start, we would like to use the following binary code to construct the distribution set. Let $\cC = \{ 0,1\}^{\frac{\dims}{2}}$,
given $c \in C $, we construct the corresponding distribution $\cD (A^c, 0)$ with $A^c \in \RR^{\dims \times \dims}$ defined as follows: for $i \in  [\frac{\dims}{2}], A^c_{2i-1,2i} = A^c_{2i,2i-1} = \eta \cdot c[i]$, and 0 elsewhere. By construction, we divide the $\dims$ nodes into $\frac{\dims}{2}$ different pairs, where there is no correlation between nodes belonging to different pairs. Furthermore, for pair $i$, if $c[i] = 0$, which means the value of node $2i-1$ is independent of node $2i$, it is not hard to show 
\begin{align}
\probof {Z_{2i-1}=1, Z_{2i} =1 } =\probof {Z_{2i-1}=-1,  Z_{2i} = -1} = \frac1{4},\nonumber\\
\probof {Z_{2i-1}=1, Z_{2i} =-1 } =\probof {Z_{2i-1}=-1,  Z_{2i} = 1} = \frac1{4}.\nonumber
\end{align}
On the other hand, if $c[i]=1$,
\begin{align}
\probof {Z_{2i-1}=1, Z_{2i} =1 } =\probof {Z_{2i-1}=-1,  Z_{2i} = -1} = \frac1{2} \cdot \frac{e^{\eta}}{e^{\eta}+1},\nonumber\\
\probof {Z_{2i-1}=1, Z_{2i} =-1 } =\probof {Z_{2i-1}=-1,  Z_{2i} = 1} = \frac1{2} \cdot \frac{1}{e^{\eta}+1}.\nonumber
\end{align}
The Chi-squared distance between these two distributions is 
$$8 \Brack{\Paren{\frac12\cdot\frac{e^\eta}{e^\eta+1} -\frac14}^2+ \Paren{\frac12\cdot\frac{1}{e^\eta+1} -\frac14}^2} = \Paren{1-\frac{2}{e^\eta+1}}^2 \le 4\eta^2.$$

Now we want to upper bound the total variation distance between $\cD (A^{c_1}, 0)$ and $\cD (A^{c_2}, 0)$ for any $c_1 \neq c_2 \in \cC$. Let $P_i$ and $Q_i$ denote the joint distribution of node $2i-1$ and node $2i$ corresponding to $\cD (A^{c_1}, 0)$ and $\cD (A^{c_2}, 0)$.
We have that
$$d_{TV} \Paren{\cD (A^{c_1}, 0),\cD (A^{c_2}, 0)} \le \sqrt{ 2 d_{KL} \Paren{\cD (A^{c_1}, 0),\cD (A^{c_2}, 0)} }=  \sqrt{ 2 \sum_{i=1}^{\frac{\dims}{2}}d_{KL} \Paren{P_i, Q_i}} \le \min\Paren{2\eta\sqrt{\dims}, 1},$$
where the first inequality is by Pinsker's inequality, and the last inequality comes from the fact that the KL divergence is always upper bounded by the Chi-squared distance.

In order to attain pure DP lower bounds, we utilize the probabilistic version of the packing argument in~\cite{hz}, as stated below.
\begin{lemma}
\label{lem:coupling}
Let $\cV=\{P_1, P_2,...,P_M\}$ be a set of $M$ distributions over $\cX^{\ns}$. Suppose that for any pair of distributions $P_i$ and $P_j$, there exists a coupling between $P_i$ and $P_j$, such that $\expectation{\ham {\Xon} {\Yon}} \le D$, where $\Xon\sim P_i$ and $\Yon\sim P_j$. Let $ \{S_{i}\}_{i \in [M]}$ be a collection of disjoint subsets of $\cS$. If there exists an $\eps$-DP algorithm $\cA : \cX^{\ns} \to \cS$ such that for every $i \in [M]$, given $\Zon \sim P_i$, $\probof{\cA(\Zon) \in S_{i}}\ge \frac{9}{10}$, then
$$\eps = \Omega \Paren{\frac{\log M}{D}}.$$
\end{lemma}

For any $c_1, c_2 \in C$, we have $d_{TV} \Paren{\cD (A^{c_1}, 0),\cD (A^{c_2}, 0)} \le \min\Paren{2\eta\sqrt{\dims}, 1}$. By the property of maximal coupling~\cite{coupling}, there must exist some coupling between $\cD^{\ns} (A^{c_1}, 0)$ and $\cD^{\ns} (A^{c_2}, 0)$  with expected Hamming distance smaller than $\min\Paren{2\ns\eta\sqrt{\dims} , \ns}$. Therefore, we have $\eps = \Omega \Paren{\frac{\log \absv{\cC}}{\min\Paren{\ns\eta\sqrt{\dims} , \ns}}}$, and  accordingly, $\ns = \Omega\Paren{\frac{\sqrt{\dims}}{\eta\eps} + \frac{\dims}{\eps}}$.

Now we move to zCDP lower bounds. We utilize a different version of the  packing argument~\cite{BunS16}, which works under zCDP.
\begin{lemma}
\label{lem:coupling-zCDP}
Let $\cV=\{P_1, P_2,...,P_M\}$ be a set of $M$ distributions over $\cX^{\ns}$. Let $ \{S_{i}\}_{i \in [M]}$ be a collection of disjoint subsets of $\cS$. If there exists an $\rho$-zCDP algorithm $\cA : \cX^{\ns} \to \cS$ such that for every $i \in [M]$, given $\Zon \sim P_i$, $\probof{\cA(\Zon) \in S_{i}}\ge \frac{9}{10}$, then
$$\rho = \Omega \Paren{\frac{\log M}{\ns^2}}.$$
\end{lemma}
By Lemma~\ref{lem:coupling-zCDP}, we derive $\rho = \Omega\Paren{\frac{\dims}{\ns^2}}$ and $\ns =\Omega\Paren{ \sqrt {\frac{\dims} {\rho}} }$ accordingly.
\end{proof}

\subsection{Lower Bounds for Structure Learning of Pairwise Graphical Models}
\label{sec:struct-lb-pairwise}

Similar techniques can be used to derive lower bounds for pairwise graphical models.

\begin{theorem}
Any $(\eps,0)$-DP algorithm which learns the structure of the $\dims$-variable pairwise graphical models with minimum edge weight $\eta$ with probability at least $2/3$ requires
$\ns = \Omega\Paren{\frac{\sqrt{\dims}}{\eta\eps} + \frac{\ab^2\dims}{\eps}}$ samples. Furthermore, at least $\ns =\Omega\Paren{  \sqrt {\frac{\ab^2\dims} {\rho}} }$ samples are required for the same task under $\rho$-zCDP.
\end{theorem}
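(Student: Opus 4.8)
The plan is to mirror the packing argument used for the Ising model in Theorem~\ref{thm:str-ising}, replacing the matching-based family of Ising models with a richer family of pairwise graphical models whose log-cardinality scales like $\ab^2\dims$ rather than $\dims$. Concretely, I would exhibit a set $\cV=\{\cD(\cW^c,0)\}$ of pairwise models over $[\ab]^{\dims}$ indexed by binary strings $c$, such that (i) distinct $c$ induce distinct edge sets, so any successful structure learner must map $\cD(\cW^c,0)$ into a dedicated outcome region $S_c$, and the $S_c$ are disjoint; (ii) every present edge has magnitude exactly $\eta$, matching the minimum-edge-weight hypothesis; and (iii) the pairwise total variation distance is controlled. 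Given such a family, the two packing lemmas finish the job: Lemma~\ref{lem:coupling} yields the pure-DP bound $\eps=\Omega\Paren{\log\absv{\cV}/D}$ and Lemma~\ref{lem:coupling-zCDP} yields the zCDP bound $\rho=\Omega\Paren{\log\absv{\cV}/\ns^2}$, where $D$ is the expected Hamming distance under a maximal coupling.

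The construction is where the alphabet must be exploited. In the Ising case a matching forces the width down to $\approx\eta$ but caps the number of toggleable edges at $\dims/2$. For alphabet size $\ab$ I would instead route each candidate edge through a single entry of its $\ab\times\ab$ weight matrix, setting $W_{i,j}$ to have one nonzero entry of magnitude $\eta$ at a prescribed position $(a,b)$. Since the width at node $i$ and symbol $a$ only sums $\max_b\absv{W_{i,j}(a,b)}$ over neighbors $j$, edges routed through different rows load different width coordinates; this lets each node carry $\Theta(\ab^2)$ candidate edges while keeping $\lambda(\cW,\Theta)=O(\ab\eta)$. Toggling these $\Theta(\ab^2\dims)$ candidate edges independently (feasible whenever $\ab=O(\sqrt{\dims})$, so that $\binom{\dims}{2}\ge\Omega(\ab^2\dims)$) produces a family with $\log\absv{\cV}=\Omega(\ab^2\dims)$ and pairwise-distinct structures.

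For the distance estimates I would reuse the per-edge computation from Theorem~\ref{thm:str-ising}: toggling one weak single-entry edge changes the joint law of its two endpoints by an $O(\eta)$ amount in log-probability, so the per-edge chi-squared, hence per-edge KL, is $O(\eta^2)$. Summing over the at most $\Theta(\ab^2\dims)$ coordinates in which two members differ and applying Pinsker gives $\dtv\Paren{\cD(\cW^{c_1},0),\cD(\cW^{c_2},0)}\le\min\Paren{c\,\ab\eta\sqrt{\dims},1}$. A maximal coupling then yields $D\le\min\Paren{c\,\ns\ab\eta\sqrt{\dims},\ns}$. Substituting $\log\absv{\cV}=\Omega(\ab^2\dims)$ and this $D$ into Lemma~\ref{lem:coupling} gives $\ns=\Omega\Paren{\sqrt{\dims}/(\eta\eps)+\ab^2\dims/\eps}$ (the first term from the unsaturated regime $D=c\,\ns\ab\eta\sqrt{\dims}$, where the argument in fact yields at least this much, and the second from the saturated regime $D=\ns$), while Lemma~\ref{lem:coupling-zCDP} gives $\ns=\Omega\Paren{\sqrt{\ab^2\dims/\rho}}=\Omega\Paren{\ab\sqrt{\dims/\rho}}$, as claimed.

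The main obstacle is the construction itself, namely reconciling three competing demands simultaneously: packing $\Theta(\ab^2\dims)$ independently recoverable edges (which needs large degree), keeping the model a valid pairwise graphical model of controlled width (which caps how many edges may load the same symbol at a node), and keeping the edges close to independent so that both the clean per-edge KL bound and the distinct-structure property survive. The symbol-routing idea is exactly what makes all three compatible, and the delicate step is verifying that the width is genuinely $O(\ab\eta)$ while the per-edge divergences still add up to only $O(\ab^2\dims\,\eta^2)$; everything downstream is a mechanical substitution into the two packing lemmas, just as in the Ising proof.
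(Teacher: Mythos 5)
Your overall skeleton -- a packing family indexed by binary strings, Lemma~\ref{lem:coupling} for pure DP via a maximal coupling, Lemma~\ref{lem:coupling-zCDP} for zCDP -- is exactly the paper's, but your construction is genuinely different, and the difference is where the trouble lies. The paper keeps the perfect matching from Theorem~\ref{thm:str-ising} and places on each matched pair an \emph{arbitrary symmetric matrix} with entries in $\{0,\eta\}$; this already gives $\log M = \Theta(\ab^2\dims)$ while the distribution remains a product over the $\dims/2$ pairs, with width still at most $\eta$ and exact per-pair divergence computations. The paper then gets the $\ab^2\dims/\eps$ term using only the trivial coupling bound $D \le \ns$, \emph{inherits} the $\sqrt{\dims}/(\eta\eps)$ term from the Ising lower bound, and gets the zCDP term from Lemma~\ref{lem:coupling-zCDP}, which needs no distance estimate at all. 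Your construction instead spreads $\Theta(\ab^2\dims)$ single-entry edges over distinct vertex pairs so that distinct codewords yield distinct edge sets. To your credit, that property is what the disjointness hypothesis of the packing lemmas literally demands under Definition~\ref{def:learn-struct}: in the paper's family, two different nonzero matrices on the same pair give the \emph{same} graph, so the paper is implicitly asking the learner to recover the support of each $W_{i,j}$ rather than just the edge set. But your construction pays for this: it requires $\ab = O(\sqrt{\dims})$, its width is $\Theta(\ab\eta)$ rather than $\eta$ (which weakens the bound against the width-parameterized upper bounds), and, crucially, it destroys the product structure of the hard instances.

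The genuine gap is your distance estimate. Because each node has degree $\Theta(\ab^2)$, your models are not products over independent pairs, and KL divergence does not tensorize over edges: writing $h_P,h_Q$ for the two factorization polynomials, $\dkl(P\|Q) = \log \E_P\left[e^{h_Q-h_P}\right] - \E_P\left[h_Q-h_P\right]$, and the log-partition functions no longer cancel pair by pair. The generic (Hoeffding-lemma) bound on this quantity is \emph{quadratic} in the number $\absv{\Delta}$ of differing edges, i.e.\ $O(\eta^2\absv{\Delta}^2)$, not your claimed $O(\eta^2\absv{\Delta})$; with $\absv{\Delta} = \Theta(\ab^2\dims)$ this gives only $\dtv = O(\eta\ab^2\dims)$, and the ``unsaturated regime'' of Lemma~\ref{lem:coupling} then yields $\ns = \Omega\Paren{1/(\eta\eps)}$ instead of $\Omega\Paren{\sqrt{\dims}/(\eta\eps)}$. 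A linear-in-$\absv{\Delta}$ bound would need a correlation-decay (Dobrushin-type) argument that you do not supply; the fact that your version would even deliver $\Omega\Paren{\ab\sqrt{\dims}/(\eta\eps)}$, strictly stronger than the theorem, is itself a warning sign. The repair is to decouple the three terms as the paper does: the $\ab^2\dims/\eps$ term needs only $D \le \ns$, the zCDP term needs no distance bound whatsoever, and the $\sqrt{\dims}/(\eta\eps)$ term should come from a separate matching-based (hence product) sub-family -- that is, inherited from Theorem~\ref{thm:str-ising}, where the per-pair chi-squared computation is exact.
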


\begin{proof}
Similar to before, we start with constructing a distribution set consisting of $2^{O\Paren{\ab\dims}}$ different pairwise graphical models such that any accurate structure learning algorithm must output different answers for different distributions.

Let $\cC$ be the real symmetric matrix with each value constrained to either $0$ or $\eta$, i.e., $\cC = \{W \in \{0, \eta \}^{\ab \times \ab}: W = W^T\}$. 
Without loss of generality, we assume $\dims$ is even. 
Given $c = [ c_1, c_2,\cdots, c_{\dims}]$, where $ c_1, c_2,\cdots, c_{\dims} \in C $, we construct the corresponding distribution $\cD (\cW^c, 0)$ with $\cW^c$ defined as follows: for $l \in  [\frac{\dims}{2}], W^c_{2l-1,2l} = c_l$, and for other pairs $(i,j)$, $W^c_{i,j} =  0$. Similarly, by this construction we divide $\dims$ nodes into $\frac{\dims}{2}$ different pairs, and there is no correlation between nodes belonging to different pairs.

We first prove lower bounds under $(\eps,0)$-DP.  By Lemma~\ref{lem:coupling}, $\eps = \Omega\Paren{\frac{\log \absv{\cC}}{\ns}}$, since for any two $\ns$-sample distributions, the expected coupling distance can be always upper bounded by $\ns$. We also note that $\absv{\cC} = \Paren{ 2^{\frac{\ab (\ab+1)}{2}}}^\dims$. Therefore, we have $\ns = \Omega\Paren{\frac{\ab^2\dims}{\eps}}$. At the same time, $\ns = \Omega\Paren{\frac{\sqrt{\dims}}{\eta\eps}}$ is another lower bound, inherited from the easier task of learning Ising models.

With respect to zCDP, we utilize Lemma~\ref{lem:coupling-zCDP} and obtain $\rho = \Omega\Paren{\frac{\ab^2 \dims}{\ns^2}}$. 
Therefore, we have $\ns =\Omega\Paren{  \sqrt {\frac{\ab^2\dims} {\rho}} }$.
\end{proof}
%
%
%


\section*{Acknowledgments}
The authors would like to thank Kunal Talwar for suggesting the study of this problem, and Adam Klivans, Frederic Koehler, Ankur Moitra, and Shanshan Wu for helpful and inspiring conversations.
\begin{CJK*}{UTF8}{gbsn}
GK would like to thank Chengdu Style Restaurant (古月飘香) in Berkeley for inspiration in the conception of this project.
\end{CJK*}

\bibliographystyle{alpha}
\bibliography{biblio}

\end{document}